\pdfoutput=1
\documentclass[11pt,a4paper]{article}

\usepackage[utf8]{inputenc}
\usepackage[T1]{fontenc}
\usepackage[english]{babel}
\usepackage{lmodern}
\usepackage{amsmath,amssymb,amsthm,mathtools}
\usepackage{bm}
\usepackage{graphicx}
\usepackage[margin=2.6cm]{geometry}
\usepackage{booktabs}
\usepackage{array}
\usepackage{enumitem}
\usepackage{caption}
\usepackage{subcaption}
\usepackage{xcolor}
\usepackage[colorlinks=true,linkcolor=blue!55!black,citecolor=green!45!black,%
            urlcolor=blue!60!black]{hyperref}
\usepackage{cleveref}

\graphicspath{{figures/}}

\theoremstyle{plain}
\newtheorem{teorema}{Theorem}[section]
\newtheorem{proposicion}[teorema]{Proposition}
\newtheorem{lema}[teorema]{Lemma}
\newtheorem{corolario}[teorema]{Corollary}

\theoremstyle{definition}
\newtheorem{definicion}[teorema]{Definition}

\newtheorem{hipotesis}[teorema]{Assumption}

\theoremstyle{remark}
\newtheorem{observacion}[teorema]{Remark}

\crefname{teorema}{theorem}{theorems}
\Crefname{teorema}{Theorem}{Theorems}
\crefname{proposicion}{proposition}{propositions}
\Crefname{proposicion}{Proposition}{Propositions}
\crefname{lema}{lemma}{lemmas}
\Crefname{lema}{Lemma}{Lemmas}
\crefname{corolario}{corollary}{corollaries}
\Crefname{corolario}{Corollary}{Corollaries}
\crefname{definicion}{definition}{definitions}
\Crefname{definicion}{Definition}{Definitions}
\crefname{ejemplo}{example}{examples}
\Crefname{ejemplo}{Example}{Examples}
\crefname{hipotesis}{assumption}{assumptions}
\Crefname{hipotesis}{Assumption}{Assumptions}
\crefname{observacion}{remark}{remarks}
\Crefname{observacion}{Remark}{Remarks}

\crefname{table}{table}{tables}
\Crefname{table}{Table}{Tables}
\crefname{section}{section}{sections}
\Crefname{section}{Section}{Sections}
\crefname{subsection}{subsection}{subsections}
\Crefname{subsection}{Subsection}{Subsections}
\crefname{figure}{figure}{figures}
\Crefname{figure}{Figure}{Figures}
\crefname{equation}{equation}{equations}
\Crefname{equation}{Equation}{Equations}

\newcommand{\Bt}{\mathbf{B}_t}
\newcommand{\Bx}{\mathbf{B}_x}
\newcommand{\Btx}{\mathbf{B}_{tx}}

\newcommand{\st}{s_t}
\newcommand{\sx}{s_x}

\newcommand{\LG}{\mathcal{L}_{\mathrm{G}}}
\newcommand{\transf}[1]{\LG\!\left\{#1\right\}}

\newcommand{\Deter}{\Delta}
\newcommand{\Deterc}{\Delta_c}
\newcommand{\Deterd}{\Delta_d}

\newcommand{\Zimp}{Z}
\newcommand{\Yadm}{Y}

\newcommand{\Real}{\operatorname{Re}}
\newcommand{\Imag}{\operatorname{Im}}
\newcommand{\ArgBt}{\operatorname{Arg}_{\Bt}}
\newcommand{\ArgBx}{\operatorname{Arg}_{\Bx}}
\newcommand{\RR}{\mathbb{R}}
\newcommand{\CC}{\mathbb{C}}
\newcommand{\Cl}{\mathrm{Cl}}

\newcommand{\Res}{\mathcal{R}}

\newcommand{\pt}{\partial_t}
\newcommand{\px}{\partial_x}
\newcommand{\dd}{\,\mathrm{d}}

\newcommand{\eidem}{\mathsf{e}}

\newcommand{\unit}[1]{\,\mathrm{#1}}

\title{\textbf{Multivariable Geometric Laplace Transform\\
and Fault Detection in Distributed-Converter Lines}}
\author{Francisco Manuel Arrabal-Campos$^{\,1,*}$,
  Francisco G. Montoya$^{\,1}$,
  Santiago Sánchez-Acevedo$^{\,2}$,\\
  Raymundo E. Torres-Olguin$^{\,2}$
  and Alfredo Alcayde$^{\,1}$\\[8pt]
  \small $^{1}$\,Department of Engineering, University of Almeria,
  La Cañada de San Urbano, 04120 Almeria, Spain\\
  \small \texttt{fmarrabal@ual.es}, \texttt{pagilm@ual.es},
  \texttt{aalcayde@ual.es}\\[3pt]
  \small $^{2}$\,SINTEF, Norway\\
  \small \texttt{santiago.sanchez@sintef.no},
  \texttt{raymundo.torres-olguin@sintef.no}\\[3pt]
  \small $^{*}$\,Corresponding author: \texttt{fmarrabal@ual.es}}
\date{\today}

\begin{document}

\maketitle

\begin{abstract}
Monitoring a DC line with many distributed power converters is a genuinely
spatio-temporal problem: the information about a localized fault travels along
the whole conductor and reaches a few measurement points mixed with the
dynamics of the line itself. This paper develops a two-dimensional
\emph{geometric Laplace transform} $(t,x)\mapsto(\st,\sx)$ over a commutative
subalgebra of $\Cl(4,0)$, isomorphic to Segre's bicomplex numbers, in which
two bivectors $\Bt$ and $\Bx$ act as independent imaginary units for the
temporal and the spatial phase. Because the two phases live in algebraically
distinguishable planes, a fault at position $x_f$ leaves a transformed
residual that factorizes as $F_f(\st)\,e^{-\sx x_f}$: its temporal nature
stays in the first factor and its location can be read as a geometric
argument of the second. On this representation we build a transmission-line
model of the converter line and its space--time dispersion relation, a
distributed control by admittance shaping ---with an exact treatment of
discrete converter sites: spatial sampling, aliasing, and a per-converter
droop realization that is exact on the sub-Nyquist band---, and a fault
diagnosis chain that detects, localizes and classifies injection-loss, shunt,
sensor and local-controller faults, extends to multiple simultaneous faults
with automatic order selection, and distinguishes the outage of a plant from
a cable defect. As an integral object the transform is known in bicomplex
analysis, and with a single independent variable it reduces to the complex
Laplace transform; the contribution lies in its geometric embedding and in
its operational use for fault diagnosis in distributed-converter networks.
All results are reproduced by an accompanying open implementation.
\end{abstract}

\tableofcontents
\bigskip

\section{Introduction and motivation}
\label{sec:intro}

\subsection{Applied context: DC lines with distributed converters}

DC microgrids and DC distribution buses have taken on a central role in the
integration of renewable generation, storage, and electronic loads. In these
architectures, numerous power converters are connected along a distribution
line, occupying distinct physical positions, and they interact through the
electromagnetic dynamics of the conductor itself. The ensemble therefore
behaves as a \emph{spatio-temporal} system: the relevant electrical quantities
(line voltage and current) depend simultaneously on time~$t$ and on the spatial
coordinate~$x$ along the line
\cite{dragicevic2016dc,paul2007transmission}.

In this setting, a fundamental operational objective is monitoring: detecting
the onset of a fault, \emph{locating} it (identifying the position~$x_f$ of the
affected converter or segment), and \emph{classifying} it (distinguishing, for
example, an injection loss from a shunt short circuit or a sensor fault). The
difficulty lies in the fact that the information about a localized fault
propagates along the entire line and is observed, intermingled, at a limited
number of measurement points.

Throughout this work we adopt a set of modeling assumptions that precisely
delimit the scope of the analysis.

\begin{hipotesis}[Incremental linearization]
\label{hyp:linealizacion}
Around an admissible steady-state operating point, the voltage and current
deviations from that point are small enough for the dynamics of the line and of
the converters to admit an incremental linear description. All quantities that
follow are understood as deviations from the steady-state regime.
\end{hipotesis}

\begin{hipotesis}[Spatial continuity]
\label{hyp:continuidad}
The distribution line is modeled as a spatially continuous medium in the
coordinate~$x$, described by distributed parameters (resistance, inductance,
conductance, and capacitance per unit length). The localized distributions
associated with the converters are interpreted in the sense of densities over
that continuum.
\end{hipotesis}

\begin{hipotesis}[Converters as distributed injection]
\label{hyp:inyeccion}
Each converter located at a position~$x_f$ is approximated, in the incremental
regime, by its contribution to a distributed injection $u(t,x)$ that acts on
the line. A converter concentrated at~$x_f$ corresponds to the limiting case in
which $u(t,x)$ is concentrated around~$x_f$, whereas a density of nearby
converters is represented by a smooth injection in~$x$. The exact discrete
case ---$P$ point injections at known sites--- and its continuum limit are
treated in \cref{subsec:inyeccion-discreta}.
\end{hipotesis}

\Cref{hyp:linealizacion,hyp:continuidad,hyp:inyeccion} set the regime of
validity of the model: small signal, continuous medium, and distributed
actuation. We do not aim to describe large-signal transients or the internal
switching of the converters, but rather the incremental line dynamics on which
the monitoring is built.

\subsection{Mathematical motivation: why a single complex variable does not suffice}

The classical Laplace transform associates with a time signal $f(t)$ the
function
\begin{equation}
\label{eq:laplace-clasica}
  F(s) = \int_0^\infty f(t)\, e^{-s t}\dd t,
  \qquad s = \sigma + j\,\omega,\quad j^2 = -1 ,
\end{equation}
where the imaginary unit~$j$ encodes the phase of the oscillatory components.
The choice of~$j$, however, is not essential: it can be replaced by any
element~$B$ of an algebra satisfying $B^2=-1$. If $B$ is a \emph{bivector} of a geometric algebra, the resulting
calculus is \emph{algebraically equivalent} to that of
\cref{eq:laplace-clasica}, since the entire structure rests on the single
relation $B^2=-1$ \cite{hestenes1984clifford}. With a single independent
variable, therefore, replacing~$j$ by a bivector provides a \emph{geometric
interpretation} of the phase, but no new modeling capability.

The situation changes qualitatively when there are \emph{two} independent
variables, time~$t$ and space~$x$. In a spatio-temporal system a temporal phase
(associated with the frequency~$\omega$) and a spatial phase (associated with
the wavenumber~$k$) coexist. To be precise about the scope: a
complex Laplace transform with \emph{two} independent variables
$(s_t,s_x)\in\CC^2$ already distinguishes the two phases, since each variable
retains its own coordinate. What we pursue is not that distinction,
but rather \emph{representing both phases in algebraically distinguishable
planes within a single structure}, which makes it easier to read arguments,
mixed products, and spatial signatures on a single object. To this end we
introduce \emph{two} independent geometric imaginary units, the bivectors $\Bt$
and $\Bx$, with
\begin{equation}
\label{eq:relaciones-base}
  \Bt^2 = -1, \qquad \Bx^2 = -1, \qquad
  \Bt\Bx = \Bx\Bt = \Btx, \qquad \Btx^2 = +1 ,
\end{equation}
so that the algebra generated is commutative and isomorphic to the algebra of
Segre's bicomplex numbers \cite{luna2015bicomplex}. A general element is written
$z = a + b\,\Bt + c\,\Bx + d\,\Btx$ with $a,b,c,d\in\RR$. Over this structure we
define the two-dimensional geometric Laplace transform
\begin{equation}
\label{eq:transformada-intro}
  \transf{f} = F(\st,\sx)
  = \int_0^\infty\!\!\int_0^\infty f(t,x)\,
      e^{-\st t}\, e^{-\sx x}\dd x\dd t,
\end{equation}
with geometric variables
\begin{equation}
\label{eq:variables-intro}
  \st = \rho_t + \Bt\,\omega, \qquad
  \sx = \rho_x + \Bx\,k .
\end{equation}
The factor $e^{-\st t}$ carries the temporal attenuation and phase in the plane
generated by~$\Bt$, while $e^{-\sx x}$ carries the spatial attenuation and phase
in the plane generated by~$\Bx$. Since $\Bt$ and $\Bx$ are \emph{distinct}
imaginary units, the two phases are not conflated: the temporal information is
labeled by~$\Bt$ and the spatial information by~$\Bx$.

It is this separation that gives the proposal its applied meaning. The position
of a fault along the line manifests itself as a well-defined spatial phase
shift, which can be read as an argument in the~$\Bx$ plane through the operator
$\ArgBx$, in a manner analogous to how the temporal phase is read with
$\ArgBt$. The spatial location of a fault thus becomes the reading of a
geometric argument, separable from the temporal dynamics of the fault itself.

\subsection{Related work and delimitation of the novelty}
\label{subsec:trabajo-relacionado}

Three strands of literature intersect this work. We review them briefly, to
acknowledge what is already known and to make clear where our contribution
lies.

\paragraph{Bicomplex and double Laplace transforms.}
The integral object \cref{eq:transformada-intro} is not, in itself, new. The
Laplace transform with a bicomplex variable is established in the bicomplex
analysis literature: \cite{kumar2011bicomplex} studies its existence and its
region of convergence via the idempotent projections, and
\cite{agarwal2014convolution} develops its convolution theorem and
applications; the foundations of the underlying algebra are collected in
\cite{price1991bicomplex,luna2015bicomplex}. The \emph{double} Laplace transform
in two complex variables, for its part, is classical and possesses a complete
body of operational properties \cite{debnath2016double}. Consequently, in this
work \emph{no novelty is claimed for the transform as a mathematical object}:
the properties in \cref{sec:transformada} should be understood as a
self-contained reworking, in the language of geometric algebra, of results that
are for the most part known within those frameworks.

\paragraph{Hypercomplex transforms in signal processing.}
In signal and image processing there is a well-established tradition of
transforms over hypercomplex algebras: the quaternion Fourier transform,
introduced precisely to analyze two-dimensional linear time-invariant systems
described by partial differential equations \cite{ell1993quaternion} and later
developed for color imaging \cite{ell2007hypercomplex}; the hypercomplex
analytic signal \cite{bulow2001hypercomplex}; and the Clifford Fourier
transforms \cite{hitzer2013quaternion}. These frameworks are, in general,
\emph{noncommutative} (quaternions, full Clifford algebras) and their dominant
motivation is image analysis. On the Laplace side, a quaternion transform of
two real variables appears already in Ell's thesis
\cite{ell1992hypercomplex} and, very recently, a \emph{geometric Laplace
transform} has been introduced for multivector-valued functions of a
\emph{single} real variable over geometric algebras $\Cl(q,r)$ with
$q+r\le5$ and a multivector Laplace variable \cite{velasco2026geometric};
there the noncommutativity forces left and right transforms, one-sided
operational rules, and an exponential transform that departs from the
classical one. Despite the shared name, that object and the one developed
here differ in both directions: ours is genuinely two-variable in $(t,x)$,
and its codomain is a commutative subalgebra chosen precisely so that the
classical operational calculus survives. The present work thus uses the
\emph{commutative} (bicomplex) variant embedded as a subalgebra of $\Cl(4,0)$,
in which the $\Bt$ and $\Bx$ planes are algebraically distinguishable and the
Laplace-type operational properties are preserved without the ordering
complications of noncommutative products, and directs it at a concrete
spatio-temporal physical system $(t,x)$.

\paragraph{Fault location in DC lines and networks.}
Protection and fault location in DC lines and microgrids rely on
well-established techniques, most of them conceived for the large-signal
transient of a short circuit: traveling-wave methods
\cite{nanayakkara2012traveling}, fast differential protection
\cite{fletcher2014highspeed}, location through test power injection in ring
buses \cite{park2013dcring}, and active impedance estimation
\cite{christopher2013fault}; see also the surveys of DC microgrid control
\cite{dragicevic2016dc,meng2017dynamics}. These techniques typically require
pronounced transients or dedicated injection infrastructure. The niche
addressed by this work is different and complementary: the \emph{incremental
small-signal monitoring} (\cref{hyp:linealizacion}) of a network of generators
connected through converters distributed along the line, with detection,
location, and classification integrated within a single model-based framework
\cite{ding2008model,isermann2006fault}, oriented toward incipient faults
(injection loss, admittance drifts, sensor faults, or local-control faults)
that do not go so far as to trip the large-signal protections.

\paragraph{Delimitation of the novelty.}
In light of the above, the contribution of this work is not the bicomplex
transform, but rather its \emph{operational exploitation for fault diagnosis in
a distributed network of generators with converters}: (i)~the geometric
embedding of the transform in $\Cl(4,0)$, which endows the two phases with
distinguishable bivector planes and allows the spatial argument operator
$\ArgBx$ to be defined; (ii)~the representation of a localized fault as a
spatial singularity of the distributed residual, factorizable as
$F_f(\st)\,e^{-\sx x_f}$, whose position is read as a geometric argument on the
same transformed object that encodes the temporal dynamics; and (iii)~the
resulting integrated framework of detection, location, and classification, with
reproducible numerical validation on a model of a DC line with distributed
converters. To the best of our knowledge, this combination had not been
formulated or exploited for fault diagnosis in networks of distributed
converters.

\subsection{Contributions}

In summary, and with the boundaries drawn in
\cref{subsec:trabajo-relacionado}, this work contributes the following.

\begin{enumerate}[leftmargin=1.6em,itemsep=2pt]
  \item An integrated framework for \emph{fault detection, location, and
        classification} in a network of generators with distributed converters
        on a DC line, in which localized faults are represented as
        \emph{spatial} singularities of the distributed residual, factorizable
        as $F_f(\st)\,e^{-\sx x_f}$ in the transformed domain, and the position
        of the fault is read as a geometric argument in the $\Bx$ plane
        (operator $\ArgBx$) or through modal weights. This is the main
        contribution of the work.
  \item A geometric formulation of the two-dimensional bicomplex Laplace
        transform, \cref{eq:transformada-intro}, over the commutative subalgebra
        of $\Cl(4,0)$ generated by $\Bt$ and $\Bx$, with a self-contained
        exposition of its operational properties and its idempotent
        decomposition. The transform as an integral object is known
        \cite{kumar2011bicomplex,agarwal2014convolution,debnath2016double}; the
        contribution is its geometric embedding and the operational reading of
        the spatial phase that this enables.
  \item A spatio-temporal model of a DC line with distributed converters and its
        dispersion relation $\Deter(\st,\sx)=0$ in the geometric domain, which
        separates the temporal and spatial dynamics, together with the exact
        spatial-sampling theory of discrete converter sites (orthogonality,
        aliasing, controllability).
  \item A distributed control design by admittance shaping formulated in the
        geometric variables $(\st,\sx)$.
  \item A reproducible implementation and a numerical validation of the
        preceding methods.
\end{enumerate}

\subsection{Document structure}

The remainder of the document is organized as follows. \Cref{sec:algebra}
introduces the commutative geometric algebra generated by $\Bt$ and $\Bx$ and
its isomorphism with the bicomplex numbers. \Cref{sec:transformada} defines the
geometric Laplace transform \cref{eq:transformada-intro} and establishes its
properties. \Cref{sec:linea} develops the model of a DC line with distributed
converters and its spatio-temporal dispersion relation. \Cref{sec:control}
addresses the distributed control design by admittance shaping. \Cref{sec:fallos}
formulates fault detection, location, and classification from the distributed
residual. \Cref{sec:resultados} presents the numerical validation and, finally,
\cref{sec:conclusiones} gathers the conclusions and the lines of future work,
among them the extension to the case of noncommutative bivectors.

\begin{observacion}[Scope and limitations]
\label{obs:alcance}
The geometric separation of phases facilitates the reading of the position of a
fault, but it does not eliminate the intrinsic limitations of any spatial
monitoring problem. The effective capability for detection, location, and
classification depends on the \emph{spatial observability} of the system, on the
number and placement of the sensors, on the measurement noise level, on the
boundary conditions imposed on the line, and on the quality of the incremental
model adopted in
\cref{hyp:linealizacion,hyp:continuidad,hyp:inyeccion}. Consequently, the
results that follow should not be understood as a universal guarantee of
diagnosis, but rather as a framework whose performance remains conditioned by
those factors, in line with the usual practice in model-based diagnosis
\cite{ding2008model}.
\end{observacion}

\section{Minimal geometric algebra}
\label{sec:algebra}

The development of the space--time transform in the sections that follow
rests on a four-real-dimensional commutative subalgebra, generated by two
commuting bivectors within the even part of a Clifford algebra (it is a
\emph{proper} subalgebra of the even subalgebra, which has higher
dimension). This section defines it precisely, establishes its calculation
rules (product, inverse, and exponential), and fixes the notation used
throughout the document. The exposition follows the style of geometric
calculus \cite{hestenes1984clifford,doran2003geometric}, but is
deliberately restricted to what is strictly necessary: two commuting
bivectors and their product. As will be made explicit, the resulting
structure is isomorphic to the algebra of Segre's bicomplex numbers
\cite{price1991bicomplex,luna2015bicomplex}, which allows the classical
results on diagonalization and bicomplex analysis to be carried over to our
geometric framework.

\subsection{Definition of the subalgebra}

\begin{definicion}[Commutative bivector subalgebra]
\label{def:subalgebra}
Let $\Cl(4,0)$ be the geometric algebra of the Euclidean space $\RR^4$ with
orthonormal basis $\{e_1,e_2,e_3,e_4\}$ and geometric product determined by
$e_ie_j+e_je_i=2\delta_{ij}$. Denote the bivectors
\begin{equation}
\Bt \coloneqq \sigma_{12}=e_1e_2,\qquad
\Bx \coloneqq \sigma_{34}=e_3e_4,\qquad
\Btx \coloneqq \Bt\,\Bx = \sigma_{1234}=e_1e_2e_3e_4 .
\label{eq:gen-def}
\end{equation}
We define $\mathcal{A}$ as the real subalgebra of $\Cl(4,0)$ generated by
$\{1,\Bt,\Bx,\Btx\}$. Its elements are written uniquely as
\begin{equation}
z = a + b\,\Bt + c\,\Bx + d\,\Btx,
\qquad (a,b,c,d)\in\RR^4 ,
\label{eq:elemento}
\end{equation}
and we shall identify $z$ with the real coordinate vector $[a,b,c,d]$. The
generators satisfy
\begin{equation}
\Bt^2=\Bx^2=-1,\qquad
\Bt\Bx=\Bx\Bt=\Btx,\qquad
\Btx^2=+1 .
\label{eq:reglas}
\end{equation}
\end{definicion}

The relations \eqref{eq:reglas} are not postulated but follow directly from
\eqref{eq:gen-def} and from the anticommutativity of orthogonal vectors.
Indeed, for distinct orthonormal vectors $e_ie_j=-e_je_i$ and $e_i^2=1$, so
that $\Bt^2=e_1e_2e_1e_2=-e_1e_1e_2e_2=-1$, and identically $\Bx^2=-1$. Since
$\{e_1,e_2\}$ and $\{e_3,e_4\}$ are disjoint blocks, each factor of $\Bt$
commutes with each factor of $\Bx$ after an even number of transpositions,
hence $\Bt\Bx=\Bx\Bt$. Finally,
$\Btx^2=(\Bt\Bx)(\Bt\Bx)=\Bt^2\Bx^2=(-1)(-1)=+1$. The choice of disjoint
orthogonal blocks $\{1,2\}$ and $\{3,4\}$ is precisely what guarantees
commutativity and, with it, all of the symbolic calculus that follows.

\begin{observacion}
\label{obs:cierre}
The set $\{1,\Bt,\Bx,\Btx\}$ is closed under the geometric product up to
sign (see \cref{prop:producto}); therefore $\mathcal{A}$ is indeed a real
$4$-dimensional subalgebra, associative and unital. Moreover, since all of
its generators commute pairwise, $\mathcal{A}$ is \emph{commutative}, in
contrast with the ambient geometric algebra $\Cl(4,0)$, which is not.
\end{observacion}

\subsection{Product and commutativity}

\begin{proposicion}[Product formula]
\label{prop:producto}
Let $z=a+b\,\Bt+c\,\Bx+d\,\Btx$ and $w=p+q\,\Bt+r\,\Bx+s\,\Btx$ be elements
of $\mathcal{A}$. Then their product $zw\in\mathcal{A}$ has components
\begin{align}
(zw)_{1}   &= ap - bq - cr + ds, \label{eq:prod-1}\\
(zw)_{\Bt} &= aq + bp - cs - dr, \label{eq:prod-Bt}\\
(zw)_{\Bx} &= ar - bs + cp - dq, \label{eq:prod-Bx}\\
(zw)_{\Btx}&= as + br + cq + dp. \label{eq:prod-Btx}
\end{align}
In particular $zw=wz$: the algebra $\mathcal{A}$ is commutative.
\end{proposicion}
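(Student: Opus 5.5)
The plan is to expand the product $zw$ by bilinearity of the geometric product over $\RR$. This reduces everything to the sixteen products of basis elements $\{1,\Bt,\Bx,\Btx\}$, which are then evaluated with the rules \eqref{eq:reglas} already derived from \eqref{eq:gen-def}.

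\textbf{Step 1: the multiplication table.} I will first tabulate the products of basis elements. Besides the squares $\Bt^2=\Bx^2=-1$, $\Btx^2=+1$ and the relation $\Bt\Bx=\Bx\Bt=\Btx$, the remaining entries follow from associativity together with that commutation relation:
\[
\Bt\Btx=\Bt^2\Bx=-\Bx,\qquad \Bx\Btx=\Bt\Bx^2=-\Bt .
\]
Likewise $\Btx\Bt=\Bx\Bt^2=-\Bx$ and $\Btx\Bx=\Bt\Bx^2=-\Bt$, where the second equalities in each case use $\Bt\Bx=\Bx\Bt$. The table shows that every product of basis elements is $\pm$ a basis element, which is the closure claim of \cref{obs:cierre}. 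It also shows that the table is symmetric.

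\textbf{Step 2: collecting components.} Next I write
\[
zw=\sum_{\alpha,\beta} z_\alpha w_\beta\, E_\alpha E_\beta ,
\]
with $E\in\{1,\Bt,\Bx,\Btx\}$, and gather the coefficients of each basis element.
\begin{itemize}
\item The scalar part receives $ap$ from $1\cdot1$, $-bq$ from $\Bt^2$, $-cr$ from $\Bx^2$, and $+ds$ from $\Btx^2$. This gives \eqref{eq:prod-1}.
\item The $\Bt$ part receives $aq$, $bp$, $-cs$ (from $\Bx\Btx=-\Bt$), and $-dr$ (from $\Btx\Bx=-\Bt$). This gives \eqref{eq:prod-Bt}.
\item The $\Bx$ part receives $ar$, $cp$, $-bs$ (from $\Bt\Btx$), and $-dq$. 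This gives \eqref{eq:prod-Bx}.
\item The $\Btx$ part receives $as$, $dp$, $br$ (from $\Bt\Bx$), and $cq$ (from $\Bx\Bt$). This gives \eqref{eq:prod-Btx}.
\end{itemize}

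\textbf{Step 3: commutativity.} Commutativity then follows in either of two ways. One is to note that the multiplication table is symmetric, so $E_\alpha E_\beta=E_\beta E_\alpha$ for all $\alpha,\beta$, and bilinearity then gives $zw=wz$. The other, as a direct check, is to observe that each of \eqref{eq:prod-1}--\eqref{eq:prod-Btx} is invariant under the swap $(a,b,c,d)\leftrightarrow(p,q,r,s)$. For example, in \eqref{eq:prod-Bt} the terms $aq+bp$ are exchanged with each other, and so are the terms $-cs-dr$.

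The only point needing care, and so the main obstacle, is getting the signs of the mixed products involving $\Btx$ right. I will rely on associativity and on $\Bt\Bx=\Bx\Bt$ from the discussion after \cref{def:subalgebra}, rather than reordering the vectors $e_i$ by hand.
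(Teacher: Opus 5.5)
Your proposal is correct and follows essentially the same route as the paper. Both derive the multiplication table of $\{1,\Bt,\Bx,\Btx\}$ from \eqref{eq:reglas} and associativity, expand $zw$ bilinearly into sixteen terms, group them by basis element, and obtain $zw=wz$ from the symmetry of the table or from invariance of the component formulas under $(a,b,c,d)\leftrightarrow(p,q,r,s)$. Your signs all check out, and your claim that each of \eqref{eq:prod-1}--\eqref{eq:prod-Btx} is individually invariant under the swap is stated more precisely than the paper's remark that the pair \eqref{eq:prod-Bt}--\eqref{eq:prod-Bx} is mapped into itself.
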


\begin{proof}
By \cref{obs:cierre} it suffices to know the multiplication table of the
generators. From \eqref{eq:reglas} one obtains
\[
\begin{array}{c|cccc}
\cdot & 1 & \Bt & \Bx & \Btx\\\hline
1 & 1 & \Bt & \Bx & \Btx\\
\Bt & \Bt & -1 & \Btx & -\Bx\\
\Bx & \Bx & \Btx & -1 & -\Bt\\
\Btx & \Btx & -\Bx & -\Bt & 1
\end{array}
\]
where, for example, $\Bt\,\Btx=\Bt(\Bt\Bx)=\Bt^2\Bx=-\Bx$ and
$\Btx\,\Btx=+1$. The table is symmetric, which already shows that the
generators commute. Expanding the product $zw$ by bilinearity,
\[
zw=\sum_{u,v\in\{1,\Bt,\Bx,\Btx\}} (\text{coef. of }u\text{ in }z)\,
(\text{coef. of }v\text{ in }w)\,(uv),
\]
and grouping the sixteen terms according to the resulting generator by means
of the table above yields exactly the identities
\eqref{eq:prod-1}--\eqref{eq:prod-Btx}. The symmetry of each expression under
the interchange $(a,b,c,d)\leftrightarrow(p,q,r,s)$ is manifest:
\eqref{eq:prod-1} and \eqref{eq:prod-Btx} are symmetric term by term, and the
pair \eqref{eq:prod-Bt}--\eqref{eq:prod-Bx} is mapped into itself. Hence
$zw=wz$.
\end{proof}

\begin{observacion}[Bicomplex isomorphism]
\label{obs:bicomplejo}
The map sending $\Bt\mapsto i$, $\Bx\mapsto j$, $\Btx\mapsto ij$, with
$i^2=j^2=-1$, $ij=ji$, is an $\RR$-algebra isomorphism between $\mathcal{A}$
and the algebra of Segre's \emph{bicomplex numbers}
\cite{price1991bicomplex,luna2015bicomplex}. This correspondence is useful
because it allows the bicomplex theory of diagonalization and of holomorphic
functions to be imported into our geometric framework, without giving up the
geometric reading of $\Bt$, $\Bx$, and $\Btx$ as bivectors of orthogonal
planes.
\end{observacion}

\subsection{Idempotent decomposition and diagonalization}

Commutativity and the presence of an element $\Btx$ whose square is $+1$
make it possible to diagonalize $\mathcal{A}$ by means of two orthogonal
idempotents.

\begin{proposicion}[Idempotent diagonalization]
\label{prop:idempotente}
We define
\begin{equation}
\eidem_1=\tfrac12\bigl(1+\Btx\bigr),\qquad
\eidem_2=\tfrac12\bigl(1-\Btx\bigr).
\label{eq:idempotentes}
\end{equation}
Then
\begin{equation}
\eidem_1^2=\eidem_1,\qquad
\eidem_2^2=\eidem_2,\qquad
\eidem_1\eidem_2=\eidem_2\eidem_1=0,\qquad
\eidem_1+\eidem_2=1 .
\label{eq:idem-props}
\end{equation}
Moreover, every $z=a+b\,\Bt+c\,\Bx+d\,\Btx\in\mathcal{A}$ decomposes uniquely
as
\begin{equation}
z=\zeta_1\,\eidem_1+\zeta_2\,\eidem_2,
\qquad
\zeta_1=(a+d)+(b-c)\,\Bt,\quad
\zeta_2=(a-d)+(b+c)\,\Bt,
\label{eq:desc-zeta}
\end{equation}
where $\zeta_1,\zeta_2$ are ordinary complex numbers in the field
$\CC_{\Bt}\coloneqq\RR\oplus\RR\,\Bt\cong\CC$ generated by the imaginary unit
$\Bt$. In this representation the product, the inverse, and the exponential
act componentwise:
\begin{equation}
zw=(\zeta_1\eta_1)\eidem_1+(\zeta_2\eta_2)\eidem_2,
\qquad
w=\eta_1\eidem_1+\eta_2\eidem_2 .
\label{eq:prod-diag}
\end{equation}
\end{proposicion}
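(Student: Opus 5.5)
The plan is a direct algebraic verification. It relies only on the relations \eqref{eq:reglas} and the commutativity established in \cref{prop:producto}.

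\textbf{Idempotent identities.} Since $\Btx^2=+1$, we get $\eidem_1^2=\tfrac14(1+2\Btx+\Btx^2)=\tfrac12(1+\Btx)=\eidem_1$, and likewise $\eidem_2^2=\eidem_2$. Next, $\eidem_1\eidem_2=\tfrac14(1-\Btx^2)=0$, and the order of the factors does not matter because $\mathcal{A}$ is commutative. Finally, $\eidem_1+\eidem_2=1$ is immediate from the definitions.

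\textbf{Existence of the decomposition.} The key auxiliary relations are
\[
\Bx\,\eidem_1=-\Bt\,\eidem_1,\qquad \Bx\,\eidem_2=\Bt\,\eidem_2,\qquad \Btx\,\eidem_1=\eidem_1,\qquad \Btx\,\eidem_2=-\eidem_2 .
\]
They follow from the multiplication table: $\Bx\Btx=-\Bt$, so $\Bx(1\pm\Btx)=\Bx\mp\Bt=\mp\Bt(1\pm\Btx)$, since $\Bt\Btx=-\Bx$. Write $z=z(\eidem_1+\eidem_2)$ and use these relations to collapse each term. This gives $z\eidem_1=\bigl((a+d)+(b-c)\Bt\bigr)\eidem_1$, and similarly for $\eidem_2$, which are exactly the $\zeta_1,\zeta_2$ of \eqref{eq:desc-zeta}.

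\textbf{Uniqueness.} Expand $\zeta_1\eidem_1+\zeta_2\eidem_2$ in the basis $\{1,\Bt,\Bx,\Btx\}$. For $\zeta_k=\alpha_k+\beta_k\Bt$, the coordinates are
\[
\Bigl[\tfrac{\alpha_1+\alpha_2}{2},\ \tfrac{\beta_1+\beta_2}{2},\ \tfrac{\beta_2-\beta_1}{2},\ \tfrac{\alpha_1-\alpha_2}{2}\Bigr].
\]
This real-linear map $\RR^4\to\RR^4$ is invertible, and its inverse is precisely the formula for $\zeta_1,\zeta_2$. So the representation is unique. Alternatively, multiplying a relation $\zeta_1\eidem_1+\zeta_2\eidem_2=0$ by $\eidem_1$ gives $\zeta_1\eidem_1=0$. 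The coordinates of $\zeta_1\eidem_1$ are $\tfrac12[\alpha_1,\beta_1,-\beta_1,\alpha_1]$, so $\zeta_1=0$.

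\textbf{Componentwise product.} For $w=\eta_1\eidem_1+\eta_2\eidem_2$, expand $zw$ using commutativity and \eqref{eq:idem-props}. The cross terms vanish because $\eidem_1\eidem_2=0$, and the squares reduce because $\eidem_k^2=\eidem_k$. This yields \eqref{eq:prod-diag}.

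\textbf{Inverse and exponential.} These claims follow from the product rule.
\begin{itemize}[leftmargin=1.6em]
\item For the inverse, $z$ is invertible exactly when $\zeta_1,\zeta_2\neq0$, and then $z^{-1}=\zeta_1^{-1}\eidem_1+\zeta_2^{-1}\eidem_2$. This is checked by multiplying and using $\eidem_1+\eidem_2=1$.
\item For the exponential, powers satisfy $z^n=\zeta_1^n\eidem_1+\zeta_2^n\eidem_2$ for $n\ge1$, by induction from the product rule. The $n=0$ term is $1=\eidem_1+\eidem_2$. Summing the absolutely convergent series in the finite-dimensional algebra then gives $e^z=e^{\zeta_1}\eidem_1+e^{\zeta_2}\eidem_2$.
\end{itemize}

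The only step needing real care is the bookkeeping of signs in the relations $\Bx\eidem_{1,2}=\mp\Bt\eidem_{1,2}$, since they fix which combinations $b\mp c$ appear in $\zeta_1$ and $\zeta_2$. Everything else is routine.
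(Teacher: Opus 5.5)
Your proposal is correct and follows essentially the paper's proof. The idempotent identities, the expansion of $\zeta_1\eidem_1+\zeta_2\eidem_2$ in the basis $\{1,\Bt,\Bx,\Btx\}$ with its invertible coefficient map, and the componentwise product are all the same. The only variation is that you also derive existence from the absorption rules $z\eidem_k=\zeta_k\eidem_k$; this explains where the formula comes from, but it is logically redundant, since that invertible map already gives existence and uniqueness at once, as in the paper.

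Your inline treatment of the inverse and exponential matches what the paper proves separately in \cref{prop:inverso,prop:exp}. One small point: the ``only if'' half of your invertibility claim does not follow from the multiplication check alone. It comes from applying uniqueness of the decomposition to $zw=1=\eidem_1+\eidem_2$.
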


\begin{proof}
For \eqref{eq:idem-props}, using $\Btx^2=+1$ from \eqref{eq:reglas},
\[
\eidem_1^2=\tfrac14(1+\Btx)^2=\tfrac14(1+2\Btx+\Btx^2)
=\tfrac14(2+2\Btx)=\tfrac12(1+\Btx)=\eidem_1,
\]
and analogously $\eidem_2^2=\eidem_2$. The cross product is
\[
\eidem_1\eidem_2=\tfrac14(1+\Btx)(1-\Btx)=\tfrac14(1-\Btx^2)
=\tfrac14(1-1)=0,
\]
with $\eidem_2\eidem_1=\eidem_1\eidem_2$ by commutativity
(\cref{prop:producto}); and $\eidem_1+\eidem_2=1$ is immediate.

For the decomposition, we write $z=\zeta_1\eidem_1+\zeta_2\eidem_2$ with
$\zeta_1,\zeta_2\in\CC_{\Bt}$ undetermined, $\zeta_k=\alpha_k+\beta_k\Bt$.
Substituting \eqref{eq:idempotentes} and grouping by the basis
$\{1,\Bt,\Bx,\Btx\}$ (recall $\Bt\Btx=-\Bx$, from the table in the proof of
\cref{prop:producto}):
\[
\zeta_1\eidem_1+\zeta_2\eidem_2
=\tfrac{\alpha_1+\alpha_2}{2}
+\tfrac{\beta_1+\beta_2}{2}\Bt
-\tfrac{\beta_1-\beta_2}{2}\Bx
+\tfrac{\alpha_1-\alpha_2}{2}\Btx .
\]
Equating coefficients with \eqref{eq:elemento} yields the linear system
$\tfrac12(\alpha_1+\alpha_2)=a$, $\tfrac12(\beta_1+\beta_2)=b$,
$-\tfrac12(\beta_1-\beta_2)=c$, $\tfrac12(\alpha_1-\alpha_2)=d$, whose unique
solution is $\alpha_1=a+d$, $\beta_1=b-c$, $\alpha_2=a-d$, $\beta_2=b+c$,
that is, \eqref{eq:desc-zeta}. Uniqueness follows from the fact that the
system has nonzero determinant.

Finally, rule \eqref{eq:prod-diag} is a consequence of
\eqref{eq:idem-props}: for $z=\zeta_1\eidem_1+\zeta_2\eidem_2$ and
$w=\eta_1\eidem_1+\eta_2\eidem_2$,
\[
zw=\zeta_1\eta_1\,\eidem_1^2+\zeta_2\eta_2\,\eidem_2^2
+(\zeta_1\eta_2+\zeta_2\eta_1)\,\eidem_1\eidem_2
=\zeta_1\eta_1\,\eidem_1+\zeta_2\eta_2\,\eidem_2,
\]
where we have used that $\zeta_k,\eta_k\in\CC_{\Bt}$ commute with the
idempotents (both live in the commutative subalgebra $\mathcal{A}$). This
identifies $\mathcal{A}\cong\CC_{\Bt}\times\CC_{\Bt}$ as a product algebra,
whence every rational or analytic operation is carried out componentwise.
\end{proof}

\subsection{Inverse and zero divisors}

\begin{proposicion}[Characterization of the inverse]
\label{prop:inverso}
Let $z=\zeta_1\eidem_1+\zeta_2\eidem_2\in\mathcal{A}$ with $\zeta_1,\zeta_2$
given by \eqref{eq:desc-zeta}. Then $z$ is invertible in $\mathcal{A}$ if and
only if $\zeta_1\neq 0$ and $\zeta_2\neq 0$, and in that case
\begin{equation}
z^{-1}=\zeta_1^{-1}\,\eidem_1+\zeta_2^{-1}\,\eidem_2 ,
\label{eq:inverso-diag}
\end{equation}
where $\zeta_k^{-1}$ is the ordinary complex inverse in $\CC_{\Bt}$.
\end{proposicion}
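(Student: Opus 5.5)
The plan is to work entirely in the idempotent representation of \cref{prop:idempotente}, where $\mathcal{A}\cong\CC_{\Bt}\times\CC_{\Bt}$ and products act componentwise by \eqref{eq:prod-diag}. The unit is written $1=1\cdot\eidem_1+1\cdot\eidem_2$, and by the uniqueness part of \cref{prop:idempotente} this is its only decomposition. Consequently, $zw=1$ for $w=\eta_1\eidem_1+\eta_2\eidem_2$ holds exactly when $\zeta_1\eta_1=1$ and $\zeta_2\eta_2=1$ in $\CC_{\Bt}$.

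\emph{Sufficiency.} If $\zeta_1\neq0$ and $\zeta_2\neq0$, these are nonzero elements of the field $\CC_{\Bt}$, so the ordinary inverses $\zeta_k^{-1}$ exist. Set $w=\zeta_1^{-1}\eidem_1+\zeta_2^{-1}\eidem_2$. Then \eqref{eq:prod-diag} gives $zw=\eidem_1+\eidem_2=1$, and commutativity (\cref{prop:producto}) gives $wz=1$ as well. This establishes \eqref{eq:inverso-diag}. Uniqueness of the inverse is the standard fact that inverses are unique in an associative unital algebra.

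\emph{Necessity.} Suppose $z$ has an inverse $w=\eta_1\eidem_1+\eta_2\eidem_2$. Comparing the components of $zw=1$ by uniqueness of the decomposition gives $\zeta_1\eta_1=1$ and $\zeta_2\eta_2=1$, so neither $\zeta_k$ can vanish. An alternative route, useful for also describing the zero divisors, is contrapositive. If, say, $\zeta_2=0$ and $z\neq0$, then $z\eidem_2=\zeta_2\eidem_2=0$ with $\eidem_2\neq0$. So $z$ is a zero divisor, and a zero divisor cannot be invertible, since otherwise $\eidem_2=z^{-1}z\eidem_2=0$.

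The only step needing care is the appeal to uniqueness of the decomposition for the element $1$ and for the product $zw$. I would justify it directly from \eqref{eq:desc-zeta}: for $1$ we have $a=1$ and $b=c=d=0$, which gives $\zeta_1=\zeta_2=1$. This is also where I would remark that $\eidem_1,\eidem_2\neq0$, which the zero-divisor argument requires.

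As an optional cross-check in real coordinates, I would note that $\zeta_1\zeta_2\neq0$ is equivalent to $|\zeta_1|^2|\zeta_2|^2=\bigl((a+d)^2+(b-c)^2\bigr)\bigl((a-d)^2+(b+c)^2\bigr)\neq0$. This matches the nonvanishing of the determinant of the $4\times4$ left-multiplication matrix read off from \eqref{eq:prod-1}--\eqref{eq:prod-Btx}.
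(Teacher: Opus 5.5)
Your proposal is correct and follows the paper's proof: sufficiency by exhibiting $\zeta_1^{-1}\eidem_1+\zeta_2^{-1}\eidem_2$ and applying the componentwise product \eqref{eq:prod-diag}, and necessity by comparing the $\eidem_1$- and $\eidem_2$-components of $zw$ with those of $1$. The paper phrases necessity contrapositively (if $\zeta_1=0$, the $\eidem_1$-component of $zw$ is always zero, so $zw\neq 1$), and your explicit appeal to uniqueness of the decomposition $1=\eidem_1+\eidem_2$ makes precise a step the paper uses without stating it.
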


\begin{proof}
If $\zeta_1\neq 0$ and $\zeta_2\neq 0$, the element
$y=\zeta_1^{-1}\eidem_1+\zeta_2^{-1}\eidem_2$ is well defined and, by
\eqref{eq:prod-diag},
$zy=(\zeta_1\zeta_1^{-1})\eidem_1+(\zeta_2\zeta_2^{-1})\eidem_2
=\eidem_1+\eidem_2=1$, hence $y=z^{-1}$. Conversely, if either of the
components vanishes, say $\zeta_1=0$, then $z=\zeta_2\eidem_2$ and for any
$w=\eta_1\eidem_1+\eta_2\eidem_2$ one has $zw=\zeta_2\eta_2\,\eidem_2$, whose
component along $\eidem_1$ is always zero; therefore $zw\neq 1$ for every $w$
and $z$ is not invertible.
\end{proof}

\begin{observacion}[Zero divisors]
\label{obs:divisores}
Unlike $\CC$, the algebra $\mathcal{A}$ is \emph{not} a field: it contains
zero divisors. The idempotents themselves are examples, since
$\eidem_1\eidem_2=0$ with $\eidem_1,\eidem_2\neq 0$. Geometrically, the
noninvertible elements are exactly those lying on the two ``isotropic
cones'' $\zeta_1=0$ or $\zeta_2=0$. This circumstance conditions the later
analytic discussion (poles and residues of the transform) and forces a
distinction between genuine zeros and degenerate singularities.
\end{observacion}

In practice it is convenient to have a way of computing the inverse that
does not go through diagonalization, especially for a direct numerical
implementation in real coordinates. Left multiplication by $z$ is an
$\RR$-linear map $w\mapsto zw$ from $\RR^4$ to $\RR^4$.

\begin{proposicion}[Multiplication matrix]
\label{prop:matriz}
For $z=a+b\,\Bt+c\,\Bx+d\,\Btx$, the matrix of left multiplication in the
basis $[\,1,\Bt,\Bx,\Btx\,]$ is
\begin{equation}
M(z)=
\begin{pmatrix}
a & -b & -c & d\\
b & a & -d & -c\\
c & -d & a & -b\\
d & c & b & a
\end{pmatrix}.
\label{eq:matriz-M}
\end{equation}
Consequently, $z$ is invertible if and only if $\det M(z)\neq 0$, and the
coordinates of $z^{-1}$ are obtained by solving the linear system
$M(z)\,w=[\,1,0,0,0\,]^{\top}$ by standard matrix
algebra~\cite{harville1997matrix}. Moreover $M$ is an algebra homomorphism:
$M(zw)=M(z)M(w)$ and $M(z)^{\top}=M(\bar z)$ with
$\bar z=a-b\,\Bt-c\,\Bx+d\,\Btx$.
\end{proposicion}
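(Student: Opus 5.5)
The matrix $M(z)$ is read off directly from the product formula of \cref{prop:producto}. Left multiplication $w\mapsto zw$ is $\RR$-linear in the coordinates $[p,q,r,s]$ of $w$. The $i$-th row of $M(z)$ is the vector of coefficients of $(p,q,r,s)$ in the $i$-th component \eqref{eq:prod-1}--\eqref{eq:prod-Btx}. For instance, \eqref{eq:prod-Bt} gives $aq+bp-cs-dr$, so the second row is $(b,a,-d,-c)$. The other three rows are obtained the same way. Equivalently, the $j$-th column of $M(z)$ is $z$ times the $j$-th basis element, which is computed from the multiplication table in the proof of \cref{prop:producto}. This double reading also serves as a consistency check.

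Next comes the homomorphism property. By associativity of the geometric product, $(zw)u=z(wu)$ for every $u\in\mathcal{A}$. Hence the linear map of left multiplication by $zw$ is the composite of the maps for $w$ and then $z$. Because the matrix of a composite is the product of the matrices, $M(zw)=M(z)M(w)$. The unit satisfies $M(1)=I_4$.

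Invertibility follows from these facts.
\begin{itemize}
\item If $z$ is invertible, then $M(z)M(z^{-1})=M(1)=I$, so $\det M(z)\neq0$.
\item If $\det M(z)\neq0$, the linear system $M(z)w=[1,0,0,0]^\top$ has a unique solution $w$. That solution satisfies $zw=1$, and commutativity then gives $wz=1$, so $w=z^{-1}$.
\end{itemize}
This also justifies computing $z^{-1}$ by solving that system. As a cross-check, which I would mention but not rely on, conjugating $M(z)$ by the change of basis to $\eidem_1,\eidem_2$ should give $\det M(z)=|\zeta_1|^2|\zeta_2|^2$. That is consistent with \cref{prop:inverso}.

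The transpose identity is checked by direct comparison. Transposing \eqref{eq:matriz-M} gives the rows $(a,b,c,d)$, $(-b,a,-d,c)$, $(-c,-d,a,b)$, $(d,-c,-b,a)$. Substituting $b\to-b$, $c\to-c$ and keeping $a,d$ in \eqref{eq:matriz-M} yields exactly this matrix, so $M(z)^\top=M(\bar z)$. The only real obstacle is sign bookkeeping, in this transpose check and in the row extraction. I would verify the entries involving $\Btx$ against the table entries $\Bt\Btx=-\Bx$ and $\Bx\Btx=-\Bt$.
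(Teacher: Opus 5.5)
Your proposal is correct and follows the paper's proof. You read $M(z)$ off the product formulas of \cref{prop:producto} (by rows, with the paper's column reading via $z\cdot u_k$ as your cross-check), obtain the homomorphism from associativity, and verify $M(z)^{\top}=M(\bar z)$ entrywise; your invertibility step is more explicit than the paper's bare appeal to injectivity of the regular representation, since you solve $M(z)w=[1,0,0,0]^{\top}$ for a right inverse, use commutativity, and use $M(z)M(z^{-1})=M(1)=I$ for the converse.
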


\begin{proof}
The $k$-th column of $M(z)$ contains the coordinates of $z\cdot u_k$, where
$u_k$ ranges over the basis $\{1,\Bt,\Bx,\Btx\}$. Taking $w=u_1=1$ in
\eqref{eq:prod-1}--\eqref{eq:prod-Btx} (that is, $p=1$, $q=r=s=0$) gives the
first column $[a,b,c,d]^{\top}$; with $w=u_2=\Bt$ ($q=1$, the rest zero) the
second column $[-b,a,-d,c]^{\top}$; with $w=u_3=\Bx$ the third
$[-c,-d,a,b]^{\top}$; and with $w=u_4=\Btx$ the fourth $[d,-c,-b,a]^{\top}$.
Arranging them as columns yields exactly \eqref{eq:matriz-M}. That $M$ is a
homomorphism follows from the associativity of the product:
$M(zw)v=(zw)v=z(wv)=M(z)M(w)v$ for every $v$. The identity
$M(z)^{\top}=M(\bar z)$ is checked by direct inspection of the entries. Since
$M$ is injective (the regular representation of a unital algebra is), $\det
M(z)\neq 0$ is equivalent to the invertibility of $z$, and
$M(z^{-1})=M(z)^{-1}$.
\end{proof}

\begin{observacion}
\label{obs:det-M}
The two routes are consistent: a direct computation gives
$\det M(z)=\lvert\zeta_1\rvert^2\,\lvert\zeta_2\rvert^2$, where
$\lvert\cdot\rvert$ is the complex modulus in $\CC_{\Bt}$. Therefore
$\det M(z)=0$ is equivalent to $\zeta_1=0$ or $\zeta_2=0$, in full agreement
with \cref{prop:inverso} and \cref{obs:divisores}.
\end{observacion}

\subsection{Exponential and elementary functions}

\begin{proposicion}[Exponentials of the generators]
\label{prop:exp}
For every $\theta\in\RR$ and $\delta\in\RR$, in terms of the exponential
series $\exp(z)=\sum_{n\ge0}z^n/n!$, the identities
\begin{align}
\exp(\Bt\,\theta) &= \cos\theta + \Bt\sin\theta, \label{eq:exp-Bt}\\
\exp(\Bx\,\theta) &= \cos\theta + \Bx\sin\theta, \label{eq:exp-Bx}\\
\exp(\Btx\,\delta) &= \cosh\delta + \Btx\sinh\delta. \label{eq:exp-Btx}
\end{align}
hold. More generally, for every
$z=\zeta_1\eidem_1+\zeta_2\eidem_2\in\mathcal{A}$,
\begin{equation}
\exp(z)=e^{\zeta_1}\,\eidem_1+e^{\zeta_2}\,\eidem_2 ,
\label{eq:exp-general}
\end{equation}
where $e^{\zeta_k}$ is the ordinary complex exponential in $\CC_{\Bt}$.
\end{proposicion}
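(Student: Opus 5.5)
The plan is to handle the three generator identities first, by splitting the exponential series into even and odd powers, and then to get the general formula \eqref{eq:exp-general} from the componentwise product rule \eqref{eq:prod-diag} of \cref{prop:idempotente}.

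\textbf{Convergence.} The series $\sum_n z^n/n!$ converges absolutely in $\mathcal{A}$. To see this, equip $\mathcal{A}\cong\RR^4$ with the operator norm induced by the regular representation $M$ of \cref{prop:matriz}. Since $M$ is an injective algebra homomorphism, $\|M(z^n)\|\le\|M(z)\|^n$, so the series is dominated by $e^{\|M(z)\|}$. It follows that $\exp$ is well defined, and that $M(\exp z)=\exp M(z)$ by continuity of $M$.

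\textbf{The generator identities.} For \eqref{eq:exp-Bt}, use $\Bt^2=-1$ from \eqref{eq:reglas}. This gives $(\Bt\theta)^{2m}=(-1)^m\theta^{2m}$ and $(\Bt\theta)^{2m+1}=(-1)^m\theta^{2m+1}\Bt$. Absolute convergence lets us regroup the series into these even and odd parts, which yields the cosine and sine series. The identity \eqref{eq:exp-Bx} follows verbatim from $\Bx^2=-1$. For \eqref{eq:exp-Btx}, $\Btx^2=+1$ makes every even power equal to $\delta^{2m}$ and every odd power equal to $\delta^{2m+1}\Btx$, which produces $\cosh$ and $\sinh$.

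\textbf{The general formula.} Write $z=\zeta_1\eidem_1+\zeta_2\eidem_2$. Applying \eqref{eq:prod-diag} and induction on $n$ gives $z^n=\zeta_1^n\eidem_1+\zeta_2^n\eidem_2$ for $n\ge1$. For $n=0$ we also need $z^0=1$, and $1=\eidem_1+\eidem_2$ by \eqref{eq:idem-props}, so the formula holds there too. Summing term by term then gives
\[
\exp z=\Bigl(\sum_n\zeta_1^n/n!\Bigr)\eidem_1+\Bigl(\sum_n\zeta_2^n/n!\Bigr)\eidem_2 .
\]
The coefficient series converge in $\CC_{\Bt}\subset\mathcal{A}$ to the ordinary complex exponentials $e^{\zeta_k}$, because $\CC_{\Bt}$ is a closed subfield isomorphic to $\CC$. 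Pulling the limit through multiplication by the fixed element $\eidem_k$ is justified because multiplication is continuous (it is linear on a finite-dimensional space).

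\textbf{Main obstacle.} The only delicate point is justifying that the series can be rearranged and split, both for the even/odd regrouping and for the passage of the limit through the idempotent decomposition. I would settle both with the single absolute-convergence argument above, using the submultiplicative norm from $M$.

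\textbf{Optional cross-check.} The $\Bt$ case is consistent with the diagonal picture. Here $\zeta_1=\zeta_2=\Bt\theta$, so $\exp(\Bt\theta)=e^{\Bt\theta}(\eidem_1+\eidem_2)$, which agrees with \eqref{eq:exp-Bt}.
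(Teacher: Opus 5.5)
Your proposal is correct and follows the paper's proof: you split the series into even and odd powers using $\Bt^2=\Bx^2=-1$ and $\Btx^2=+1$ for the three generator identities, and you sum the componentwise powers $z^n=\zeta_1^n\eidem_1+\zeta_2^n\eidem_2$ term by term for \eqref{eq:exp-general}. Your convergence argument via the submultiplicative norm $\lVert M(z)\rVert$, together with the explicit $n=0$ term $1=\eidem_1+\eidem_2$, is slightly more careful than the paper's bare appeal to the Euclidean norm of \cref{def:norma}, which is not submultiplicative.
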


\begin{proof}
Since $\Bt^2=-1$ (rule \eqref{eq:reglas}), the powers repeat with period
four, exactly like the imaginary unit: $\Bt^{2m}=(-1)^m$,
$\Bt^{2m+1}=(-1)^m\Bt$. Splitting the series into even and odd terms,
\[
\exp(\Bt\theta)=\sum_{m\ge0}\frac{(-1)^m\theta^{2m}}{(2m)!}
+\Bt\sum_{m\ge0}\frac{(-1)^m\theta^{2m+1}}{(2m+1)!}
=\cos\theta+\Bt\sin\theta,
\]
which proves \eqref{eq:exp-Bt}; identity \eqref{eq:exp-Bx} is identical with
$\Bx^2=-1$. For \eqref{eq:exp-Btx}, now $\Btx^2=+1$, so that
$\Btx^{2m}=1$ and $\Btx^{2m+1}=\Btx$, and the same splitting gives the series
for $\cosh$ and $\sinh$.

The general formula \eqref{eq:exp-general} follows from
\cref{prop:idempotente}: by \eqref{eq:prod-diag}, the powers are computed
componentwise, $z^n=\zeta_1^n\eidem_1+\zeta_2^n\eidem_2$ (using
$\eidem_k^n=\eidem_k$ and $\eidem_1\eidem_2=0$), and the exponential series is
absolutely convergent in the norm of \cref{def:norma}, which allows the sum
and the idempotents to be interchanged:
\[
\exp(z)=\sum_{n\ge0}\frac{z^n}{n!}
=\Bigl(\sum_{n\ge0}\frac{\zeta_1^n}{n!}\Bigr)\eidem_1
+\Bigl(\sum_{n\ge0}\frac{\zeta_2^n}{n!}\Bigr)\eidem_2
=e^{\zeta_1}\eidem_1+e^{\zeta_2}\eidem_2 .
\]
The identities \eqref{eq:exp-Bt}--\eqref{eq:exp-Btx} are special cases of
\eqref{eq:exp-general} after decomposing $\Bt\theta$, $\Bx\theta$, and
$\Btx\delta$ according to \eqref{eq:desc-zeta}.
\end{proof}

These exponentials are the basis of the transform of
\cref{sec:transformada}: the kernel $e^{-\st t}e^{-\sx x}$ appearing in the
definition of $\transf{\cdot}$ factorizes, thanks to the commutativity of
$\Bt$ and $\Bx$, into a product of a temporal factor and a spatial factor
whose phases rotate in independent bivector planes.

\subsection{Geometric arguments and auxiliary norm}

To locate phases (in particular, the spatial phase associated with the
position of a fault in later sections) it is convenient to define arguments
referred to each bivector plane.

\begin{definicion}[Arguments and norm]
\label{def:norma}
For $z=a+b\,\Bt+c\,\Bx+d\,\Btx\in\mathcal{A}$ the temporal and spatial
arguments are defined as
\begin{equation}
\ArgBt(z)=\operatorname{atan2}(b,a),\qquad
\ArgBx(z)=\operatorname{atan2}(c,a),
\label{eq:args}
\end{equation}
together with the \emph{auxiliary Euclidean norm}
\begin{equation}
\lVert z\rVert=\sqrt{a^2+b^2+c^2+d^2}.
\label{eq:norma}
\end{equation}
\end{definicion}

\begin{observacion}[Reading a rotor]
\label{obs:rotor}
For a pure \emph{spatial rotor} of the form
$z=r\bigl(\cos\varphi+\Bx\sin\varphi\bigr)$ with $r>0$, the coordinates are
$a=r\cos\varphi$, $c=r\sin\varphi$, and $b=d=0$, so that
$\ArgBx(z)=\operatorname{atan2}(r\sin\varphi,\,r\cos\varphi)=\varphi$ recovers
exactly the angle $\varphi$ (modulo $2\pi$), and $\lVert z\rVert=r$. This
property is what will allow a spatial position to be read as a phase angle.
\end{observacion}

\begin{observacion}[The norm is not multiplicative]
\label{obs:no-mult}
It should be emphasized that \eqref{eq:norma} is a vector-space norm but is
\emph{not} multiplicative: in general
$\lVert zw\rVert\neq\lVert z\rVert\,\lVert w\rVert$. This is unavoidable,
since $\mathcal{A}$ has zero divisors (\cref{obs:divisores}): taking
$z=\eidem_1\neq0$ and $w=\eidem_2\neq0$ one has $zw=0$ and therefore
$\lVert zw\rVert=0\neq\lVert\eidem_1\rVert\,\lVert\eidem_2\rVert$. The norm
\eqref{eq:norma} is thus used solely as an auxiliary instrument for
controlling convergence and magnitude, not as an algebraic absolute value.
\end{observacion}

\subsection{Relation to the implementation}

\begin{observacion}[Correspondence with the code]
\label{obs:codigo-alg}
The operations above correspond directly to the module
\texttt{geolaplace.algebra} of the reproducible implementation. The class
\texttt{Geo} stores each element $z$ as the real vector $[a,b,c,d]$ of
\eqref{eq:elemento}; the product implements
\eqref{eq:prod-1}--\eqref{eq:prod-Btx}; the inverse resorts to the
diagonalization of \cref{prop:inverso} or, equivalently, to solving the
system $M(z)\,w=[\,1,0,0,0\,]^{\top}$ with the matrix \eqref{eq:matriz-M}
\cite{harris2020numpy}; the exponential uses \eqref{eq:exp-general}; and the
argument and norm methods carry out \eqref{eq:args}--\eqref{eq:norma}. The
inversion routine flags as degenerate the cases $\zeta_1=0$ or $\zeta_2=0$
(\cref{obs:divisores,obs:det-M}), which in the application sections
identifies singular configurations of the distributed system.
\end{observacion}

\section{Two-dimensional geometric Laplace transform and operational properties}
\label{sec:transformada}

In this section we rigorously define the two-dimensional geometric Laplace
transform on the commutative subalgebra generated by the bivectors $\Bt$
and $\Bx$, we discuss its region of convergence through the idempotent
decomposition, and we establish the operational properties that underpin the
distributed model of the following sections. The treatment follows the spirit
of the classical theory of the Laplace transform \cite{oppenheim1997signals}, and
it exploits the structure of Segre's bicomplex numbers
\cite{luna2015bicomplex} to keep the temporal and spatial phases separate.

\subsection{Definition and factorized kernel}

We recall the rules of the algebra in use: $\Bt^2=-1$, $\Bx^2=-1$,
$\Bt\Bx=\Bx\Bt=\Btx$ and $\Btx^2=+1$. The algebra is \emph{commutative}
(bicomplex), so that a generic element is written
$z=a+b\Bt+c\Bx+d\Btx$ with $a,b,c,d\in\RR$.

\begin{definicion}[Two-dimensional geometric Laplace transform]
\label{def:transformada}
Let $f:\RR_{\ge0}\times\RR_{\ge0}\to\mathcal{A}$ be a function taking values in the
\emph{commutative subalgebra} $\mathcal{A}=\operatorname{span}\{1,\Bt,\Bx,\Btx\}$
of \cref{sec:algebra} (the restriction to $\mathcal{A}$, rather than to the whole
$\Cl(4,0)$, is what guarantees that the spectral factors commute with $f$ in the
operational proofs). Its \emph{geometric Laplace transform} is
\begin{equation}
\label{eq:def-transformada}
\transf{f}=F(\st,\sx)=\int_0^\infty\!\!\int_0^\infty f(t,x)\,
   e^{-\st t}\,e^{-\sx x}\dd x\dd t,
\end{equation}
with the \emph{geometric Laplace variables}
\begin{equation}
\label{eq:variables}
\st=\rho_t+\Bt\,\omega,\qquad \sx=\rho_x+\Bx\,k,
\end{equation}
where $\rho_t,\omega,\rho_x,k\in\RR$. Since $\Bt$ and $\Bx$ commute, the kernels
factor into a single exponential,
\begin{equation}
\label{eq:nucleo-factorizado}
e^{-\st t}\,e^{-\sx x}=e^{-(\st t+\sx x)},
\end{equation}
and the integral is understood componentwise in the basis
$\{1,\Bt,\Bx,\Btx\}$ whenever it exists.
\end{definicion}

The factorization \eqref{eq:nucleo-factorizado} is a direct consequence of
commutativity: $\st t$ and $\sx x$ belong to distinct planes ($\{1,\Bt\}$ and
$\{1,\Bx\}$), but because they commute one has
$e^{-\st t}e^{-\sx x}=e^{-\st t-\sx x}$ with no Baker--Campbell--Hausdorff
correction term. Each kernel is a rotor of decreasing magnitude,
\begin{equation}
\label{eq:nucleos-rotor}
e^{-\st t}=e^{-\rho_t t}\bigl(\cos\omega t-\Bt\sin\omega t\bigr),\qquad
e^{-\sx x}=e^{-\rho_x x}\bigl(\cos k x-\Bx\sin k x\bigr),
\end{equation}
so that $\rho_t$ and $\rho_x$ control the attenuation, while $\omega$ and $k$
introduce temporal and spatial oscillation in independent geometric planes.

\subsection{Region of convergence}

The existence of the integral \eqref{eq:def-transformada} is analyzed most
conveniently in the \emph{idempotent representation} of the algebra. With the
idempotents
\begin{equation}
\label{eq:idem-tr}
\eidem_1=\tfrac12(1+\Btx),\qquad \eidem_2=\tfrac12(1-\Btx),
\qquad \eidem_1^2=\eidem_1,\ \ \eidem_2^2=\eidem_2,\ \ \eidem_1\eidem_2=0,
\end{equation}
every element diagonalizes as $z=\zeta_1\eidem_1+\zeta_2\eidem_2$, where
$\zeta_1,\zeta_2$ are ordinary complex numbers in the plane $\{1,\Bt\}$ (with
$\Bt$ playing the role of the imaginary unit). In particular, the Laplace
variables decompose as
\begin{equation}
\label{eq:variables-idem}
\st t+\sx x=\bigl[(\rho_t t+\rho_x x)+\Bt(\omega t-k x)\bigr]\eidem_1
          +\bigl[(\rho_t t+\rho_x x)+\Bt(\omega t+k x)\bigr]\eidem_2,
\end{equation}
so that the kernel separates into two ordinary complex exponentials, one
for each idempotent component. This reduces the study of convergence to that of
two classical complex Laplace transforms \cite{oppenheim1997signals}.

A word of caution about norms is in order. The Euclidean norm of the
components, $\lVert a+b\Bt+c\Bx+d\Btx\rVert^2=a^2+b^2+c^2+d^2$, is \emph{not}
submultiplicative on $\mathcal{A}$: the idempotent $\eidem_1$ satisfies
$\eidem_1^2=\eidem_1$ and yet
$\lVert\eidem_1^2\rVert=1/\sqrt{2}>\lVert\eidem_1\rVert^2=1/2$, so the naive
bound $\lVert zw\rVert\le\lVert z\rVert\,\lVert w\rVert$ is not available. The
following elementary lemma provides the exact substitute used in the
convergence arguments below.

\begin{lema}[Norm of a product in the idempotent basis]
\label{lem:norma-producto}
Let $z=\zeta_1\eidem_1+\zeta_2\eidem_2$ and $w=\eta_1\eidem_1+\eta_2\eidem_2$
be elements of $\mathcal{A}$ in the idempotent representation
\eqref{eq:idem-tr}, with $\zeta_1,\zeta_2,\eta_1,\eta_2$ complex numbers in
the plane $\{1,\Bt\}$. Then
\begin{equation}
\label{eq:norma-idem}
\lVert z\rVert^2=\tfrac12\bigl(\lvert\zeta_1\rvert^2+\lvert\zeta_2\rvert^2\bigr)
\qquad\text{and}\qquad
\lVert zw\rVert^2=\tfrac12\bigl(\lvert\zeta_1\rvert^2\lvert\eta_1\rvert^2
   +\lvert\zeta_2\rvert^2\lvert\eta_2\rvert^2\bigr).
\end{equation}
In particular
$\lVert zw\rVert\le\max\bigl(\lvert\eta_1\rvert,\lvert\eta_2\rvert\bigr)\,
\lVert z\rVert$, and if both idempotent components of $w$ share the same
modulus $\lvert\eta_1\rvert=\lvert\eta_2\rvert=\mu$ then the product scales
the norm \emph{exactly}: $\lVert zw\rVert=\mu\,\lVert z\rVert$.
\end{lema}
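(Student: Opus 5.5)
The plan is to prove the first identity in \eqref{eq:norma-idem} directly from the explicit decomposition \eqref{eq:desc-zeta} of \cref{prop:idempotente}, and then to reduce the second identity to the first via the componentwise product rule \eqref{eq:prod-diag}. For the first identity, write $z=a+b\,\Bt+c\,\Bx+d\,\Btx$. Then \eqref{eq:desc-zeta} gives $\zeta_1=(a+d)+(b-c)\Bt$ and $\zeta_2=(a-d)+(b+c)\Bt$. Hence
\[
\lvert\zeta_1\rvert^2+\lvert\zeta_2\rvert^2=(a+d)^2+(a-d)^2+(b-c)^2+(b+c)^2=2(a^2+b^2+c^2+d^2),
\]
because the cross terms $\pm2ad$ and $\pm2bc$ cancel pairwise (a parallelogram identity). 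Dividing by two and comparing with \eqref{eq:norma} gives $\lVert z\rVert^2=\tfrac12(\lvert\zeta_1\rvert^2+\lvert\zeta_2\rvert^2)$. This holds for every element of $\mathcal{A}$, since the decomposition is unique.

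For the second identity, \eqref{eq:prod-diag} says that $zw$ has idempotent components $\zeta_1\eta_1$ and $\zeta_2\eta_2$. These are ordinary products in the field $\CC_{\Bt}$, where the modulus is multiplicative. Applying the first identity to the element $zw$ therefore yields
\[
\lVert zw\rVert^2=\tfrac12\bigl(\lvert\zeta_1\eta_1\rvert^2+\lvert\zeta_2\eta_2\rvert^2\bigr)=\tfrac12\bigl(\lvert\zeta_1\rvert^2\lvert\eta_1\rvert^2+\lvert\zeta_2\rvert^2\lvert\eta_2\rvert^2\bigr).
\]

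The two consequences follow quickly. For the inequality, bound each factor by $\lvert\eta_k\rvert^2\le m^2$, where $m=\max(\lvert\eta_1\rvert,\lvert\eta_2\rvert)$. This gives $\lVert zw\rVert^2\le m^2\cdot\tfrac12(\lvert\zeta_1\rvert^2+\lvert\zeta_2\rvert^2)=m^2\lVert z\rVert^2$, and taking square roots finishes it. In the equal-modulus case $\lvert\eta_1\rvert=\lvert\eta_2\rvert=\mu$, the same factor $\mu^2$ comes out of both terms exactly, so $\lVert zw\rVert=\mu\lVert z\rVert$.

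I do not expect a real obstacle. The only point that needs care is the first step: one must use the exact coefficients of \eqref{eq:desc-zeta}, including the sign of $b-c$, so that the cross terms really cancel. It also helps to note that the factor $\tfrac12$ is consistent with the sanity check $\lVert\eidem_1\rVert^2=\tfrac12$, since $\eidem_1$ has $\zeta_1=1$ and $\zeta_2=0$.
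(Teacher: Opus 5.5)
Your proof is correct and follows the paper's argument. You obtain the second identity, the bound and the equality case exactly as the paper does, by applying the first identity to $zw$ with components $\zeta_k\eta_k$ and using multiplicativity of the modulus in $\CC_{\Bt}$; the only cosmetic difference is that you compute $\lvert\zeta_1\rvert^2+\lvert\zeta_2\rvert^2$ from the forward formulas \eqref{eq:desc-zeta} (plus uniqueness of the decomposition), whereas the paper inverts them and writes $a,b,c,d$ in terms of the real and $\Bt$-parts of $\zeta_1,\zeta_2$ --- the same parallelogram identity read in the other direction.
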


\begin{proof}
Write $\zeta_i=p_i+\Bt q_i$ with $p_i,q_i\in\RR$. From
$\eidem_{1,2}=\tfrac12(1\pm\Btx)$ and $\Bt\Btx=-\Bx$ one gets
$\Bt\eidem_1=\tfrac12(\Bt-\Bx)$ and $\Bt\eidem_2=\tfrac12(\Bt+\Bx)$, so the
components of $z$ in the basis $\{1,\Bt,\Bx,\Btx\}$ are
$a=\tfrac12(p_1+p_2)$, $b=\tfrac12(q_1+q_2)$, $c=\tfrac12(q_2-q_1)$ and
$d=\tfrac12(p_1-p_2)$. Expanding,
$a^2+b^2+c^2+d^2=\tfrac12\bigl(p_1^2+q_1^2+p_2^2+q_2^2\bigr)$, which is the
first identity of \eqref{eq:norma-idem}. Since $\eidem_1\eidem_2=0$ and
$\eidem_i^2=\eidem_i$, the product diagonalizes componentwise,
$zw=\zeta_1\eta_1\eidem_1+\zeta_2\eta_2\eidem_2$, and the second identity
follows from the first applied to $zw$ together with the multiplicativity of
the complex modulus, $\lvert\zeta_i\eta_i\rvert=\lvert\zeta_i\rvert\,
\lvert\eta_i\rvert$. The bound and the equality case are immediate
consequences of \eqref{eq:norma-idem}.
\end{proof}

\begin{proposicion}[Region of convergence]
\label{prop:roc}
Suppose that $f$ is of exponential order, that is, that there exist constants
$M>0$, $\sigma_t,\sigma_x\in\RR$ such that
$\lVert f(t,x)\rVert\le M\,e^{\sigma_t t}\,e^{\sigma_x x}$ for all
$t,x\ge0$, where $\lVert\cdot\rVert$ is the Euclidean norm of the components.
Then the integral \eqref{eq:def-transformada} converges absolutely if
\begin{equation}
\label{eq:roc}
\Real(\st)=\rho_t>\sigma_t\qquad\text{and}\qquad \Real(\sx)=\rho_x>\sigma_x .
\end{equation}
Moreover, the transform is invariant per idempotent component: on each
half-plane \eqref{eq:roc} the two components $\zeta_1,\zeta_2$ of the integrand
have identical attenuation factors $e^{-\rho_t t}e^{-\rho_x x}$, so that
convergence is equivalent to that of \emph{two} ordinary complex Laplace
transforms.
\end{proposicion}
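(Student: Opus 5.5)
The plan is to bound the norm of the integrand pointwise and then apply dominated (absolute) convergence componentwise. Write the integrand as $f(t,x)\,w(t,x)$ with $w=e^{-(\st t+\sx x)}$. By \eqref{eq:variables-idem} and \cref{prop:exp}, $w$ has idempotent components
\[
\eta_1=e^{-(\rho_t t+\rho_x x)}e^{-\Bt(\omega t-kx)},\qquad
\eta_2=e^{-(\rho_t t+\rho_x x)}e^{-\Bt(\omega t+kx)},
\]
and both have the same complex modulus $\mu=e^{-\rho_t t}e^{-\rho_x x}$. This uses $\lvert e^{\Bt\theta}\rvert=1$ in $\CC_{\Bt}$.

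The equality case of \cref{lem:norma-producto} then gives exactly
\[
\lVert f(t,x)\,e^{-\st t}e^{-\sx x}\rVert=e^{-\rho_t t}e^{-\rho_x x}\lVert f(t,x)\rVert\le M\,e^{-(\rho_t-\sigma_t)t}e^{-(\rho_x-\sigma_x)x}.
\]
This step is the one that circumvents the non-submultiplicativity of the norm noted before the lemma, and it is the only delicate point. Each of the four real components of the integrand is bounded in absolute value by this norm. The right-hand side is integrable on $\RR_{\ge0}^2$ exactly when \eqref{eq:roc} holds, with integral $M/\bigl((\rho_t-\sigma_t)(\rho_x-\sigma_x)\bigr)$. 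Tonelli's theorem therefore gives absolute convergence of each component of the double integral, and Fubini makes the order of integration in \eqref{eq:def-transformada} immaterial. I will take measurability of $f$ for granted, as is implicit in the hypothesis.

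For the second claim, project onto the idempotents. Write $f=\phi_1\eidem_1+\phi_2\eidem_2$. The first identity of \eqref{eq:norma-idem} gives $\lvert\phi_i\rvert\le\sqrt2\,\lVert f\rVert$ and conversely $\lVert f\rVert\le\max_i\lvert\phi_i\rvert$. So $f$ is of exponential order $(\sigma_t,\sigma_x)$ if and only if both $\phi_i$ are, with constant changed by $\sqrt2$.

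By \eqref{eq:prod-diag}, the integrand is $\phi_1\eta_1\eidem_1+\phi_2\eta_2\eidem_2$. Its two components carry the same attenuation $e^{-\rho_t t}e^{-\rho_x x}$ and differ only in the phases $\omega t\mp kx$. Hence absolute convergence of the $\mathcal{A}$-valued integral is equivalent to absolute convergence of the two $\CC_{\Bt}$-valued integrals
\[
\int\!\!\int\phi_i\,e^{-(\rho_t t+\rho_x x)}e^{-\Bt(\omega t\mp kx)},
\]
which are ordinary two-variable complex Laplace transforms evaluated at $(\rho_t+\Bt\omega,\ \rho_x\mp\Bt k)$. This proves the stated equivalence.
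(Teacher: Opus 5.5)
Your proof is correct and follows the paper's argument. Both use the equal-modulus idempotent components of the kernel and the equality case of \cref{lem:norma-producto} to get $\lVert f\,e^{-\st t}e^{-\sx x}\rVert=e^{-\rho_t t}e^{-\rho_x x}\lVert f\rVert$, then bound by a product of two exponential integrals; your handling of the idempotent equivalence (via $\lvert\phi_i\rvert\le\sqrt2\,\lVert f\rVert$, $\lVert f\rVert\le\max_i\lvert\phi_i\rvert$, and the identification with two-variable complex Laplace transforms at $(\rho_t+\Bt\omega,\rho_x\mp\Bt k)$) is simply a more explicit version of the paper's closing remark.
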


\begin{proof}
By \eqref{eq:variables-idem}, the two idempotent components of the kernel
$e^{-\st t}e^{-\sx x}$ are the ordinary complex exponentials
$\eta_1=e^{-(\rho_t t+\rho_x x)-\Bt(\omega t-kx)}$ and
$\eta_2=e^{-(\rho_t t+\rho_x x)-\Bt(\omega t+kx)}$, whose moduli coincide:
$\lvert\eta_1\rvert=\lvert\eta_2\rvert=e^{-\rho_t t}e^{-\rho_x x}$ (the
oscillatory parts are unit rotors). The equality case of
\cref{lem:norma-producto} then gives, for every $t,x\ge0$,
\begin{equation*}
\bigl\lVert f(t,x)\,e^{-\st t}e^{-\sx x}\bigr\rVert
=e^{-\rho_t t}e^{-\rho_x x}\,\lVert f(t,x)\rVert
\le M\,e^{-(\rho_t-\sigma_t)t}\,e^{-(\rho_x-\sigma_x)x},
\end{equation*}
where the last step uses the exponential-order hypothesis. Note that
\cref{lem:norma-producto} is genuinely needed here: the norm is not
submultiplicative, so one cannot simply bound
$\lVert f\,e^{-\st t}e^{-\sx x}\rVert$ by the product of the norms. Hence
\begin{equation*}
\int_0^\infty\!\!\int_0^\infty \bigl\lVert f(t,x)\,
   e^{-\st t}e^{-\sx x}\bigr\rVert\dd x\dd t
\le M\int_0^\infty e^{-(\rho_t-\sigma_t)t}\dd t
     \int_0^\infty e^{-(\rho_x-\sigma_x)x}\dd x,
\end{equation*}
a product of two integrals that are finite precisely when \eqref{eq:roc} holds;
in that case they equal $1/(\rho_t-\sigma_t)$ and $1/(\rho_x-\sigma_x)$.
The absolute convergence of the vector-valued integral follows componentwise.
Finally, writing the integrand in the idempotent basis
\eqref{eq:variables-idem}, the common attenuation factor $e^{-\rho_t t-\rho_x x}$
is that of each of the two complex exponentials $\zeta_1,\zeta_2$, which
proves the stated equivalence.
\end{proof}

\subsection{Linearity}

\begin{proposicion}[Linearity]
\label{prop:linealidad}
Let $f,g$ be functions whose geometric transforms exist in a common region,
and let $\lambda,\mu\in\mathcal{A}$ be constants. Then
\begin{equation}
\label{eq:linealidad}
\transf{\lambda f+\mu g}=\lambda\,\transf{f}+\mu\,\transf{g}
\end{equation}
on the intersection of the regions of convergence.
\end{proposicion}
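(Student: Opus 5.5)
The plan is to reduce linearity of $\transf{\cdot}$ to linearity of the real (Lebesgue) double integral. Two facts make this work. First, by \cref{def:transformada} the integral is understood componentwise in the basis $\{1,\Bt,\Bx,\Btx\}$. Second, the constants $\lambda,\mu$ live in the commutative subalgebra $\mathcal{A}$, so they can be moved freely past the kernel $e^{-\st t}e^{-\sx x}$.

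Fix $(\st,\sx)$ in the intersection of the two regions of convergence. By \cref{prop:roc}, both integrands $f\,e^{-\st t}e^{-\sx x}$ and $g\,e^{-\st t}e^{-\sx x}$ are absolutely integrable there, in the sense that their Euclidean norms are integrable.

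The first step is to write the integrand of the left-hand side as
\[
(\lambda f+\mu g)\,e^{-\st t}e^{-\sx x}=\lambda\bigl(f\,e^{-\st t}e^{-\sx x}\bigr)+\mu\bigl(g\,e^{-\st t}e^{-\sx x}\bigr).
\]
This uses only distributivity and associativity in $\mathcal{A}$ (\cref{prop:producto}). Commutativity is not even strictly needed here, since $\lambda$ sits on the left in both expressions.

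The second step is to check that this combination is still absolutely integrable. I will use \cref{lem:norma-producto}, which gives $\lVert\lambda h\rVert\le \max(\lvert\eta_1\rvert,\lvert\eta_2\rvert)\lVert h\rVert$, where $\eta_1,\eta_2$ are the idempotent components of $\lambda$. Note that this is the substitute for submultiplicativity, which fails for $\lVert\cdot\rVert$ (\cref{obs:no-mult}). Combined with the triangle inequality, it gives
\[
\bigl\lVert(\lambda f+\mu g)e^{-\st t}e^{-\sx x}\bigr\rVert\le C_\lambda\lVert f e^{-\st t}e^{-\sx x}\rVert+C_\mu\lVert g e^{-\st t}e^{-\sx x}\rVert,
\]
which is integrable. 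So the transform of $\lambda f+\mu g$ exists on the intersection of the regions.

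The third step is to pull the constants out of the integral. Left multiplication by $\lambda$ is the real-linear map with matrix $M(\lambda)$ of \cref{prop:matriz}, acting on the coordinate vector $[a,b,c,d]$. Each coordinate of $\lambda h$ is therefore a fixed real linear combination of the coordinates of $h$. Componentwise integration and linearity of the real integral then give $\int \lambda h=\lambda\int h$, that is, $\int M(\lambda)h=M(\lambda)\int h$. Adding the $f$ and $g$ contributions, again by linearity of the integral for integrable functions, yields \eqref{eq:linealidad}.

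I expect the only step requiring care to be the integrability bound in the second step, because the norm is not submultiplicative. Once \cref{lem:norma-producto} is invoked, or simply the fact that all norms on $\RR^4$ are equivalent and $M(\lambda)$ is a bounded operator, the rest is routine bookkeeping.
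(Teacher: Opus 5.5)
Your proof is correct and follows the paper's route (componentwise linearity of the integral, with the constants pulled outside it), only in more detail. Your remark that associativity suffices is accurate and slightly sharper than the paper's appeal to commutativity, since the constants stay on the left and left multiplication by $\lambda$ is the fixed real matrix $M(\lambda)$. One correction: the absolute integrability of $f\,e^{-\st t}e^{-\sx x}$ should come from the hypothesis that the transforms exist (componentwise integrability), not from \cref{prop:roc}, which presupposes an exponential-order bound the statement does not assume. With that fixed, your second step is redundant, because the third already writes each component of the combined integrand as a real linear combination of integrable functions.
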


\begin{proof}
The integral \eqref{eq:def-transformada} is linear by construction, and since the
algebra is commutative the constants $\lambda,\mu$ can be pulled out of the
integral on the left without altering the order of the factors. The identity then
follows from the additivity of the componentwise integral.
\end{proof}

\subsection{Transform of the partial derivatives}

The operational usefulness of the transform \eqref{eq:def-transformada} lies in
turning the partial derivatives into algebraic multiplications, as in the
classical case \cite{oppenheim1997signals}. We denote by
$F_{t=0}(\sx)=\int_0^\infty f(0,x)\,e^{-\sx x}\dd x$ the one-dimensional spatial
transform of the initial condition $f(0,x)$, and by
$F_{x=0}(\st)=\int_0^\infty f(t,0)\,e^{-\st t}\dd t$ the one-dimensional temporal
transform of the boundary condition $f(t,0)$.

\begin{proposicion}[Transform of the partial derivatives]
\label{prop:derivadas}
Let $f(t,x)$ be continuously differentiable on $\RR_{\ge0}\times\RR_{\ge0}$,
and suppose that $f$, $\pt f$ and $\px f$ are all of exponential order with
common constants $M>0$ and $\sigma_t,\sigma_x\in\RR$ in the sense of
\cref{prop:roc}. Then, for every $(\st,\sx)$ in the region \eqref{eq:roc}, the
transforms $F(\st,\sx)$, $\transf{\pt f}$ and $\transf{\px f}$, as well as the
one-dimensional transforms $F_{t=0}(\sx)$ and $F_{x=0}(\st)$, converge
absolutely, and
\begin{align}
\transf{\pt f}&=F\st-F_{t=0}(\sx),\label{eq:deriv-t}\\
\transf{\px f}&=F\sx-F_{x=0}(\st).\label{eq:deriv-x}
\end{align}
In particular, with \emph{zero} initial and boundary conditions
($f(0,x)\equiv0$ and $f(t,0)\equiv0$) one obtains
\begin{equation}
\label{eq:deriv-nulas}
\transf{\pt f}=F\st,\qquad \transf{\px f}=F\sx .
\end{equation}
\end{proposicion}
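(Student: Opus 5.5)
The plan is to mirror the classical one-variable proof: integrate by parts in a single variable, then use Fubini to return to the double integral. Convergence comes first, because both Fubini and the boundary-term analysis depend on it. By \cref{prop:roc}, applied separately to $f$, $\pt f$ and $\px f$ with the common constants $M,\sigma_t,\sigma_x$, the three two-dimensional transforms converge absolutely on the region \eqref{eq:roc}. For the one-dimensional transforms, continuity lets the exponential-order bound pass to the boundary lines. This gives $\lVert f(0,x)\rVert\le M e^{\sigma_x x}$ and $\lVert f(t,0)\rVert\le M e^{\sigma_t t}$. The equality case of \cref{lem:norma-producto} then applies, since both idempotent components of $e^{-\sx x}$ have modulus $e^{-\rho_x x}$, and likewise for $e^{-\st t}$. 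It yields $\lVert f(0,x)e^{-\sx x}\rVert\le M e^{-(\rho_x-\sigma_x)x}$, which is integrable, and the analogous bound in $t$. Hence $F_{t=0}$ and $F_{x=0}$ converge absolutely.

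For \eqref{eq:deriv-t} I will use Fubini, which is justified by absolute convergence, to write
\[
\transf{\pt f}=\int_0^\infty e^{-\sx x}\Bigl(\int_0^\infty \pt f(t,x)\,e^{-\st t}\dd t\Bigr)\dd x ,
\]
pulling $e^{-\sx x}$ out by commutativity of $\mathcal{A}$. For fixed $x$, I then integrate by parts on $[0,T]$, componentwise in the basis $\{1,\Bt,\Bx,\Btx\}$. Each component of an $\mathcal{A}$-valued product is a finite real bilinear combination of components of the factors, by \cref{prop:producto}, so the real product rule gives the $\mathcal{A}$-valued Leibniz rule $\pt\bigl(f e^{-\st t}\bigr)=(\pt f)e^{-\st t}-f\,\st e^{-\st t}$. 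Here $\pt e^{-\st t}=-\st e^{-\st t}$ follows from \eqref{eq:nucleos-rotor}, or from the idempotent form \eqref{eq:exp-general}. This gives
\[
\int_0^T \pt f\,e^{-\st t}\dd t=f(T,x)e^{-\st T}-f(0,x)+\st\int_0^T f\,e^{-\st t}\dd t .
\]

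The boundary term at $T$ is controlled by \cref{lem:norma-producto} again. Its norm is $e^{-\rho_t T}\lVert f(T,x)\rVert\le M e^{\sigma_x x}e^{-(\rho_t-\sigma_t)T}\to0$. Multiplying by $e^{-\sx x}$ and integrating in $x$ then produces $\st F-F_{t=0}(\sx)$, and commutativity allows this to be written as $F\st$. The spatial identity \eqref{eq:deriv-x} is obtained by the same argument with the roles of $t$ and $x$ exchanged, and \eqref{eq:deriv-nulas} follows at once because the boundary transforms vanish.

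The main obstacle is a technical step: justifying that the inner integral $\int_0^T(\cdot)\dd t$ can be taken to its limit and then integrated in $x$. The plan is to use dominated convergence in $x$. The inner partial integrals are bounded uniformly in $T$ by $M e^{\sigma_x x}/(\rho_t-\sigma_t)$ times a constant. Multiplied by $e^{-\rho_x x}$, this bound is integrable precisely when $\rho_x>\sigma_x$, so the limit and the outer integral may be exchanged. Throughout, the non-submultiplicativity of the norm (\cref{obs:no-mult}) means every product estimate must go through the equality case of \cref{lem:norma-producto} rather than a naive bound.
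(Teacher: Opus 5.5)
Your proposal is correct and follows essentially the same route as the paper. It takes absolute convergence from \cref{prop:roc} and the equality case of \cref{lem:norma-producto}, uses Fubini--Tonelli to integrate first in $t$, integrates by parts with the boundary term at infinity killed by that same lemma, and then integrates in $x$, using commutativity to write $\st F=F\st$. The only difference is your truncation to $[0,T]$ followed by dominated convergence in $x$, which is a harmless extra precaution: the inner $t$-integral already converges absolutely for every fixed $x$, so the paper simply lets $T\to\infty$ pointwise and integrates the resulting identity term by term.
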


\begin{proof}
We prove \eqref{eq:deriv-t}; \eqref{eq:deriv-x} is analogous by symmetry in the
variables. The absolute-convergence claims follow from \cref{prop:roc} applied
to $f$, $\pt f$ and $\px f$; for the one-dimensional transforms, the
exponential-order bound at $t=0$ gives
$\lVert f(0,x)\rVert\le M\,e^{\sigma_x x}$, so $F_{t=0}(\sx)$ converges
absolutely whenever $\rho_x>\sigma_x$ (and symmetrically for $F_{x=0}$).
In particular, the integrand $\pt f(t,x)\,e^{-\st t}e^{-\sx x}$ is absolutely
integrable on $[0,\infty)^2$ throughout the region \eqref{eq:roc}, so by the
Fubini--Tonelli theorem \cite{folland1999real} the double integral can be
evaluated as an iterated integral in either order; we integrate first in $t$,
with the spatial integral as a parameter. Since $f$ takes values in the
commutative subalgebra $\mathcal{A}$, the spectral factor $\st$ commutes with
$f(t,x)$ and $e^{-\st t}$ is differentiated as a scalar:
$\pt\bigl(e^{-\st t}\bigr)=-\st\,e^{-\st t}$. Integration by parts then yields
\begin{align*}
\transf{\pt f}
&=\int_0^\infty\!\!\int_0^\infty \pt f(t,x)\,e^{-\st t}e^{-\sx x}\dd x\dd t\\
&=\int_0^\infty\!\!\left[\,\Bigl(f(t,x)\,e^{-\st t}\Bigr)_{t=0}^{t\to\infty}
   +\st\int_0^\infty f(t,x)\,e^{-\st t}\dd t\right]e^{-\sx x}\dd x .
\end{align*}
The boundary term vanishes as $t\to\infty$ because, by the equality case of
\cref{lem:norma-producto} (the two idempotent components of $e^{-\st t}$ share
the modulus $e^{-\rho_t t}$),
$\lVert f(t,x)\,e^{-\st t}\rVert=e^{-\rho_t t}\lVert f(t,x)\rVert
\le M\,e^{-(\rho_t-\sigma_t)t}e^{\sigma_x x}\to0$ for each fixed $x$; at $t=0$
it contributes $-f(0,x)$. Substituting,
\begin{equation*}
\transf{\pt f}
=\st\int_0^\infty\!\!\int_0^\infty f(t,x)\,e^{-\st t}e^{-\sx x}\dd x\dd t
 -\int_0^\infty f(0,x)\,e^{-\sx x}\dd x
=F\st-F_{t=0}(\sx),
\end{equation*}
where the first integral is $F\st$ (the constant $\st$ commutes and factors out) and
the second is, by definition, $F_{t=0}(\sx)$. The case \eqref{eq:deriv-nulas} is
obtained by cancelling the condition terms. The identity \eqref{eq:deriv-x} is
proved by integrating by parts in $x$ with $\px(e^{-\sx x})=-\sx\,e^{-\sx x}$.
\end{proof}

\subsection{Transform of a separable mode}

\begin{proposicion}[Transform of a separable mode]
\label{prop:separable}
Let the separable space--time mode be
\begin{equation}
\label{eq:modo-separable}
f(t,x)=A\,e^{(-\alpha_t+\Bt\omega_0)t}\,e^{(-\alpha_x+\Bx k_0)x},
\end{equation}
with $A\in\mathcal{A}$ constant and $\alpha_t,\omega_0,\alpha_x,k_0\in\RR$. If
$\rho_t>-\alpha_t$ and $\rho_x>-\alpha_x$, its geometric transform is
\begin{equation}
\label{eq:transf-separable}
F(\st,\sx)=A\,(\st+\alpha_t-\Bt\omega_0)^{-1}\,(\sx+\alpha_x-\Bx k_0)^{-1}.
\end{equation}
\end{proposicion}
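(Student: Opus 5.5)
The plan is to exploit commutativity of $\mathcal{A}$ to factor the double integral into a product of two one-dimensional integrals, and then to evaluate each one by the antiderivative of an exponential, with the limit at infinity controlled by \cref{lem:norma-producto}.

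\emph{Combining and separating the exponentials.} Since $\Bt$ and $\Bx$ commute, the integrand is
\[
A\,e^{-(\st+\alpha_t-\Bt\omega_0)t}\,e^{-(\sx+\alpha_x-\Bx k_0)x} .
\]
Here we use $e^{u}e^{v}=e^{u+v}$ for commuting $u,v\in\mathcal{A}$. This identity follows from \eqref{eq:exp-general} and the componentwise product \eqref{eq:prod-diag}.

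Write $a_t\coloneqq\st+\alpha_t-\Bt\omega_0=(\rho_t+\alpha_t)+\Bt(\omega-\omega_0)$ and $a_x\coloneqq\sx+\alpha_x-\Bx k_0=(\rho_x+\alpha_x)+\Bx(k-k_0)$.

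\emph{Absolute convergence and Fubini.} The hypothesis $\lVert f\rVert=\lVert A\rVert e^{-\alpha_t t}e^{-\alpha_x x}$ holds by the equality case of \cref{lem:norma-producto}. Indeed, the idempotent components of $e^{(-\alpha_t+\Bt\omega_0)t}$ are $e^{-\alpha_t t}e^{\pm\Bt\omega_0 t}$ (both equal to $e^{-\alpha_t t+\Bt\omega_0 t}$), and those of the spatial factor are $e^{-\alpha_x x\mp\Bt k_0 x}$; in each case both components have the same modulus. Hence $f$ is of exponential order with $\sigma_t=-\alpha_t$, $\sigma_x=-\alpha_x$. \Cref{prop:roc} then gives absolute convergence in the stated half-planes, and Fubini--Tonelli lets us write
\[
F=A\Bigl(\int_0^\infty e^{-a_t t}\dd t\Bigr)\Bigl(\int_0^\infty e^{-a_x x}\dd x\Bigr),
\]
pulling $A$ out by linearity (\cref{prop:linealidad}).

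\emph{One-dimensional integrals.} For $a_t$: it lies in $\CC_{\Bt}$, which is a field isomorphic to $\CC$ and has nonzero real part $\rho_t+\alpha_t>0$. So $a_t\neq0$ is invertible and the classical computation applies verbatim: $\int_0^T e^{-a_t t}\dd t=a_t^{-1}(1-e^{-a_t T})$, and $\lVert e^{-a_t T}\rVert=e^{-(\rho_t+\alpha_t)T}\to0$.

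For $a_x$, which lies in $\RR\oplus\RR\Bx$, the same argument applies. Invertibility in $\mathcal{A}$ can be checked via \cref{prop:inverso}. By \eqref{eq:desc-zeta}, its idempotent components are $(\rho_x+\alpha_x)\mp\Bt(k-k_0)$, both nonzero. Differentiating $e^{-a_x x}$ componentwise gives $-a_x e^{-a_x x}$, so the antiderivative is $-a_x^{-1}e^{-a_x x}$. The boundary term vanishes by \cref{lem:norma-producto}, since both components have modulus $e^{-(\rho_x+\alpha_x)x}$.

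Multiplying the two results gives \eqref{eq:transf-separable}. I expect the main obstacle to be the rigour of the boundary term and the invertibility of $a_x$, which does not live in $\CC_{\Bt}$. Both are handled by working in the idempotent representation rather than with the non-submultiplicative Euclidean norm. Everything else is the classical computation carried componentwise.
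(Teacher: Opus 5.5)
Your proposal is correct and follows essentially the same route as the paper. Both factor the integrand by commutativity into a product of one-dimensional integrals, establish invertibility of $\st+\alpha_t-\Bt\omega_0$ and $\sx+\alpha_x-\Bx k_0$ from idempotent components with positive real parts $\rho_t+\alpha_t$ and $\rho_x+\alpha_x$, and evaluate each integral through the antiderivative $-z^{-1}e^{-zt}$, with the boundary term killed by the common attenuation factor. Your explicit justification of absolute convergence and Fubini via \cref{prop:roc} and the equality case of \cref{lem:norma-producto} tightens a step the paper takes for granted, and the stray $\pm$ in the temporal factor's idempotent components is harmless since, as you note, both equal $e^{-\alpha_t t+\Bt\omega_0 t}$.
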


\begin{proof}
Since the integrand \eqref{eq:modo-separable} factors into a product of a
function of $t$ and a function of $x$ and the algebra is commutative, the double
integral splits into a product of two single integrals,
\begin{equation*}
F(\st,\sx)=A\left(\int_0^\infty e^{-(\st+\alpha_t-\Bt\omega_0)t}\dd t\right)
            \left(\int_0^\infty e^{-(\sx+\alpha_x-\Bx k_0)x}\dd x\right).
\end{equation*}
For the temporal integral, let $z_t=\st+\alpha_t-\Bt\omega_0$. In the
idempotent representation \eqref{eq:idem-tr}, $z_t=\zeta_1\eidem_1+
\zeta_2\eidem_2$ with $\Real(\zeta_1)=\Real(\zeta_2)=\rho_t+\alpha_t>0$, so that
$z_t$ is invertible (both idempotent components have positive real part,
in particular nonzero) and
\begin{equation*}
\int_0^\infty e^{-z_t t}\dd t
=\Bigl[-z_t^{-1}e^{-z_t t}\Bigr]_0^\infty=z_t^{-1}
=(\st+\alpha_t-\Bt\omega_0)^{-1},
\end{equation*}
where the term at $t\to\infty$ vanishes because the common attenuation factor of
both idempotent components is $e^{-(\rho_t+\alpha_t)t}\to0$. In exactly the same
way, with $z_x=\sx+\alpha_x-\Bx k_0$ and $\rho_x+\alpha_x>0$,
\begin{equation*}
\int_0^\infty e^{-z_x x}\dd x=(\sx+\alpha_x-\Bx k_0)^{-1}.
\end{equation*}
Substituting yields \eqref{eq:transf-separable}; the order of the factors is
irrelevant by commutativity.
\end{proof}

\begin{observacion}[Geometric poles]
\label{obs:polos}
The transform \eqref{eq:transf-separable} is singular when either of the two
factors ceases to be invertible. The value
\begin{equation}
\label{eq:polo-t}
\st=-\alpha_t+\Bt\omega_0
\end{equation}
is a \emph{geometric temporal pole}: its scalar part $-\alpha_t$ sets the decay
rate and its $\Bt$ component, $\omega_0$, the frequency of temporal
oscillation. Analogously,
\begin{equation}
\label{eq:polo-x}
\sx=-\alpha_x+\Bx k_0
\end{equation}
is a \emph{geometric spatial pole}, with $-\alpha_x$ the spatial attenuation
rate and $k_0$ the wavenumber. The separation of the two poles into distinct
planes ($\Bt$ and $\Bx$) is the property that is exploited in the sections on
modeling and fault detection: the temporal dynamics and the spatial propagation
are encoded in independent algebraic components.
\end{observacion}

\subsection{Comparison with the ordinary complex Laplace transform}

The scope of the construction should be clear from the outset: with a
\emph{single} independent variable, the geometric transform contributes no new
mathematical content beyond the classical complex Laplace transform.

\begin{observacion}[One-dimensional equivalence]
\label{obs:equivalencia-1d}
Let $g(t)$ be a signal taking values in the temporal plane $\{1,\Bt\}$ and consider
only the variable $\st=\rho_t+\Bt\omega$. The plane $\{1,\Bt\}$ is isomorphic
as a field to $\CC$ via $\Bt\mapsto j$, so that
\begin{equation}
\label{eq:equivalencia-1d}
\int_0^\infty g(t)\,e^{-\st t}\dd t
\;\;\xrightarrow{\ \Bt\mapsto j\ }\;\;
G(s)=\int_0^\infty g(t)\,e^{-st}\dd t,\qquad s=\rho_t+j\omega,
\end{equation}
which is exactly the ordinary Laplace transform
\cite{oppenheim1997signals}. Replacing the imaginary unit $j$ by a bivector
$\Bt$ is a mere renaming of the imaginary symbol and adds no structure.
\end{observacion}

The contribution of the transform \eqref{eq:def-transformada} is
\emph{multivariable} in nature, and it should be stated precisely. A complex
Laplace transform with \emph{two} independent variables
$(s_t,s_x)\in\CC^2$ already keeps the temporal and spatial phases separate,
without ambiguity: each variable retains its own coordinate. The multivariable
complex theory can certainly distinguish them. The contribution
of the geometric formulation is \emph{to represent both phases in algebraically
distinguishable planes} ---the temporal $\{1,\Bt\}$ and the spatial $\{1,\Bx\}$---
within \emph{one and the same} bicomplex structure \cite{luna2015bicomplex}: a single
element $z=a+b\Bt+c\Bx+d\Btx$ packages the magnitude ($a$), the temporal phase
($\Bt$ component), the spatial phase ($\Bx$ component) and their coupling ($\Btx$),
each legible separately through $\ArgBt$ and $\ArgBx$. This unified
representation makes it easier to operate with mixed products, to read arguments
and to manipulate spatial signatures (such as the rotor $e^{-\Bx k x_f}$ of a
localized fault) within a single algebraic object, and it is the basis for the
interpretation of the poles
\eqref{eq:polo-t}--\eqref{eq:polo-x} and for the geometric spatial-phase
localization methods of the later sections.

\begin{observacion}[Commutative case and future extension]
\label{obs:conmutativo}
The preceding development assumes $\Bt\Bx=\Bx\Bt$, that is, \emph{commutative}
generators (bicomplex algebra). This is the case actually implemented and
the only one that guarantees the factorization \eqref{eq:nucleo-factorizado}
without corrections. A \emph{non-commutative} version, in which $\Bt\Bx=-\Bx\Bt$,
would require handling Baker--Campbell--Hausdorff terms in
\eqref{eq:nucleo-factorizado} and is left as a line of future work, beyond the
scope of this article.
\end{observacion}

\subsection{Correspondence with the implementation}

\begin{observacion}[Relation to the code]
\label{obs:codigo-tr}
The constructions of this section correspond directly to the module
\texttt{geolaplace.transform} of the reproducible implementation. The geometric
variables \eqref{eq:variables} are obtained with \texttt{st(rho, omega)} and
\texttt{sx(rho, k)}, which return $\rho_t+\Bt\omega$ and $\rho_x+\Bx k$,
respectively. The kernels \eqref{eq:nucleos-rotor} are computed via
\texttt{kernel\_t(t, rho, omega)}$=e^{-\st t}$ and
\texttt{kernel\_x(x, rho, k)}$=e^{-\sx x}$, evaluated exactly through
the idempotent decomposition \eqref{eq:idem-tr}. Finally,
\texttt{separable\_transform(alpha\_t, omega0, alpha\_x, k0, amplitude)} returns
the closed-form function $F(\st,\sx)$ of \eqref{eq:transf-separable}, building the
invertible factors from the geometric poles
\eqref{eq:polo-t}--\eqref{eq:polo-x}; the invertibility required in the
proof of \cref{prop:separable} translates into the region-of-convergence
condition $\rho_t>-\alpha_t$ and $\rho_x>-\alpha_x$.
\end{observacion}

\section{Distributed converter-line model and dispersion relation}
\label{sec:linea}

In this section we apply the two-dimensional Geometric Laplace Transform
to a continuous model of a DC line of
distributed converters. The goal is twofold: on the one hand, to obtain a
space--time transfer function between the converter injection and the
line voltage; on the other, to derive the associated dispersion relation, which
governs the propagation of disturbances along the line. It is
precisely in this setting, with \emph{two} independent variables
(time $t$ and position $x$), that the transform contributes a
geometric structure that does not reduce to the scalar complex Laplace transform.

\subsection{Modeling assumptions}

We work with an incremental model obtained by linearization about a
nominal operating point of the line (steady-state currents and
voltages). Under this approach, the relevant signals are deviations
with respect to that point.

\begin{hipotesis}[Incremental linearization and regularity]
\label{hip:linea}
We assume the following:
\begin{enumerate}[label=(H\arabic*),leftmargin=*]
  \item \textbf{Linearization.} The line dynamics are described by its
    incremental model about an operating point; $v(t,x)$ denotes the
    incremental voltage, $i(t,x)$ the incremental longitudinal current, and
    $u(t,x)$ the incremental current injected by the converters per
    unit length. All quantities are small deviations, so
    that the nonlinear terms are negligible.
  \item \textbf{Spatial continuity.} Although the converters are
    discrete devices located at specific positions, their
    aggregate effect is approximated by a \emph{distributed} injection $u(t,x)$
    continuous in $x$. The discrete case is treated exactly in
    \cref{subsec:inyeccion-discreta}, which quantifies when the continuous
    description is a limit of the discrete one (the injection, with rate
    $O(P^{-2})$ in the number of converters $P$) and when it is exact
    (the uniform droop feedback on the sub-Nyquist band, \cref{sec:control}).
  \item \textbf{Constant per-unit-length parameters.} The parameters
    $r$, $\ell$, $c$, and $g$ (defined below) are uniform along
    the line.
  \item \textbf{Zero initial and boundary conditions.} For the analysis
    of the operational properties by means of the transform we assume
    $v(0,x)=i(0,x)=0$ and, where applicable, zero boundary conditions. This
    isolates the response forced by the injection $u$ and removes boundary
    terms.
\end{enumerate}
\end{hipotesis}

The regularity conditions (sufficient decay in $t$ and $x$, and
exponential-type integrability) that guarantee the existence of the
transform are assumed to hold in the corresponding region of convergence
and are not repeated here.

\subsection{Line equations}

We adopt a transmission-line model with distributed parameters
per unit length \cite{paul2007transmission}: resistance $r$,
inductance $\ell$, capacitance $c$, and incremental conductance $g$. The
latter is interpreted as the incremental slope of the converter
characteristic about the operating point and, unlike the classical
passive case, \emph{may be negative} (converters with constant-power
control exhibit, incrementally, a negative conductance that
affects stability). The line equations are
\begin{subequations}
\label{eq:linea-tx}
\begin{align}
  \px v + \ell\,\pt i + r\,i &= 0, \label{eq:linea-tx-v}\\
  \px i + c\,\pt v + g\,v &= u. \label{eq:linea-tx-i}
\end{align}
\end{subequations}
Equation \eqref{eq:linea-tx-v} expresses the longitudinal voltage drop across
the series impedance of the line; equation \eqref{eq:linea-tx-i} expresses the
current balance: the spatial variation of the longitudinal current is
distributed among the capacitive current, the incremental conductance, and the
injection $u$ of the converters.

\subsection{Transformation and distributed transfer function}

We apply the Geometric Laplace Transform in the two variables, with
$\st=\rho_t+\Bt\omega$ and $\sx=\rho_x+\Bx k$. We write
$V(\st,\sx)=\transf{v}$, $I(\st,\sx)=\transf{i}$, and $U(\st,\sx)=\transf{u}$.
Under the zero conditions of \cref{hip:linea}, the transform of a
partial derivative substitutes $\pt\mapsto\st$ and $\px\mapsto\sx$ with no
boundary terms. Introducing the distributed impedance and admittance
\begin{equation}
\label{eq:ZY}
  \Zimp(\st) = r + \ell\,\st,
  \qquad
  \Yadm(\st) = g + c\,\st,
\end{equation}
equations \eqref{eq:linea-tx} transform into the algebraic system
\begin{subequations}
\label{eq:linea-transf}
\begin{align}
  V\,\sx + \Zimp\,I &= 0, \label{eq:linea-transf-v}\\
  I\,\sx + \Yadm\,V &= U. \label{eq:linea-transf-i}
\end{align}
\end{subequations}
Note that, since the bicomplex algebra is \emph{commutative}, all the
products in \eqref{eq:ZY}--\eqref{eq:linea-transf} are manipulated with the
usual rules of algebra, taking into account only the relations
$\Bt^2=\Bx^2=-1$, $\Bt\Bx=\Bx\Bt=\Btx$, $\Btx^2=+1$.

\begin{proposicion}[Determinant and distributed transfer function]
\label{prop:transferencia}
Under \cref{hip:linea}, let
\begin{equation}
\label{eq:determinante}
  \Deter(\st,\sx) = \sx^{2} - \Zimp(\st)\,\Yadm(\st).
\end{equation}
At every point $(\st,\sx)$ of the bicomplex domain where $\Deter(\st,\sx)$ is
invertible, the transformed voltage is determined by the injection through
the distributed transfer function
\begin{equation}
\label{eq:G-voltage}
  V(\st,\sx) = -\,\frac{\Zimp(\st)}{\Deter(\st,\sx)}\,U(\st,\sx).
\end{equation}
\end{proposicion}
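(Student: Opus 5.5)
The plan is to eliminate the current $I$ from the transformed system \eqref{eq:linea-transf} by plain algebra, using only the commutativity of $\mathcal{A}$ (\cref{prop:producto}). Under \cref{hip:linea}, \cref{prop:derivadas} in the form \eqref{eq:deriv-nulas}, together with linearity (\cref{prop:linealidad}), turns \eqref{eq:linea-tx} into \eqref{eq:linea-transf}. From here on $V,I,U,\Zimp,\Yadm,\sx$ are just elements of the commutative ring $\mathcal{A}$, evaluated at a fixed point $(\st,\sx)$ of the common region of convergence.

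The elimination goes as follows.
\begin{enumerate}
\item Multiply \eqref{eq:linea-transf-i} by $\Zimp$. This gives $\Zimp I\sx+\Zimp\Yadm V=\Zimp U$.
\item Multiply \eqref{eq:linea-transf-v} by $\sx$. This gives $V\sx^2+\Zimp I\sx=0$, where commutativity lets me write $\sx\Zimp I=\Zimp I\sx$.
\item Subtract the first identity from the second. The terms in $I$ cancel and I obtain
\[
  \bigl(\sx^2-\Zimp\Yadm\bigr)V=-\Zimp U,
  \qquad\text{that is,}\qquad
  \Deter(\st,\sx)\,V=-\Zimp(\st)\,U .
\]
\end{enumerate}
The key point is that this derivation never divides by $\Zimp$ or by $\sx$, either of which could be a zero divisor (\cref{obs:divisores}). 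Only multiplication is used, so the identity holds at every point of the domain.

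Next I use the hypothesis that $\Deter$ is invertible. By \cref{prop:inverso} its inverse exists, namely $\Deter^{-1}=\delta_1^{-1}\eidem_1+\delta_2^{-1}\eidem_2$, where $\delta_1,\delta_2$ are the idempotent components of $\Deter$. Multiplying both sides by $\Deter^{-1}$ gives $V=-\Zimp\,\Deter^{-1}U$. This is \eqref{eq:G-voltage}, read with the fraction notation meaning $\Zimp\Deter^{-1}$; commutativity makes the placement of the inverse irrelevant. The same argument shows that $V$ is uniquely determined by $U$ at such points: any two solutions $V,V'$ satisfy $\Deter(V-V')=0$, and multiplying by $\Deter^{-1}$ gives $V=V'$.

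The only delicate step I expect is being careful about the zero divisors. In a field one would simply solve the $2\times2$ system by Cramer's rule. In $\mathcal{A}$, however, invertibility must be checked through the idempotent components, and $\Zimp$ itself need not be invertible. This is why I deliberately eliminate $I$ by multiplying rather than dividing, and why I invoke invertibility only for $\Deter$, which is exactly the hypothesis of the statement.

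As an optional remark, the system \eqref{eq:linea-transf} can be written as a $2\times2$ matrix over $\mathcal{A}$ with rows $(\sx,\Zimp)$ and $(\Yadm,\sx)$. Its determinant is $\Deter$, and the adjugate formula gives the same result componentwise in $\eidem_1,\eidem_2$. This also yields $I=\sx\,\Deter^{-1}U$ as a by-product.
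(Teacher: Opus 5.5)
Your proof is correct and follows essentially the same route as the paper. The paper writes the system as a $2\times 2$ matrix over $\mathcal{A}$ and applies Cramer's rule through the adjugate; your elimination of $I$ (multiplying by $\sx$ and by $\Zimp$, then subtracting) is exactly the first row of that adjugate computation, giving $\Deter\,V=-\Zimp\,U$ before inverting $\Deter$ — and, as in the paper, it uses only commutativity and the invertibility of $\Deter$, never that of $\Zimp$.
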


\begin{proof}
We write the system \eqref{eq:linea-transf} in matrix form over the
commutative subalgebra, with unknowns $(V, I)$:
\begin{equation}
\label{eq:demo-matriz}
  \begin{pmatrix} \sx & \Zimp \\ \Yadm & \sx \end{pmatrix}
  \begin{pmatrix} V \\ I \end{pmatrix}
  = \begin{pmatrix} 0 \\ U \end{pmatrix},
\end{equation}
whose determinant is, by definition \eqref{eq:determinante},
$\sx\,\sx - \Zimp\,\Yadm = \sx^{2} - \Zimp\Yadm = \Deter(\st,\sx)$. Since the
algebra is commutative, Cramer's rule (via the adjugate matrix) is valid
whenever $\Deter$ is invertible. Replacing the first column by the
right-hand side,
\begin{equation}
  V = \Deter^{-1}\,\det\!\begin{pmatrix} 0 & \Zimp \\ U & \sx \end{pmatrix}
    = \Deter^{-1}\,\bigl(0\cdot\sx - \Zimp\,U\bigr)
    = -\,\frac{\Zimp(\st)}{\Deter(\st,\sx)}\,U(\st,\sx),
\end{equation}
which is \eqref{eq:G-voltage}. The derivation requires \emph{only} the
invertibility of $\Deter$ (not that of $\Zimp$): the adjugate of the matrix is
well defined throughout the subalgebra, and commutativity guarantees that the
order of the factors is irrelevant.
\end{proof}

The quantity $\Deter(\st,\sx)$ plays the role of a characteristic
determinant of the distributed model: its zeros define the space--time
eigenmodes of the line and, therefore, its dispersive behavior.

\subsection{Dispersion relation}

The condition $\Deter(\st,\sx)=0$ links the temporal variable $\st$ with the
spatial variable $\sx$ and constitutes the \emph{dispersion relation} of the line:
\begin{equation}
\label{eq:dispersion}
  \Deter(\st,\sx) = \sx^{2} - \Zimp(\st)\,\Yadm(\st) = 0
  \;\Longleftrightarrow\;
  \sx^{2} = \bigl(r+\ell\,\st\bigr)\bigl(g+c\,\st\bigr).
\end{equation}
We consider the ideal lossless case, which admits a transparent physical
interpretation.

\begin{proposicion}[Ideal lossless dispersion]
\label{prop:dispersion-ideal}
Suppose $r=g=0$. Then the dispersion relation \eqref{eq:dispersion}
reduces to
\begin{equation}
\label{eq:disp-sinperdidas}
  \sx^{2} - \ell c\,\st^{2} = 0.
\end{equation}
Evaluating on the geometric axes $\st=\Bt\,\omega$ and $\sx=\Bx\,k$, with
$\omega,k\in\RR$, one obtains the real relation
\begin{equation}
\label{eq:disp-k2}
  k^{2} = \ell c\,\omega^{2},
\end{equation}
from which a constant phase velocity, independent of
frequency, follows:
\begin{equation}
\label{eq:vp}
  v_p = \frac{\omega}{k} = \frac{1}{\sqrt{\ell c}}.
\end{equation}
\end{proposicion}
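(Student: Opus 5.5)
The plan is to substitute $r=g=0$ into \eqref{eq:ZY}, which gives $\Zimp(\st)=\ell\,\st$ and $\Yadm(\st)=c\,\st$. Commutativity of $\mathcal{A}$ (\cref{prop:producto}) makes their product $\ell c\,\st^2$, and the real scalars $\ell,c$ are central anyway. Inserted into \eqref{eq:determinante}, this turns the dispersion relation \eqref{eq:dispersion} into \eqref{eq:disp-sinperdidas}, $\sx^2-\ell c\,\st^2=0$. No invertibility question arises here, because we only evaluate the polynomial $\Deter$, not its inverse.

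Next I restrict to the geometric axes $\st=\Bt\omega$ and $\sx=\Bx k$, that is, $\rho_t=\rho_x=0$. Using \eqref{eq:reglas}, $\st^2=\Bt^2\omega^2=-\omega^2$ and $\sx^2=\Bx^2k^2=-k^2$, both purely scalar elements of $\mathcal{A}$. The relation becomes $-k^2+\ell c\,\omega^2=0$, an identity in the scalar component only. Its $\Bt$, $\Bx$ and $\Btx$ components vanish identically, so by uniqueness of coordinates \eqref{eq:elemento} it is equivalent to the real equation \eqref{eq:disp-k2}.

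For \eqref{eq:vp} I need $\ell,c>0$, which holds physically for a transmission line, and I need to exclude the trivial mode $k=\omega=0$. Then \eqref{eq:disp-k2} gives $k=\pm\sqrt{\ell c}\,\omega$, so $|\omega/k|=1/\sqrt{\ell c}$. I will state this for the forward branch $\omega/k>0$; the other sign is the backward-travelling wave. Since the ratio does not involve $\omega$, the phase velocity is frequency independent.

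As a side remark, the zero-divisor structure could be checked in the idempotent picture \eqref{eq:desc-zeta}: $\Deter$ vanishes only if both idempotent components vanish, which on the axes again reduces to the same real equation. The only delicate point is this bookkeeping, namely recognizing that on the axes both squares are real scalars, so the bicomplex equation collapses to one real equation. No step presents a real obstacle.
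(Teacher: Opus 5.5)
Your proposal is correct and follows the paper's proof essentially step by step: substitute $r=g=0$ to get $\Zimp\Yadm=\ell c\,\st^2$, then use $\Bt^2=\Bx^2=-1$ on the geometric axes to reduce to the scalar equation $-k^2+\ell c\,\omega^2=0$, and take the forward root. Your extra care about $\ell,c>0$, the trivial mode and the backward branch is a harmless refinement of the paper's choice of the positive root $k=\sqrt{\ell c}\,\omega$ for $\omega>0$.
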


\begin{proof}
With $r=g=0$, definitions \eqref{eq:ZY} give $\Zimp(\st)=\ell\,\st$ and
$\Yadm(\st)=c\,\st$, so that
$\Zimp(\st)\,\Yadm(\st)=\ell c\,\st^{2}$ and \eqref{eq:dispersion} becomes
\eqref{eq:disp-sinperdidas}. We now evaluate on the geometric axes.
Using $\Bt^{2}=-1$ and $\Bx^{2}=-1$,
\begin{equation}
  \st^{2} = (\Bt\,\omega)^{2} = \Bt^{2}\,\omega^{2} = -\,\omega^{2},
  \qquad
  \sx^{2} = (\Bx\,k)^{2} = \Bx^{2}\,k^{2} = -\,k^{2}.
\end{equation}
Substituting into \eqref{eq:disp-sinperdidas},
\begin{equation}
  -\,k^{2} - \ell c\,(-\,\omega^{2}) = 0
  \;\Longleftrightarrow\;
  -\,k^{2} + \ell c\,\omega^{2} = 0
  \;\Longleftrightarrow\;
  k^{2} = \ell c\,\omega^{2},
\end{equation}
which is \eqref{eq:disp-k2}. Taking the positive root $k=\sqrt{\ell c}\,\omega$
for $\omega>0$, the phase velocity is
\begin{equation}
  v_p = \frac{\omega}{k}
      = \frac{\omega}{\sqrt{\ell c}\,\omega}
      = \frac{1}{\sqrt{\ell c}},
\end{equation}
constant, which proves \eqref{eq:vp}. The relation $k=\sqrt{\ell c}\,\omega$
is linear: the ideal line is \emph{nondispersive}, all components
propagate at the same velocity, and there is no phase distortion.
\end{proof}

\begin{observacion}[Lossy case: exact locus $rc+\ell g=0$ and classical projection]
\label{obs:disp-perdidas}
The exact scope of the evaluation $\st=\Bt\omega$, $\sx=\Bx k$ needs to be
stated with care. On the one hand,
$\Zimp(\Bt\omega)\Yadm(\Bt\omega)=(r+\ell\Bt\omega)(g+c\Bt\omega)$ has, in
general, a component in the $\{1,\Bt\}$ plane,
\begin{equation}
\label{eq:ZY-perdidas}
  \Zimp(\Bt\omega)\Yadm(\Bt\omega)
  =(rg-\ell c\,\omega^2)+\Bt\,\omega\,(rc+\ell g),
\end{equation}
whereas $\sx^2=(\rho_x+\Bx k)^2=(\rho_x^2-k^2)+2\rho_x k\,\Bx$ only contributes
components in $\{1,\Bx\}$; on the pure spatial plane $\sx=\Bx k$, in
particular, $\sx^2=-k^2$ is real. Hence the restricted equation
$\Deter(\Bt\omega,\Bx k)=0$ admits a real branch $k(\omega)$ for all $\omega$
\emph{exactly} if and only if the $\Bt$ term of \eqref{eq:ZY-perdidas}
vanishes identically, that is, on the Heaviside-type locus
\begin{equation}
\label{eq:heaviside}
  rc+\ell g=0,
\end{equation}
in which case $k^2=\ell c\,\omega^2-rg$. The locus \eqref{eq:heaviside}
contains the ideal case $r=g=0$, but \emph{not only} the ideal case: for
$r>0$ it is met at $g=-rc/\ell<0$, a lossy configuration with \emph{active}
(negative) shunt conductance. This sign is not exotic here ---the base case
of \cref{sec:resultados} already uses $g<0$, and the virtual-conductance
feedback of \cref{sec:control} shifts the effective conductance to $g+k_v$---
although the locus itself lies outside the dissipative regime $g+k_v>0$
required there. Off the locus \eqref{eq:heaviside}, the equality
$\sx^2=\Zimp\Yadm$ cannot be satisfied by restricting $\sx$ to the pure
spatial plane, and the lossy curve must be understood as a \emph{classical
projection} of the propagation constant
$\gamma(\omega)=\sqrt{\Zimp(\Bt\omega)\Yadm(\Bt\omega)}$
(its oscillatory part $k(\omega)=\lvert\Imag\gamma(\omega)\rvert$), not as an
exact root of $\Deter=0$ restricted to $\sx=\Bx k$. The exact treatment with
generic losses requires allowing a general bicomplex spatial variable
$\sx=\rho_x+\Bx k$ (with $\rho_x\neq0$, spatial attenuation) or working in the
idempotent decomposition, where each component is an ordinary complex root.
\end{observacion}

The linear relation between $k$ and $\omega$ in the ideal case, and its
curvature under generic losses ($rc+\ell g\neq 0$) in the sense of
\cref{obs:disp-perdidas}, are illustrated in \cref{fig:dispersion}. The fact
that the evaluation on distinct geometric axes ($\Bt$ for time, $\Bx$
for space) preserves both phases separately is precisely what
distinguishes the geometric transform from a scalar complex Laplace transform.

\begin{figure}[htbp]
  \centering
  \includegraphics[width=0.72\textwidth]{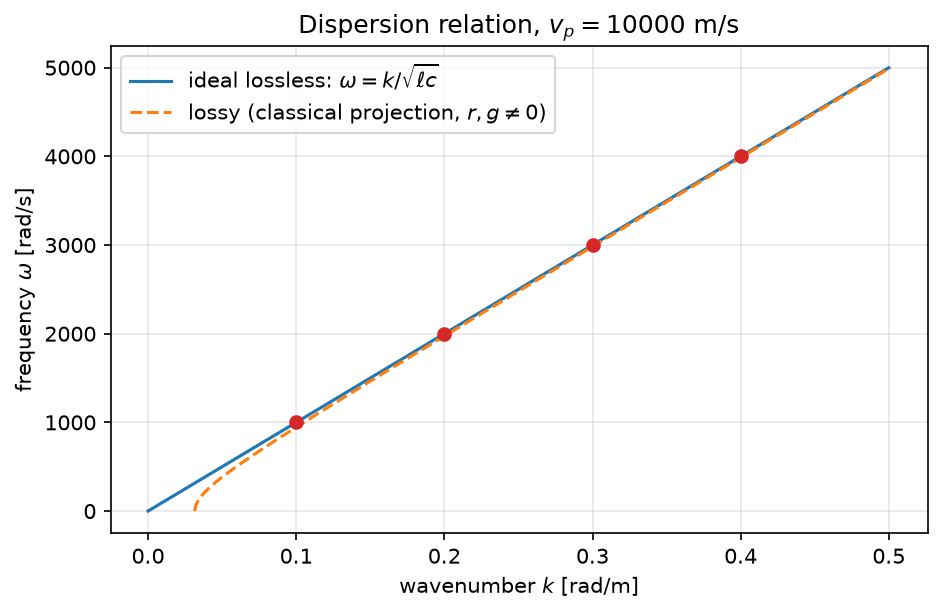}
  \caption{Dispersion relation of the distributed line. The ideal lossless
    straight line ($r=g=0$) is \emph{exact}: $\Deter(\Bt\omega,\Bx k)=0$ gives
    $k=\sqrt{\ell c}\,\omega$ (nondispersive line, constant phase velocity
    $v_p=1/\sqrt{\ell c}$). The lossy curve (dashed) is the
    \emph{classical projection of the lossy propagation constant},
    $k(\omega)=\lvert\Imag\sqrt{\Zimp(\Bt\omega)\Yadm(\Bt\omega)}\rvert$; it is
    shown for reference and does \emph{not} represent an exact root of
    $\Deter(\Bt\omega,\Bx k)=0$ restricted to $\sx=\Bx k$, except on the locus
    $rc+\ell g=0$ of \cref{obs:disp-perdidas} (the values used here give
    $rc+\ell g=1.5\cdot10^{-6}\,\mathrm{s/m^2}\neq0$).}
  \label{fig:dispersion}
\end{figure}

\subsection{Finite line: modes and modal denominator}

We now consider a finite-length line $0\le x\le L$ with zero voltage
at both ends, $v(t,0)=v(t,L)=0$. These boundary conditions
select a basis of sinusoidal spatial modes
\begin{equation}
\label{eq:modos}
  \phi_n(x) = \sin(k_n x),
  \qquad
  k_n = \frac{n\pi}{L},
  \qquad n=1,2,3,\dots
\end{equation}
Rather than formally substituting $\sx^{2}\mapsto -k_n^{2}$, we derive the
modal denominator directly from the transformed PDE. Applying the temporal
transform to the system \eqref{eq:linea-tx} (with zero initial conditions) one
obtains a system in $x$,
\begin{equation}
\label{eq:sistema-x}
  \px V + \Zimp(\st)\,I = 0,
  \qquad
  \px I + \Yadm(\st)\,V = U,
\end{equation}
with $\Zimp=r+\ell\st$, $\Yadm=g+c\st$. Differentiating the first equation with
respect to $x$ and substituting $\px I = U-\Yadm V$ from the second,
\begin{equation}
\label{eq:helmholtz}
  \px^2 V + \Zimp\,(U-\Yadm V)=0
  \;\Longleftrightarrow\;
  \px^2 V - \Zimp\Yadm\,V = -\,\Zimp\,U,
\end{equation}
a space--time Helmholtz equation. We project onto the modes
$\phi_n(x)=\sin(k_n x)$, which satisfy the boundary conditions
$v(t,0)=v(t,L)=0$ and $\px^2\phi_n=-k_n^2\phi_n$. With
$V(\st,x)=\sum_n V_n(\st)\phi_n(x)$ and $U(\st,x)=\sum_n U_n(\st)\phi_n(x)$, the
$n$-th component of \eqref{eq:helmholtz} is
$-k_n^2 V_n-\Zimp\Yadm V_n=-\Zimp U_n$, that is,
\begin{equation}
\label{eq:modo-n}
  \bigl[\,k_n^{2} + \Zimp(\st)\Yadm(\st)\,\bigr]\,V_n(\st)
  = \Zimp(\st)\,U_n(\st).
\end{equation}
The bracketed factor is the \emph{open-loop modal denominator}
\begin{align}
  D_n(\st)
  &= k_n^{2} + \Zimp(\st)\,\Yadm(\st) \notag\\
  &= k_n^{2} + (r+\ell\,\st)(g+c\,\st) \notag\\
  &= \ell c\,\st^{2} + (rc+\ell g)\,\st + (rg + k_n^{2}),
  \label{eq:Dn}
\end{align}
and from \eqref{eq:modo-n} the \emph{modal transfer function} from source to
voltage is
\begin{equation}
\label{eq:Gn}
  G_n(\st)=\frac{V_n(\st)}{U_n(\st)}=\frac{\Zimp(\st)}{D_n(\st)} .
\end{equation}
Note the \emph{numerator} $\Zimp(\st)$: the modal transfer function is not
$1/D_n$, but $\Zimp/D_n$. This is the transfer function that connects with the
modal amplitude of the residual of a fault (\cref{sec:fallos},
$A_n=V_n/G_n=F_f\,\psi_n(x_f)$). The modal denominator \eqref{eq:Dn} is a
second-degree polynomial in $\st$ for each mode $n$; its roots are the
\emph{modal poles} of the open-loop line and determine the stability and
damping of each spatial mode. The shape of the modes $\phi_n(x)$ is
shown in \cref{fig:modeshapes}.

\begin{figure}[htbp]
  \centering
  \includegraphics[width=0.72\textwidth]{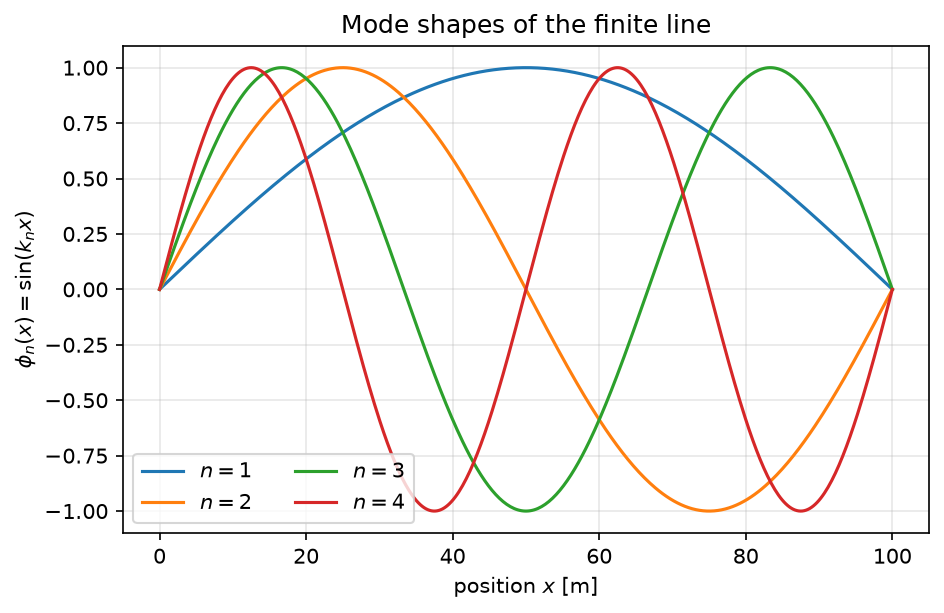}
  \caption{Mode shapes $\phi_n(x)=\sin(k_n x)$, with $k_n=n\pi/L$, of the
    finite line $0\le x\le L$ with zero voltage at the ends. Each spatial
    mode $n$ possesses its own modal denominator $D_n(\st)$ given by
    \eqref{eq:Dn}, whose roots are the modal poles that govern its
    temporal dynamics.}
  \label{fig:modeshapes}
\end{figure}

\subsection{Discrete converters: point injection and spatial sampling}
\label{subsec:inyeccion-discreta}

Hypothesis (H2) of \cref{hip:linea} replaces the physical converters,
located at $P$ discrete sites $0<x_1<\dots<x_P<L$, by a continuous
injection density. We now treat the discrete case exactly. If plant $p$
injects the incremental current $I_p(t)$ (in amperes, not per unit
length) at its site, the source term of \eqref{eq:linea-tx-i} becomes
\begin{equation}
\label{eq:inyeccion-discreta}
  u(t,x)=\sum_{p=1}^{P} I_p(t)\,\delta(x-x_p),
\end{equation}
where $\delta(\cdot)$ is the spatial Dirac delta. This model has the same
mathematical form as the localized fault of \cref{sec:fallos}
(\cref{def:fallo}), so the transform machinery is shared between the two
problems. Indeed, transforming \eqref{eq:inyeccion-discreta} in both
variables, each delta contributes a spatial rotor, by the same computation
as the fault signature \eqref{eq:firma}:
\begin{equation}
\label{eq:U-discreta}
  U(\st,\sx)=\sum_{p=1}^{P} I_p(\st)\,e^{-\sx x_p},
\end{equation}
a finite sum of spatial rotors, entire in $\sx$.

On the finite line, the sine-series coefficients of
\eqref{eq:inyeccion-discreta} in the convention of \eqref{eq:modo-n}
---that is, $U(\st,x)=\sum_n U_n(\st)\phi_n(x)$ with
$U_n=(2/L)\int_0^L U\phi_n\,\dd x$--- are
\begin{equation}
\label{eq:Un-discreta}
  U_n(\st)=\frac{2}{L}\sum_{p=1}^{P} I_p(\st)\,\sin(k_n x_p)
          =\frac{2}{L}\bigl[S\,I(\st)\bigr]_n,
  \qquad
  S_{np}=\sin(k_n x_p),
\end{equation}
where $S\in\RR^{N\times P}$ is the \emph{sampling matrix} of the converter
sites on the first $N$ modes. How faithfully the discrete injection can
reproduce, or actuate, the modal content of the line is entirely encoded
in $S$. For the uniform grid of practical interest the sampling is exact
in the following sense.

\begin{proposicion}[Exact sampling on the uniform midpoint grid]
\label{prop:muestreo}
Let the $P$ converters be placed on the midpoint grid
$x_p=(p-\tfrac12)L/P$, $p=1,\dots,P$, and write
$\theta_p=k_1x_p=(p-\tfrac12)\pi/P$, so that $k_nx_p=n\theta_p$. Then:
\begin{enumerate}[label=(\roman*),leftmargin=*]
  \item \textbf{Discrete orthogonality.} For $1\le n,m\le P$,
    \begin{equation}
    \label{eq:ortogonalidad-dst}
      \sum_{p=1}^{P}\sin(k_n x_p)\sin(k_m x_p)
      =\begin{cases}
        \dfrac{P}{2}\,\delta_{nm}, & n,m\le P-1,\\[1.2ex]
        P, & n=m=P,\\[0.4ex]
        0, & n\neq m,
      \end{cases}
    \end{equation}
    that is, the sampled mode shapes of orders $1,\dots,P$ remain
    orthogonal, with mode $P$ carrying twice the weight of the others.
  \item \textbf{Spatial aliasing.} For every $p$, every $n\ge1$, and every
    integer $j\ge1$,
    \begin{equation}
    \label{eq:alias}
      \sin(k_{2P-n}x_p)=\sin(k_n x_p),
      \quad
      \sin(k_{2P+n}x_p)=-\sin(k_n x_p),
      \quad
      \sin(k_{2Pj}x_p)=0.
    \end{equation}
    The sampled shapes are therefore $4P$-periodic in $n$, the modes
    $n=2Pj$ are invisible to the converter set (\emph{dead} modes), and
    every mode $n>P$ \emph{not} of the form $2Pj$ coincides at the sites,
    up to sign, with a canonical alias in $\{1,\dots,P\}$.
\end{enumerate}
\end{proposicion}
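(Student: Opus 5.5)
Both parts reduce to trigonometric sums on the midpoint grid $\theta_p=(p-\tfrac12)\pi/P$, so we work directly with the identity $k_nx_p=n\theta_p$. For part (i) we use the product-to-sum formula
\[
\sin(n\theta_p)\sin(m\theta_p)=\tfrac12\bigl[\cos((n-m)\theta_p)-\cos((n+m)\theta_p)\bigr].
\]
This reduces everything to evaluating the cosine sum $C(q)=\sum_{p=1}^{P}\cos(q\theta_p)$ for integers $q$. Writing $\cos(q\theta_p)=\Real e^{\Bt q\theta_p}$ in the field $\CC_{\Bt}$, the sum becomes a geometric series with ratio $e^{\Bt q\pi/P}$. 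For $q\not\equiv0 \pmod{2P}$ its value is
\[
e^{\Bt q\pi/(2P)}\,\frac{1-e^{\Bt q\pi}}{1-e^{\Bt q\pi/P}}
=\frac{1-(-1)^q}{e^{-\Bt q\pi/(2P)}-e^{\Bt q\pi/(2P)}}
=\frac{(1-(-1)^q)\,\Bt}{2\sin(q\pi/(2P))}.
\]
This quantity is purely imaginary, so $C(q)=0$ whenever $q\not\equiv 0\pmod{2P}$. If instead $q=2Pj$, then $\cos(q\theta_p)=\cos((2p-1)j\pi)=(-1)^j$, which gives $C(q)=(-1)^jP$.

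We now apply this to $q=n-m$ and $q=n+m$ with $1\le n,m\le P$. The difference satisfies $|n-m|\le P-1<2P$, so $C(n-m)$ is nonzero only when $n=m$, where it equals $P$. The sum satisfies $2\le n+m\le 2P$, so $C(n+m)$ is nonzero only when $n+m=2P$, which forces $n=m=P$, and there $C(2P)=-P$.

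Combining the two terms gives the three cases of \eqref{eq:ortogonalidad-dst}. For $n\neq m$ both cosine sums vanish, so the result is $0$. For $n=m\le P-1$ the result is $\tfrac12(P-0)=P/2$. For $n=m=P$ the result is $\tfrac12(P-(-P))=P$.

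The step most prone to error is the boundary bookkeeping. We must check that $n+m=2P$ occurs only on the diagonal, and we must track the sign $(-1)^j$ at $q=2P$ correctly. Both are elementary once the closed form of $C(q)$ is in hand.

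For part (ii) we use $2P\theta_p=(2p-1)\pi$, an odd multiple of $\pi$. This gives $\sin((2Pj\pm n)\theta_p)=\sin((2p-1)j\pi\pm n\theta_p)$.

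Expanding with the angle-addition formula, $\sin((2p-1)j\pi)=0$ and $\cos((2p-1)j\pi)=(-1)^j$, so the expression equals $\pm(-1)^j\sin(n\theta_p)$. Taking $j=1$ gives $\sin(k_{2P-n}x_p)=\sin(k_nx_p)$ and $\sin(k_{2P+n}x_p)=-\sin(k_nx_p)$. Setting $n=0$ gives $\sin(k_{2Pj}x_p)=0$.

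Periodicity with period $4P$ follows by composing two shifts by $2P$. For the alias claim, write a mode $n>P$ with $n\notin 2P\mathbb{Z}$ as $n=2Pj\pm n'$ with $1\le n'\le P$, reducing modulo $2P$ and reflecting if necessary. The identities above then show that mode $n$ coincides at the sites, up to the sign $\pm(-1)^j$, with the canonical alias $n'$.
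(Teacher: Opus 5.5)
Your proposal is correct and follows essentially the same route as the paper. Both arguments reduce part (i) via product-to-sum to the cosine sum $\sum_p\cos(q\theta_p)$, which a geometric series shows vanishes unless $q=2Pj$ (where it equals $(-1)^jP$), and both obtain part (ii) by angle addition using $2P\theta_p=(2p-1)\pi$. The only cosmetic differences are that you use $\Bt$ instead of $i$ as the imaginary unit, handle the even and odd cases at once through the factor $1-(-1)^q$, and prove the general-$j$ identity $\sin((2Pj\pm n)\theta_p)=\pm(-1)^j\sin(n\theta_p)$ directly rather than iterating the $j=1$ case.
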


\begin{proof}
The key computation is the Dirichlet-kernel sum
$\Sigma_q:=\sum_{p=1}^{P}\cos(q\theta_p)$ for integer $q\ge0$. Let
$w=e^{iq\pi/(2P)}$ (an ordinary complex exponential: the geometric algebra
plays no role in this scalar lemma). Since $q\theta_p=q(2p-1)\pi/(2P)$, we
have $\cos(q\theta_p)=\Real w^{2p-1}$ and the sum is geometric in $w^2$.
If $q=2Pj$, then $w^{2p-1}=e^{ij(2p-1)\pi}=(-1)^{j}$ for every $p$, so
$\Sigma_{2Pj}=P(-1)^{j}$ and, taking imaginary parts,
$\sum_p\sin(2Pj\,\theta_p)=0$. If $q\not\equiv0\pmod{2P}$, then
$w^2\neq1$ and
\begin{equation}
\label{eq:demo-dirichlet}
  \sum_{p=1}^{P}w^{2p-1}
  =w\,\frac{w^{2P}-1}{w^{2}-1},
  \qquad w^{2P}=e^{iq\pi}=(-1)^{q}.
\end{equation}
For $q$ even, $w^{2P}=1$ and the sum vanishes identically. For $q$ odd,
$w^{2P}=-1$ and \eqref{eq:demo-dirichlet} equals
$-2w/(w^{2}-1)=-2/(w-w^{-1})=i/\sin\bigl(q\pi/(2P)\bigr)$, which is purely
imaginary; hence $\Sigma_q=0$ and, as a byproduct,
$\sum_p\sin(q\theta_p)=1/\sin\bigl(q\pi/(2P)\bigr)$ for $q$ odd. In
summary, $\Sigma_q=0$ unless $q=2Pj$, in which case $\Sigma_q=P(-1)^{j}$.

(i) By the product-to-sum identity,
$\sin(n\theta_p)\sin(m\theta_p)
=\tfrac12\bigl[\cos((n-m)\theta_p)-\cos((n+m)\theta_p)\bigr]$, so the
left-hand side of \eqref{eq:ortogonalidad-dst} equals
$\tfrac12(\Sigma_{n-m}-\Sigma_{n+m})$ (with $\Sigma_{-q}=\Sigma_q$). For
$1\le n,m\le P$ one has $|n-m|\le P-1<2P$, so $\Sigma_{n-m}$ contributes
$P\,\delta_{nm}$ (the multiple $q=0$, $j=0$), and $2\le n+m\le2P$, so the
only multiple of $2P$ is $n+m=2P$, attained within the range only at
$n=m=P$, where $\Sigma_{2P}=-P$. Combining: $\tfrac12(P-0)=P/2$ for
$n=m\le P-1$; $\tfrac12(P-(-P))=P$ for $n=m=P$; and $0$ otherwise.

(ii) On the midpoint grid, $2P\theta_p=(2p-1)\pi$, so
$\sin(2P\theta_p)=0$ and $\cos(2P\theta_p)=-1$. The angle-addition
formulas then give
$\sin\bigl((2P\mp n)\theta_p\bigr)
=\mp\cos(2P\theta_p)\sin(n\theta_p)=\pm\sin(n\theta_p)$,
which are the first two identities of \eqref{eq:alias}, and
$\sin(2Pj\,\theta_p)=\sin\bigl(j(2p-1)\pi\bigr)=0$ is the third. Applying
the reflection $n\mapsto2P-n$ and the shift $n\mapsto n+2P$ repeatedly
shows that the sampled shapes are $4P$-periodic in $n$ and that every mode
$n>P$ folds onto a canonical alias in $\{1,\dots,P\}$ up to sign.
\end{proof}

\begin{corolario}[Continuum limit]
\label{cor:limite-continuo}
Let $\bar u(t,\cdot)\in C^2[0,L]$ be a target injection density and let
each converter on the midpoint grid carry its cell share,
$I_p(t)=\bar u(t,x_p)\,L/P$. Then, for every fixed $n$,
\begin{equation}
  U_n^{\mathrm{disc}}(t)
  =\frac{2}{P}\sum_{p=1}^{P}\bar u(t,x_p)\sin(k_n x_p)
  \;\xrightarrow[P\to\infty]{}\;
  U_n^{\mathrm{cont}}(t)
  =\frac{2}{L}\int_0^L \bar u(t,x)\sin(k_n x)\,\dd x,
\end{equation}
with error $O(P^{-2})$. For a constant density $\bar u\equiv u_0$ the
comparison is exact in closed form: for $n$ odd,
\begin{equation}
\label{eq:Un-cerrada}
  U_n^{\mathrm{disc}}
  =\frac{2u_0/P}{\sin\bigl(n\pi/(2P)\bigr)},
  \qquad
  U_n^{\mathrm{cont}}=\frac{4u_0}{n\pi},
  \qquad
  \frac{U_n^{\mathrm{disc}}}{U_n^{\mathrm{cont}}}
  =\frac{n\pi/(2P)}{\sin\bigl(n\pi/(2P)\bigr)}\ge1,
\end{equation}
and both coefficients vanish for $n$ even. Hypothesis (H2) of
\cref{hip:linea} thereby acquires a quantitative convergence rate and an
exact special case.
\end{corolario}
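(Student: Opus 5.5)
The plan is to recognise $U_n^{\mathrm{disc}}$ as a composite midpoint quadrature rule applied to $h(x)=\bar u(t,x)\sin(k_nx)$. Then the general estimate follows from the standard midpoint error bound. The constant-density case follows from the byproduct of the Dirichlet-kernel computation in the proof of \cref{prop:muestreo}.

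\textbf{Step 1 (discrete coefficient).} Substitute $I_p=\bar u(t,x_p)L/P$ into \eqref{eq:Un-discreta} to get
\[
U_n^{\mathrm{disc}}=\frac{2}{L}\sum_{p}\frac{L}{P}\,\bar u(t,x_p)\sin(k_nx_p)=\frac{2}{P}\sum_p h(x_p).
\]

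\textbf{Step 2 (convergence rate).} Since $x_p=(p-\tfrac12)L/P$ are the midpoints of the cells $[(p-1)L/P,\,pL/P]$ of width $L/P$, the sum $\frac{L}{P}\sum_p h(x_p)$ is the composite midpoint rule for $\int_0^L h\,\dd x$. Hence
\[
\Bigl|\tfrac{L}{P}\textstyle\sum_p h(x_p)-\int_0^L h\,\dd x\Bigr|\le \frac{L^3}{24P^2}\,\max_{[0,L]}|h''|.
\]
Because $\bar u(t,\cdot)\in C^2$, $h\in C^2$, and $\max|h''|$ is bounded by a constant depending on $n$, $L$ and the $C^2$ norm of $\bar u$. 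Multiplying by $2/L$ gives $|U_n^{\mathrm{disc}}-U_n^{\mathrm{cont}}|=O(P^{-2})$ for fixed $n$. The constant grows like $k_n^2$, so the rate is not uniform in $n$, which is consistent with "for every fixed $n$".

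\textbf{Step 3 (constant density).} For $\bar u\equiv u_0$ we have $U_n^{\mathrm{disc}}=\frac{2u_0}{P}\sum_p\sin(n\theta_p)$ with $\theta_p=(p-\tfrac12)\pi/P$.

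For $n$ odd, the proof of \cref{prop:muestreo} gave $\sum_p\sin(q\theta_p)=1/\sin(q\pi/(2P))$, provided $q\not\equiv0\pmod{2P}$; this always holds for odd $q$. Substituting yields the stated closed form.

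For $n$ even, write $n=2m$. Either $n\equiv 0\pmod{2P}$, in which case the sine sum vanishes by the $q=2Pj$ case of that proof, or the geometric sum vanishes identically since $w^{2P}=1$. In both cases the full complex sum has zero imaginary part, so $U_n^{\mathrm{disc}}=0$.

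The continuum coefficient is elementary:
\[
\frac{2u_0}{L}\int_0^L\sin(n\pi x/L)\,\dd x=\frac{2u_0}{n\pi}\bigl(1-(-1)^n\bigr),
\]
which equals $4u_0/(n\pi)$ for $n$ odd and $0$ for $n$ even.

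The ratio is then $\frac{2/P}{\sin(n\pi/2P)}\cdot\frac{n\pi}{4}=\frac{n\pi/(2P)}{\sin(n\pi/(2P))}$. It is $\ge1$ by $0<\sin y\le y$ for $y\in(0,\pi)$. This requires $n<2P$, so that $\sin(n\pi/2P)>0$. For larger odd $n$ the sine may be negative and the inequality should be read in absolute value.

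\textbf{Main obstacle.} There is essentially no hard step. The only care needed is the range caveat on $n$ in the ratio inequality, and making explicit that the $O(P^{-2})$ constant depends on $n$ through $h''$. The even-$n$ case also needs both sub-cases of the Dirichlet computation to be checked.
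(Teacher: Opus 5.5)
Your proposal is correct and follows the paper's proof essentially step for step: the composite midpoint rule with the $C^2$ error bound gives the $O(P^{-2})$ rate, the odd/even row sums come from the Dirichlet-kernel computation in the proof of \cref{prop:muestreo}, the elementary integral gives $U_n^{\mathrm{cont}}$, and $\sin\xi\le\xi$ gives the ratio bound. Your caveat that the bound $\ge1$ needs $\sin(n\pi/(2P))>0$ (e.g.\ $n<2P$) is a legitimate sharpening: the paper uses $\sin\xi\le\xi$ only on $[0,\pi/2]$, i.e.\ implicitly for $n\le P$, and the unrestricted claim fails for odd $n$ with $\sin(n\pi/(2P))<0$ (e.g.\ $P=1$, $n=3$).
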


\begin{proof}
The sum $U_n^{\mathrm{disc}}$ is the composite midpoint rule applied to
the integral $U_n^{\mathrm{cont}}$ on $P$ equal cells of width $L/P$; the
classical midpoint-rule error bound for $C^2$ integrands gives the
$O(P^{-2})$ rate. For $\bar u\equiv u_0$,
$U_n^{\mathrm{disc}}=(2u_0/P)\sum_p\sin(n\theta_p)$, and the row sums
computed in the proof of \cref{prop:muestreo} give
$\sum_p\sin(n\theta_p)=1/\sin\bigl(n\pi/(2P)\bigr)$ for $n$ odd and $0$
for $n$ even. On the continuous side,
$U_n^{\mathrm{cont}}=(2u_0/L)\int_0^L\sin(k_nx)\,\dd x
=2u_0\bigl(1-(-1)^n\bigr)/(n\pi)$, which is $4u_0/(n\pi)$ for $n$ odd and
$0$ for $n$ even. The ratio in \eqref{eq:Un-cerrada} follows, and the
bound $\sin\xi\le\xi$ for $\xi\in[0,\pi/2]$ shows it is $\ge1$, with limit
$1$ as $n/P\to0$.
\end{proof}

Whether a given mode can be actuated at all from the converter sites is
quantified by the \emph{modal controllability score}
\begin{equation}
\label{eq:score-controlabilidad}
  \gamma_n
  =\Biggl(\frac{1}{P}\sum_{p=1}^{P}\sin^2(k_n x_p)\Biggr)^{1/2}
  \in[0,1],
\end{equation}
which vanishes if and only if every converter sits on a node of mode $n$;
in that case mode $n$ is \emph{uncontrollable} from the converter set,
whatever the control law. The situation is not pathological: on the
uniform \emph{node} grid $x_p=pL/(P+1)$ one has
$\sin(k_{P+1}x_p)=\sin(p\pi)=0$ for all $p$, so mode $n=P+1$ (and its
multiples) is exactly uncontrollable. Together with
\cref{prop:muestreo}(ii), this expresses a spatial Nyquist limit of
actuation: from $P$ converter sites at most the $P$ lowest modes can be
actuated independently, and any injected content in modes $n>P$ folds back
onto the sub-Nyquist band $n\le P$. The score
\eqref{eq:score-controlabilidad} is the actuation dual of the modal
observability metric of \cref{obs:observabilidad} in \cref{sec:fallos},
which plays the same role on the sensor side.

\begin{observacion}[Correspondence with the implementation]
\label{obs:codigo-linea}
The expressions in this section correspond directly to the module
\texttt{geolaplace.line\_model} of the reference implementation. In
particular, $\Zimp(\st)$ and $\Yadm(\st)$ from \eqref{eq:ZY} are computed in the
routines \texttt{Z} and \texttt{Y}; the determinant $\Deter(\st,\sx)$ from
\eqref{eq:determinante} in \texttt{Delta}; the distributed transfer function
\eqref{eq:G-voltage} in \texttt{G\_voltage}; the modal denominator
$D_n(\st)$ from \eqref{eq:Dn} in \texttt{modal\_denominator}; the modal
transfer function $G_n(\st)=\Zimp(\st)/D_n(\st)$ from \eqref{eq:Gn} in
\texttt{modal\_transfer} (with numerator $\Zimp$, not $1/D_n$); its roots in
\texttt{modal\_poles}; and the ideal dispersion relation \eqref{eq:disp-k2} in
\texttt{dispersion\_ideal}. The discrete-converter results of
\cref{subsec:inyeccion-discreta} (and \cref{prop:control-droop-discreto} of
\cref{sec:control}) are implemented by the routines
\path{converter_positions}, \path{sampling_matrix}, \path{modal_injection},
\path{modal_coupling_matrix}, \path{modal_controllability_score},
\path{alias_mode}, and \path{discrete_droop_poles} of the module
\texttt{geolaplace.converters}. This correspondence makes it possible to
reproduce and verify numerically each analytical result.
\end{observacion}

\section{Distributed control via admittance shaping}
\label{sec:control}

Starting from the space-time dispersion relation of the line of distributed
converters obtained in \cref{sec:linea}, we address the design of a control law
that acts on the shunt injection of each node. The idea, inspired by the virtual
impedance/admittance control common in DC microgrids
\cite{dragicevic2016dc,meng2017dynamics}, is to \emph{shape} the effective shunt
admittance of the line in order to modify its dispersion relation and, with it,
its closed-loop dynamic behavior. The geometric Laplace transform is
particularly convenient here because it keeps the temporal and spatial phases
separate, so that a single controller can be expressed compactly as a rational
function $K(\st,\sx)$ of the two geometric variables.

Let us recall the incremental line model of \cref{sec:linea},
\begin{equation}
  \px v + \ell\,\pt i + r\,i = 0,
  \qquad
  \px i + c\,\pt v + g\,v = u,
  \label{eq:control-line}
\end{equation}
whose geometric transform, with the longitudinal impedance
$\Zimp(\st)=r+\ell\st$ and the shunt admittance $\Yadm(\st)=g+c\st$, leads to the
algebraic system $V\sx+\Zimp I=0$, $I\sx+\Yadm V=U$ and to the dispersion
relation
\begin{equation}
  \Deter(\st,\sx)=\sx^2-\Zimp(\st)\,\Yadm(\st).
  \label{eq:control-delta-open}
\end{equation}
The injection $u$ is now the \emph{actuation variable} of the converters.

\subsection*{Control law and closed-loop admittance}

We consider a feedback law on the local voltage of the form
\begin{equation}
  u = r_u - K[v],
  \label{eq:control-law-time}
\end{equation}
where $r_u$ is a reference (setpoint or scheduled injection) and $K[\cdot]$ is a
linear space-time operator that is causal in time. In the transformed domain,
\cref{eq:control-law-time} reads
\begin{equation}
  U = R_u - K(\st,\sx)\,V,
  \label{eq:control-law-transf}
\end{equation}
with $R_u=\transf{r_u}$ (the transform of the reference; the subscript $u$
distinguishes it from the fault residual $R$ of \cref{sec:fallos}). Substituting
\cref{eq:control-law-transf} into the second algebraic equation of the model,
$I\sx+\Yadm V=U$, we obtain $I\sx+[\Yadm(\st)+K(\st,\sx)]V=R_u$, so that the
controller acts \emph{additively} on the shunt admittance. We define the
\emph{closed-loop admittance}
\begin{equation}
  \Yadm_c(\st,\sx)=g+c\st+K(\st,\sx)=\Yadm(\st)+K(\st,\sx).
  \label{eq:control-Yc}
\end{equation}
Eliminating the current with $V\sx+\Zimp I=0$, the closed-loop determinant
inherits the same structure as \cref{eq:control-delta-open}, with $\Yadm$ simply
replaced by $\Yadm_c$:
\begin{equation}
  \boxed{\;\Deterc(\st,\sx)=\sx^2-\Zimp(\st)\,\bigl[\Yadm(\st)+K(\st,\sx)\bigr].\;}
  \label{eq:control-delta-closed}
\end{equation}
The entire closed-loop space-time dynamics is encoded in the zeros of $\Deterc$;
the design therefore amounts to choosing $K$ so as to place those zeros
conveniently.

\subsection{Distributed virtual conductance}
\label{subsec:control-virtual}

The simplest case is a pure proportional voltage feedback, $K=k_v$ constant,
which in the physical domain corresponds to the local law $u=r_u-k_v\,v$.
Physically, each converter injects a current proportional to its local voltage,
behaving like an additional conductance distributed along the line.

\begin{proposicion}[Virtual conductance]
\label{prop:control-virtual}
Let $K(\st,\sx)=k_v$ with $k_v\in\RR$. Then the closed-loop shunt admittance is
\begin{equation}
  \Yadm_c(\st,\sx)=(g+k_v)+c\st,
  \label{eq:control-virtual-Yc}
\end{equation}
that is, it is equivalent to the admittance of an \emph{identical} line in which
the distributed conductance changes from $g$ to $g+k_v$, and the closed-loop
determinant is
\begin{equation}
  \Deterc(\st,\sx)=\sx^2-\Zimp(\st)\,\bigl[(g+k_v)+c\st\bigr].
  \label{eq:control-virtual-delta}
\end{equation}
If $k_v>0$, the incremental shunt conductance increases by $k_v$ (from $g$ to
$g+k_v$) and, with it, the modal damping. If in addition $g+k_v>0$, the
effective shunt branch is strictly dissipative (incrementally passive).
\end{proposicion}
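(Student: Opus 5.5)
The first two claims are algebraic substitutions into the closed-loop formulas already derived. With $K(\st,\sx)=k_v$ constant, the definition \eqref{eq:control-Yc} gives $\Yadm_c=g+c\st+k_v=(g+k_v)+c\st$, which is \eqref{eq:control-virtual-Yc}. This is exactly the shunt admittance \eqref{eq:ZY} of the same line with $g$ replaced by $g+k_v$, since $r$, $\ell$, $c$ and hence $\Zimp$ are unchanged. Substituting into the boxed determinant \eqref{eq:control-delta-closed} then gives \eqref{eq:control-virtual-delta}. I would also check the elimination that produced \eqref{eq:control-delta-closed}: it is the Cramer/adjugate argument of \cref{prop:transferencia} with $\Yadm$ replaced by $\Yadm_c$, which stays valid because $k_v\in\RR\subset\mathcal{A}$ commutes with everything.

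For the damping claim I would go to the finite line of \cref{sec:linea}. Repeating the derivation of \eqref{eq:Dn} with $g\mapsto g+k_v$ gives the closed-loop modal denominator
\[
D_n^{c}(\st)=\ell c\,\st^{2}+\bigl(rc+\ell(g+k_v)\bigr)\st+\bigl(r(g+k_v)+k_n^{2}\bigr).
\]
Working in $\CC_{\Bt}$ via \cref{obs:equivalencia-1d}, the sum of the two modal poles is $-(rc+\ell(g+k_v))/(\ell c)$. So the mean real part of the poles is $-\tfrac{r}{2\ell}-\tfrac{g+k_v}{2c}$, which decreases strictly by $k_v/(2c)$ when $k_v>0$. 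I would state this as the meaning of ``increased modal damping'': the decay rate $\alpha_n=\tfrac12\bigl(\tfrac{r}{\ell}+\tfrac{g+k_v}{c}\bigr)$ shared by every underdamped mode increases uniformly in $n$.

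The main obstacle is making ``damping increases'' precise without overclaiming. Two issues arise.
\begin{itemize}[leftmargin=1.6em,itemsep=2pt]
  \item When the roots are real (overdamped modes), the slowest pole need not move monotonically in the naive sense.
  \item The constant term $r(g+k_v)+k_n^2$ also changes with $k_v$.
\end{itemize}
I would therefore commit to the robust statement: the trace, i.e.\ the total modal damping coefficient $rc+\ell(g+k_v)$, increases by $\ell k_v$. For underdamped modes this equals an increase of the common real-part decay.

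Finally, for passivity, I would note that the effective shunt branch has admittance $(g+k_v)+c\st$. Its real part on the axis $\st=\Bt\omega$ is $g+k_v>0$ for all $\omega$, and the associated power $(g+k_v)v^2+\tfrac{d}{dt}\bigl(\tfrac12 c v^2\bigr)$ has a strictly positive dissipation term. This shows that the branch is strictly dissipative, i.e.\ incrementally passive.
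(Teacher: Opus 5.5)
Your proposal is correct and follows the paper's own argument: direct substitution of $K=k_v$ into \eqref{eq:control-Yc} and \eqref{eq:control-delta-closed}, the damping claim read as the increase by $\ell k_v$ of the $\st$-coefficient of the modal denominator \eqref{eq:control-modal-Dc-expanded}, and passivity tied to the positivity of the conductive term $(g+k_v)v^2$. Restricting the damping claim to the trace (the common real part $-\tfrac12\bigl(r/\ell+(g+k_v)/c\bigr)$ of underdamped pairs) is the precise form of what is actually proved, and your caveat is warranted: when both modal poles are real and lie below $-r/\ell$, the slower one moves to the right as $k_v$ grows, so the paper's wording that the poles shift toward more negative real parts holds on average rather than pole by pole.
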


\begin{proof}
Substituting $K=k_v$ into \cref{eq:control-Yc} immediately yields
\cref{eq:control-virtual-Yc}, since $\Yadm(\st)+k_v=(g+c\st)+k_v=(g+k_v)+c\st$,
which is the admittance of a line with conductance $g+k_v$ and the same
capacitance $c$. Expression \cref{eq:control-virtual-delta} follows from carrying
\cref{eq:control-virtual-Yc} into \cref{eq:control-delta-closed}. As for the
effect on damping, the \emph{incremental conductive term} of the shunt branch in
\cref{eq:control-line} under the law $u=r_u-k_v v$ changes from $g\,v^2$ to
$(g+k_v)\,v^2$, whose coefficient is strictly larger when $k_v>0$. This increase
in the conductive coefficient raises the $\st$-term of the modal denominator (by
$\ell k_v$, see \cref{eq:control-modal-Dc-expanded}), which \emph{increases the
modal damping} and shifts the poles toward more negative real parts (see
\cref{subsec:control-modal}). That the conductive coefficient increases does not,
by itself, imply net passivity of the branch: this additionally requires
$g+k_v>0$.
\end{proof}

\begin{observacion}
The virtual conductance increases the conductance by $k_v$ ($g\mapsto g+k_v$) and
shifts the poles toward greater damping; it is implemented with a single local
voltage measurement per node, without additional dynamics or spatial couplings.
The ``dissipative'' character needs one qualification: the effective shunt branch
is strictly \emph{passive} only when the resulting conductance is positive,
$g+k_v>0$. In the base case ($g=-0.05\unit{S/m}$, $k_v=0.04\unit{S/m}$) one has
$g+k_v=-0.01\unit{S/m}<0$: the branch remains incrementally active, although the
control suffices to stabilize the modes considered (see \cref{sec:resultados}).
Its limitation is that it only shifts the constant term $g$, without acting on
the capacitive inertia $c$ or on the spatial structure of the line.
\end{observacion}

The virtual conductance admits an \emph{exact} discrete realization: when the
injection is produced by $P$ converters at discrete sites, as in
\cref{subsec:inyeccion-discreta} of \cref{sec:linea}, the natural per-converter
droop law reproduces the continuous closed loop without any homogenization
error on the sub-Nyquist band.

\begin{proposicion}[Exact discrete realization of the virtual conductance]
\label{prop:control-droop-discreto}
Let the injection be produced by $P$ converters at the sites
$x_p=(p-\tfrac12)L/P$ of the uniform midpoint grid, each implementing the
local droop law
\begin{equation}
  I_p(t)=-\,\kappa_p\,v(t,x_p),
  \qquad
  \kappa_p=k_v\,\frac{L}{P},
  \label{eq:droop-discreto}
\end{equation}
that is, the virtual conductance $k_v$ (in $\unit{S/m}$) aggregated over the
cell of length $L/P$ served by converter $p$. Expanding $v$ in the modes
$\phi_n(x)=\sin(k_nx)$ and retaining $N$ modes, the feedback contributes
$U_n^{\mathrm{fb}}=-\sum_{m=1}^{N}C_{nm}V_m$ to the modal series coefficients
\eqref{eq:Un-discreta}, with the coupling matrix
\begin{equation}
  C=\frac{2}{L}\,S\,\operatorname{diag}(\kappa_1,\dots,\kappa_P)\,
    S^{\mathsf T}\in\RR^{N\times N},
  \qquad
  S_{np}=\sin(k_nx_p),
  \label{eq:acoplo-alias}
\end{equation}
so that the closed-loop modal system is
$D_n(\st)V_n+\Zimp(\st)\sum_{m}C_{nm}V_m=\Zimp(\st)R_{u,n}$. On the midpoint
grid with the gains \eqref{eq:droop-discreto}:
\begin{enumerate}[label=(\roman*),leftmargin=*]
  \item if $N\le P$, the matrix $C$ is exactly diagonal,
        $C=k_v\operatorname{diag}(1,\dots,1)$ for $N\le P-1$ and
        $C=k_v\operatorname{diag}(1,\dots,1,2)$ for $N=P$; in particular,
        for every mode $n\le P-1$ the closed-loop modal denominator is
        \emph{exactly} $D_n(\st)+k_v\,\Zimp(\st)$, identical to the
        continuous virtual conductance of \cref{prop:control-virtual},
        while mode $P$ sees the doubled gain $2k_v$;
  \item if $N>P$, the rows of $S$ repeat up to sign and the feedback
        couples each mode with its alias partner with full diagonal
        strength; for example, $C_{n,2P-n}=k_v$ for $1\le n\le P-1$
        whenever $2P-n\le N$.
\end{enumerate}
\end{proposicion}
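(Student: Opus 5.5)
The proof has three parts: derive the coupling matrix from the point-injection formula, then read off (i) and (ii) from \cref{prop:muestreo}.

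\textbf{Deriving the coupling matrix.} Substitute the droop law \eqref{eq:droop-discreto} into the discrete injection \eqref{eq:inyeccion-discreta}, temporally transformed. This gives $I_p(\st)=-\kappa_p V(\st,x_p)$. Expand the line voltage in the retained modes as $V(\st,x_p)=\sum_{m=1}^{N}V_m(\st)\sin(k_mx_p)=\sum_m S_{mp}V_m$. Inserting this into the sine coefficients \eqref{eq:Un-discreta} yields
\[
U_n^{\mathrm{fb}}=-\tfrac{2}{L}\sum_p S_{np}\,\kappa_p\sum_m S_{mp}V_m ,
\]
which is exactly $-\sum_m C_{nm}V_m$ with $C=\tfrac{2}{L}S\operatorname{diag}(\kappa)S^{\mathsf T}$. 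Adding the reference term $R_{u,n}$ and inserting the total modal injection into the modal equation \eqref{eq:modo-n}, $D_nV_n=\Zimp U_n$, gives the stated closed-loop modal system. Commutativity of $\mathcal{A}$ is used only to move the scalar $\Zimp$ past the real matrix $C$.

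\textbf{Part (i).} With uniform gains $\kappa_p=k_vL/P$, the coupling matrix simplifies to
\[
C_{nm}=\tfrac{2k_v}{P}\sum_p\sin(k_nx_p)\sin(k_mx_p).
\]
For $n,m\le N\le P$, the discrete orthogonality \eqref{eq:ortogonalidad-dst} of \cref{prop:muestreo}(i) applies directly. Off-diagonal entries vanish. Diagonal entries equal $\tfrac{2k_v}{P}\cdot\tfrac{P}{2}=k_v$ for $n\le P-1$ and $\tfrac{2k_v}{P}\cdot P=2k_v$ for $n=P$. The closed-loop modal equation for $n\le P-1$ is therefore $(D_n+k_v\Zimp)V_n=\Zimp R_{u,n}$. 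Expanding with \eqref{eq:Dn}, $D_n+k_v\Zimp=k_n^2+\Zimp[(g+k_v)+c\st]$, which is the modal denominator obtained from \cref{prop:control-virtual} by the Helmholtz projection \eqref{eq:helmholtz}.

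\textbf{Part (ii).} For $N>P$, I will use the aliasing identity \eqref{eq:alias}, $\sin(k_{2P-n}x_p)=\sin(k_nx_p)$. This makes row $2P-n$ of $S$ equal to row $n$, so
\[
C_{n,2P-n}=\tfrac{2k_v}{P}\sum_p\sin^2(k_nx_p)=\tfrac{2k_v}{P}\cdot\tfrac{P}{2}=k_v \quad\text{for }1\le n\le P-1,
\]
again by (i). The identity $\sin(k_{2P+n}x_p)=-\sin(k_nx_p)$ gives the sign repetitions of rows claimed in the statement.

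The only delicate point is the derivation of $C$. The feedback samples the \emph{truncated} modal expansion of $v$ at the sites, so the statement is really about the Galerkin-truncated system. I would make this explicit rather than claim pointwise convergence of the sine series at $x_p$. Beyond that, every step is a direct application of \cref{prop:muestreo}.
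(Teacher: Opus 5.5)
Your proposal is correct and follows the paper's proof essentially step for step. You obtain $C$ by substituting the sampled modal expansion of $v$ into the droop law and the sine coefficients \eqref{eq:Un-discreta}, then read off (i) from the discrete orthogonality \eqref{eq:ortogonalidad-dst}, and (ii) from the alias identity $\sin(k_{2P-n}x_p)=\sin(k_nx_p)$ combined with that same orthogonality (your remark that the statement concerns the $N$-mode truncated system is a fair clarification of what the paper leaves implicit in ``retaining $N$ modes'').
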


\begin{proof}
Expand the voltage in the modes, so that at each site
$v(t,x_p)=\sum_{m}V_m(t)\sin(k_mx_p)$, and let the droop law
\eqref{eq:droop-discreto} set $I_p=-\kappa_p\,v(t,x_p)$. Substituting the
resulting currents into the series coefficients \eqref{eq:Un-discreta},
\begin{equation}
  U_n^{\mathrm{fb}}
  =\frac{2}{L}\sum_{p=1}^{P}I_p\sin(k_nx_p)
  =-\sum_{m=1}^{N}
    \underbrace{\frac{2}{L}\sum_{p=1}^{P}\kappa_p\,
    \sin(k_nx_p)\sin(k_mx_p)}_{C_{nm}}\,V_m,
\end{equation}
which is \eqref{eq:acoplo-alias}. Carrying $U_n=R_{u,n}+U_n^{\mathrm{fb}}$
into the modal equation \eqref{eq:modo-n},
$D_nV_n=\Zimp\,(R_{u,n}-\sum_mC_{nm}V_m)$, gives the stated closed-loop
system. With constant gains $\kappa_p=k_vL/P$,
$C_{nm}=(2k_v/P)\sum_p\sin(k_nx_p)\sin(k_mx_p)$, and the discrete
orthogonality \eqref{eq:ortogonalidad-dst} of \cref{prop:muestreo}(i)
evaluates the sum exactly: $C_{nm}=k_v\delta_{nm}$ for $n,m\le P-1$,
$C_{PP}=2k_v$, and zero cross terms, which proves (i); for diagonal $C$ the
$n$-th equation decouples as $[D_n(\st)+k_v\Zimp(\st)]V_n=\Zimp(\st)R_{u,n}$,
the modal denominator of \cref{prop:control-virtual} (equivalently,
\eqref{eq:control-modal-Dc} with $K_n=k_v$). For (ii), the alias identities
\eqref{eq:alias} of \cref{prop:muestreo}(ii) give
$\sin(k_{2P-n}x_p)=\sin(k_nx_p)$ for all $p$, hence
$C_{n,2P-n}=(2k_v/P)\sum_p\sin^2(k_nx_p)=k_v$ for $1\le n\le P-1$ by
\eqref{eq:ortogonalidad-dst}.
\end{proof}

\begin{observacion}
Three consequences follow. (a) \cref{prop:control-droop-discreto}
upgrades the homogenization hypothesis from an approximation to an exact
statement: for uniformly placed converters with the droop law
\eqref{eq:droop-discreto}, the continuous closed-loop analysis of this
section is \emph{exact} on the sub-Nyquist band $n\le P-1$, not merely a
limit. (b) The validity of that reading requires the modal content of
interest to be confined to $n<P$: modes beyond the spatial actuation
Nyquist limit are aliased and coupled by the feedback, the actuation
counterpart of the sensor-side observability restriction
(\cref{obs:observabilidad} in \cref{sec:fallos}). (c) For non-uniform
placements, the natural choice $\kappa_p=k_v w_p$, with $w_p$ the widths of
the Voronoi cells of the sites, preserves the convergence to the continuous
loop as $P$ grows, in the quadrature sense of \cref{cor:limite-continuo},
but the exactness is lost: $C$ in \eqref{eq:acoplo-alias} is no longer
diagonal. The numerical behavior of both situations is examined in
\cref{subsec:discreto-result}.
\end{observacion}

\subsection{Space-time dynamic controller}
\label{subsec:control-dynamic}

To shape the admittance more fully, it is convenient to allow a controller that
depends on both geometric variables affinely in each and in their product:
\begin{equation}
  K(\st,\sx)=k_0+k_t\st+k_x\sx+k_{tx}\st\sx.
  \label{eq:control-dynamic-K}
\end{equation}
Since the algebra is commutative (bicomplex), the product $\st\sx$ is well
defined and commutes, so that $K$ is an ordinary multivariate polynomial.
Recalling that $\st\leftrightarrow\pt$ and $\sx\leftrightarrow\px$ under the
geometric transform with zero initial and boundary conditions for the
operational properties (see \cref{sec:transformada}), the law
\cref{eq:control-dynamic-K} corresponds in the physical domain to the operator
\begin{equation}
  K[v]=k_0\,v+k_t\,\pt v+k_x\,\px v+k_{tx}\,\pt\px v.
  \label{eq:control-dynamic-Ktime}
\end{equation}
Each term admits a direct physical interpretation:
\begin{itemize}
  \item $k_0\,v$ ($k_0$ term): constant virtual \emph{conductance}, as in
        \cref{subsec:control-virtual}; it modifies $g$.
  \item $k_t\,\pt v$ ($k_t\st$ term): \emph{temporal damping}, an injection
        proportional to the temporal rate of change of the local voltage; it
        acts on the effective capacitance $c$, modifying the inertia of the node.
  \item $k_x\,\px v$ ($k_x\sx$ term): first-order \emph{spatial coupling},
        proportional to the spatial gradient of the voltage; it is equivalent to
        an injection that compares the voltage with that of the neighboring nodes
        and contributes to a controlled transport or diffusion along the line.
  \item $k_{tx}\,\pt\px v$ ($k_{tx}\st\sx$ term): \emph{mixed space-time} term,
        simultaneously sensitive to the temporal variation and to the spatial
        gradient; it allows selective damping of the propagating components
        (traveling waves) without penalizing the quasi-static modes.
\end{itemize}

\begin{observacion}
The presence of the $\sx$ and $\st\sx$ terms is precisely what distinguishes
this design from a purely temporal controller: they require spatial information
(gradients), so they are realizable only if the nodes can estimate
neighbor-related quantities. With a single independent variable
($k_x=k_{tx}=0$) the controller reduces to $K=k_0+k_t\st$, equivalent to a
classical Laplace compensator with a geometric interpretation; the space-time
contribution appears only when the $\sx$ terms are activated.
\end{observacion}

\subsection{Design by target denominator}
\label{subsec:control-design}

The full shaping is formulated inversely: a desired closed-loop determinant
$\Deterd(\st,\sx)$ is specified and the controller that realizes it is solved
for. Setting $\Deterc=\Deterd$ in \cref{eq:control-delta-closed} and using
\cref{eq:control-delta-open},
\begin{equation}
  \sx^2-\Zimp(\st)\bigl[\Yadm(\st)+K(\st,\sx)\bigr]=\Deterd(\st,\sx),
\end{equation}
and since $\Deter(\st,\sx)=\sx^2-\Zimp(\st)\Yadm(\st)$, subtracting yields
$\Zimp(\st)\,K(\st,\sx)=\Deter(\st,\sx)-\Deterd(\st,\sx)$, whence

\begin{proposicion}[Controller by target denominator]
\label{prop:control-design}
Let $\Deterd(\st,\sx)$ be a target determinant and assume $\Zimp(\st)=r+\ell\st$
invertible in the region of interest (which holds except at $\st=-r/\ell$). Then
the unique controller of the form \cref{eq:control-law-transf} that imposes
$\Deterc=\Deterd$ is
\begin{equation}
  \boxed{\;K(\st,\sx)=\frac{\Deter(\st,\sx)-\Deterd(\st,\sx)}{\Zimp(\st)}.\;}
  \label{eq:control-design-K}
\end{equation}
\end{proposicion}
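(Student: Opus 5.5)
The proof is a direct algebraic argument in the commutative subalgebra $\mathcal{A}$, in two parts: existence (the formula works) and uniqueness (no other controller does).

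For existence, I will substitute the candidate \eqref{eq:control-design-K} into the closed-loop determinant \eqref{eq:control-delta-closed}. This gives
\[
\Deterc=\sx^2-\Zimp\Yadm-\Zimp\,\Zimp^{-1}(\Deter-\Deterd).
\]
By commutativity (\cref{prop:producto}) the order of the factors is irrelevant, so $\Zimp\,\Zimp^{-1}=1$. Using \eqref{eq:control-delta-open}, $\sx^2-\Zimp\Yadm=\Deter$, and therefore $\Deterc=\Deter-(\Deter-\Deterd)=\Deterd$.

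For uniqueness, I will follow the derivation just before the statement. Suppose some controller $K$ of the form \eqref{eq:control-law-transf} achieves $\Deterc=\Deterd$. Subtracting the open-loop relation \eqref{eq:control-delta-open} from the closed-loop one gives the identity
\[
\Zimp(\st)\,K(\st,\sx)=\Deter(\st,\sx)-\Deterd(\st,\sx)
\]
in $\mathcal{A}$, pointwise in $(\st,\sx)$. Multiplying both sides by $\Zimp^{-1}$, which exists by hypothesis, determines $K$ uniquely at every point of the region. Equivalently, $\Zimp$ is not a zero divisor there, so $\Zimp(K_1-K_2)=0$ forces $K_1=K_2$.

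The one step needing care, and the main obstacle, is justifying the invertibility claim, that $\Zimp(\st)=r+\ell\st$ fails to be invertible only at $\st=-r/\ell$. Since $\st=\rho_t+\Bt\omega$ lies in the plane $\{1,\Bt\}$, I will take its idempotent decomposition \eqref{eq:desc-zeta}. With $d=c=0$ this gives
\[
\zeta_1=\zeta_2=(r+\ell\rho_t)+\ell\omega\,\Bt .
\]
By \cref{prop:inverso}, $\Zimp$ is invertible if and only if this single complex number is nonzero. Assuming $\ell\neq0$, that holds unless $\omega=0$ and $\rho_t=-r/\ell$, that is, unless $\st=-r/\ell$. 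So, unlike generic elements of $\mathcal{A}$, $\Zimp$ never meets the isotropic cones of \cref{obs:divisores} except at that one real point, and the division in \eqref{eq:control-design-K} is legitimate everywhere else. The inverse is explicitly $\bigl((r+\ell\rho_t)-\ell\omega\Bt\bigr)/\bigl((r+\ell\rho_t)^2+\ell^2\omega^2\bigr)$.

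A final remark will note that uniqueness is pointwise in the transformed domain. It holds among all operators acting as multiplication by some $K(\st,\sx)$, not merely among polynomial ones like \eqref{eq:control-dynamic-K}; whether the resulting $K$ is causal or realizable is a separate question outside the statement.
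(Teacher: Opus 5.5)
Your proof is correct and follows the paper's argument. Both reduce $\Deterc=\Deterd$ to the identity $\Zimp(\st)K=\Deter-\Deterd$, then use the invertibility of $\Zimp(\st)$ both to solve for $K$ and to conclude $K=K'$ from $\Zimp(\st)(K-K')=0$. Your idempotent computation $\zeta_1=\zeta_2=(r+\ell\rho_t)+\ell\omega\,\Bt$ also checks the parenthetical claim that invertibility fails only at $\st=-r/\ell$ (for $\ell\neq0$), which the paper only asserts.
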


\begin{proof}
The equality $\Deterc=\Deterd$ is equivalent, as shown above, to
$\Zimp(\st)K=\Deter-\Deterd$. Since $\Zimp(\st)$ is invertible in the bicomplex
subalgebra except at its zeros, $K$ is solved for by multiplying by
$\Zimp(\st)^{-1}$, which gives \cref{eq:control-design-K}. Uniqueness is
immediate: any $K'$ satisfying $\Deterc=\Deterd$ satisfies
$\Zimp(\st)(K-K')=0$, and since $\Zimp(\st)$ is invertible we conclude $K=K'$.
\end{proof}

\begin{observacion}[Implementability]
\label{obs:control-implement}
Formula \cref{eq:control-design-K} is exact, but its physical realization is
subject to constraints that must be checked case by case:
\begin{itemize}
  \item \textbf{Temporal causality.} The quotient
        $[\Deter-\Deterd]/\Zimp(\st)$ must correspond, as a function of $\st$, to
        a causal and proper operator; otherwise it contains time derivatives of
        order higher than what is implementable and must be approximated by a
        causal filter (for example, derivatives with a low-pass filter).
  \item \textbf{Local or neighbor-based spatial measurements.} The dependence on
        $\sx$ translates into spatial derivatives $\px$, which each converter can
        reconstruct only from its own measurements and those of its immediate
        neighbors. High spatial orders demand a wide neighborhood and degrade the
        locality of the control.
  \item \textbf{Realization by discrete converters.} The real line is a
        \emph{discrete} set of nodes, but the impact depends on the term. The
        static droop ($k_v$, no spatial derivative) is realized \emph{exactly}
        by per-converter gains on the uniform midpoint grid
        (\cref{prop:control-droop-discreto}); the discretization error
        concerns the dynamic space-time terms (the $\sx$-dependent part of
        $K$), whose operators $\px$ and $\pt\px$ are approximated by finite
        differences between adjacent nodes, introducing an error that limits
        the effective spatial bandwidth.
  \item \textbf{Saturation and communication delays.} The injections are bounded
        by the capability of the converters, and the exchange of neighbor-based
        measurements incurs communication delays; both effects limit the
        aggressiveness of the achievable $\Deterd$ and may compromise stability
        if they are not accounted for in the design.
\end{itemize}
For these reasons, in practice one prefers to restrict $\Deterd$ to low-order
forms (those of \cref{subsec:control-dynamic}) and to verify a posteriori the
robustness against discretization, saturation, and delay.
\end{observacion}

\subsection{Modal control on a finite line}
\label{subsec:control-modal}

For the finite line $0\le x\le L$ with zero voltage at the ends, the
decomposition into the modes $\phi_n(x)=\sin(k_n x)$, $k_n=n\pi/L$, of
\cref{sec:linea} reduces the space-time problem to a family of independent
temporal problems. Each mode $n$ is controlled with a modal feedback law
\begin{equation}
  U_n=R_{u,n}-K_n(\st)\,V_n,
  \label{eq:control-modal-law}
\end{equation}
where $K_n(\st)$ is the projection of the controller onto mode $n$ and depends
only on the temporal variable $\st$. Recalling that the open-loop modal
denominator is $D_n(\st)=k_n^2+\Zimp(\st)\Yadm(\st)$, the effect of
\cref{eq:control-modal-law} on the denominator is obtained, as in
\cref{eq:control-delta-closed}, by adding $K_n$ to the admittance and keeping the
impedance factor $\Zimp(\st)=r+\ell\st$:
\begin{equation}
  D_{c,n}(\st)=D_n(\st)+(r+\ell\st)\,K_n(\st).
  \label{eq:control-modal-Dc}
\end{equation}

We specialize to a modal controller that is affine in time,
$K_n(\st)=k_{0n}+k_{tn}\st$, which combines virtual conductance ($k_{0n}$) and
temporal damping ($k_{tn}$). We expand \cref{eq:control-modal-Dc}. With
$\Zimp(\st)\Yadm(\st)=(r+\ell\st)(g+c\st)=\ell c\,\st^2+(rc+\ell g)\st+rg$, the
open-loop denominator is $D_n(\st)=\ell c\,\st^2+(rc+\ell g)\st+(k_n^2+rg)$. The
feedback term equals
\begin{align}
  (r+\ell\st)\,K_n(\st)
  &=(r+\ell\st)(k_{0n}+k_{tn}\st)\notag\\
  &=\ell k_{tn}\,\st^2+(r k_{tn}+\ell k_{0n})\,\st+r k_{0n}.
  \label{eq:control-modal-fb}
\end{align}
Adding \cref{eq:control-modal-fb} to $D_n(\st)$ and grouping by powers of $\st$,
\begin{equation}
  \boxed{\;D_{c,n}(\st)=\ell(c+k_{tn})\,\st^2
        +\bigl[r(c+k_{tn})+\ell(g+k_{0n})\bigr]\,\st
        +\bigl[k_n^2+r(g+k_{0n})\bigr].\;}
  \label{eq:control-modal-Dc-expanded}
\end{equation}
One observes that $k_{tn}$ shifts the effective capacitance $c\mapsto c+k_{tn}$
(inertia/damping) and $k_{0n}$ shifts the effective conductance
$g\mapsto g+k_{0n}$, exactly as anticipated by the interpretations of
\cref{subsec:control-dynamic}, now at the level of each mode.

\subsection*{Assignment of temporal geometric poles}

The natural modal design objective is to place the two temporal poles of mode
$n$ at a prescribed geometric conjugate pair
\begin{equation}
  \st{}_{,n}^{\pm}=-\alpha_n\pm\Bt\beta_n,
  \qquad \alpha_n>0,\ \beta_n\ge0,
  \label{eq:control-target-poles}
\end{equation}
where $\alpha_n$ sets the damping and $\beta_n$ the temporal geometric frequency.
The monic second-degree polynomial with those roots, using $\Bt^2=-1$, is
\begin{equation}
  (\st-\st{}_{,n}^{+})(\st-\st{}_{,n}^{-})
  =\st^2+2\alpha_n\st+(\alpha_n^2+\beta_n^2).
  \label{eq:control-target-poly}
\end{equation}
We require that $D_{c,n}$ of \cref{eq:control-modal-Dc-expanded}, once normalized
by its leading coefficient $\ell(c+k_{tn})$, coincide with
\cref{eq:control-target-poly}.

\begin{proposicion}[Gains for modal pole assignment]
\label{prop:control-poles}
Assume $\ell>0$ and a target $(\alpha_n,\beta_n)$ such that
\begin{equation}
  \Lambda(\alpha_n,\beta_n)
  :=(\alpha_n^2+\beta_n^2)\ell^2-2\alpha_n r\ell+r^2\neq0
  \label{eq:control-poles-denom}
\end{equation}
(a scalar condition, not to be confused with the modal denominator $D_n$).
Let us define the effective closed-loop quantities $c_t:=c+k_{tn}$ and
$g_k:=g+k_{0n}$. Then the modal controller $K_n(\st)=k_{0n}+k_{tn}\st$ that places
the poles of mode $n$ at \cref{eq:control-target-poles} is given by
\begin{equation}
  c_t=\frac{k_n^2\,\ell}
           {(\alpha_n^2+\beta_n^2)\ell^2-2\alpha_n r\ell+r^2},
  \qquad
  g_k=\frac{c_t\,(2\alpha_n\ell-r)}{\ell},
  \label{eq:control-poles-gains}
\end{equation}
whence $k_{tn}=c_t-c$ and $k_{0n}=g_k-g$. The solution is physical, that is,
implementable as positive effective capacitance and conductance, only if
$c_t>0$ and $g_k>0$; otherwise \cref{eq:control-poles-gains} is still the
algebraic solution, but its realizability must be checked separately.
\end{proposicion}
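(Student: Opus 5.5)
The plan is to match coefficients. Start from the expanded closed-loop denominator \cref{eq:control-modal-Dc-expanded}. In terms of the effective quantities it reads $D_{c,n}(\st)=\ell c_t\,\st^2+(r c_t+\ell g_k)\,\st+(k_n^2+r g_k)$. Since $\ell>0$, the roots are unchanged by dividing by the leading coefficient $\ell c_t$, provided $c_t\neq0$; I will check this after solving. Placing the poles at \cref{eq:control-target-poles} is then equivalent, by \cref{eq:control-target-poly}, to the polynomial identity $D_{c,n}(\st)=\ell c_t\bigl[\st^2+2\alpha_n\st+(\alpha_n^2+\beta_n^2)\bigr]$. This holds in the commutative subalgebra because the coefficients are real, so the factorization uses only $\Bt^2=-1$.

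Equating the coefficients of $\st$ and $\st^0$ gives two equations that are linear in the unknowns $(c_t,g_k)$:
\[
r c_t+\ell g_k=2\alpha_n\ell c_t,\qquad k_n^2+r g_k=(\alpha_n^2+\beta_n^2)\ell c_t .
\]
From the first equation, $g_k=c_t(2\alpha_n\ell-r)/\ell$, which is the second formula of \cref{eq:control-poles-gains}. Substituting into the second equation gives $k_n^2=c_t\bigl[(\alpha_n^2+\beta_n^2)\ell-r(2\alpha_n\ell-r)/\ell\bigr]=c_t\,\Lambda(\alpha_n,\beta_n)/\ell$. Since $\Lambda\neq0$ by \cref{eq:control-poles-denom}, this yields $c_t=k_n^2\ell/\Lambda$, the first formula. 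The solution is unique because the linear system has determinant proportional to $\Lambda$.

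I then recover the gains from the definitions: $k_{tn}=c_t-c$ and $k_{0n}=g_k-g$. Because $k_n^2>0$ and $\ell>0$, the formula gives $c_t\neq0$, which justifies the normalization a posteriori. Going the other way, if $c_t$ and $g_k$ take these values then $D_{c,n}$ equals $\ell c_t$ times the target polynomial, so its roots are exactly $\st{}_{,n}^{\pm}$.

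The realizability clause needs no proof beyond noting that the algebra places no sign restriction on $c_t$ or $g_k$. The only delicate step is the normalization: I must make sure the leading coefficient does not vanish and that $\Lambda\neq0$ is exactly the solvability condition. $\Lambda$ could also be read as $\lvert\ell\st{}_{,n}^{+}+r\rvert^2$ evaluated in $\CC_{\Bt}$. It vanishes precisely when a target pole coincides with the zero $-r/\ell$ of $\Zimp$, which explains the hypothesis.
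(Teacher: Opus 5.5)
Your proposal is correct and follows essentially the same route as the paper: normalize the expanded closed-loop denominator by $\ell c_t$, match the $\st$ and constant coefficients against $\st^2+2\alpha_n\st+(\alpha_n^2+\beta_n^2)$, solve the first equation for $g_k$, and substitute into the second to obtain $c_t=k_n^2\ell/\Lambda$. Your extra checks are small but useful tightenings that the paper leaves implicit: the $2\times 2$ linear system has determinant exactly $\Lambda$, which gives uniqueness, and $c_t=k_n^2\ell/\Lambda\neq0$ justifies the normalization a posteriori.
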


\begin{proof}
Normalizing \cref{eq:control-modal-Dc-expanded} by $\ell c_t=\ell(c+k_{tn})$, the
closed-loop monic polynomial is
\begin{equation}
  \st^2+\frac{r c_t+\ell g_k}{\ell c_t}\,\st
        +\frac{k_n^2+r g_k}{\ell c_t}.
  \label{eq:control-poles-monic}
\end{equation}
Matching coefficient by coefficient with \cref{eq:control-target-poly} yields the
system
\begin{align}
  \frac{r c_t+\ell g_k}{\ell c_t}&=2\alpha_n,
  \label{eq:control-poles-sys1}\\
  \frac{k_n^2+r g_k}{\ell c_t}&=\alpha_n^2+\beta_n^2.
  \label{eq:control-poles-sys2}
\end{align}
From \cref{eq:control-poles-sys1}, $r c_t+\ell g_k=2\alpha_n\ell c_t$, that is
\begin{equation}
  \ell g_k=(2\alpha_n\ell-r)\,c_t,
  \qquad\text{equivalently}\qquad
  g_k=\frac{c_t(2\alpha_n\ell-r)}{\ell},
  \label{eq:control-poles-gk}
\end{equation}
which is the second equality of \cref{eq:control-poles-gains}. Substituting
$r g_k=r(2\alpha_n\ell-r)c_t/\ell$ into \cref{eq:control-poles-sys2} and
multiplying by $\ell c_t$,
\begin{equation}
  k_n^2+\frac{r(2\alpha_n\ell-r)}{\ell}\,c_t
  =(\alpha_n^2+\beta_n^2)\,\ell c_t.
\end{equation}
Grouping the terms in $c_t$,
\begin{equation}
  k_n^2=\left[(\alpha_n^2+\beta_n^2)\ell
        -\frac{r(2\alpha_n\ell-r)}{\ell}\right]c_t
       =\frac{(\alpha_n^2+\beta_n^2)\ell^2-2\alpha_n r\ell+r^2}{\ell}\,c_t,
\end{equation}
and under hypothesis \cref{eq:control-poles-denom} we solve for
$c_t=k_n^2\ell/[(\alpha_n^2+\beta_n^2)\ell^2-2\alpha_n r\ell+r^2]$, the first
equality of \cref{eq:control-poles-gains}. Finally $k_{tn}=c_t-c$ and
$k_{0n}=g_k-g$ by the definitions of $c_t$ and $g_k$. The physical realizability
condition $c_t>0,\ g_k>0$ follows from requiring that the effective closed-loop
capacitance and conductance be positive.
\end{proof}

\begin{observacion}
Formula \cref{eq:control-poles-gains} is remarkably compact thanks to the
geometric structure: the desired pole $-\alpha_n\pm\Bt\beta_n$ is interpreted
directly as a root in the temporal plane generated by $\Bt$, and the conjugate
pair appears automatically because $\Bt^2=-1$. Note further that the denominator
\cref{eq:control-poles-denom} is exactly
$|(r+\ell\st{}_{,n}^{+})|^2_{\Bt}=(r-\alpha_n\ell)^2+(\beta_n\ell)^2$ evaluated at
the target pole, which connects the non-degeneracy condition with $\Zimp(\st)$
not vanishing at $\st{}_{,n}^{\pm}$, consistently with the invertibility required
in \cref{prop:control-design}.
\end{observacion}

\begin{observacion}[Relation to the implementation]
\label{obs:control-code}
The results of this section correspond directly to the
\texttt{geolaplace.control} module of the reproducible implementation. The
virtual conductance of \cref{prop:control-virtual} is
\path{virtual_conductance}; the closed-loop determinant
\cref{eq:control-delta-closed} is \path{closed_delta}; the design by target
denominator \cref{eq:control-design-K} is \path{design_K_from_delta}; the
closed-loop modal denominator \cref{eq:control-modal-Dc-expanded} is
\path{modal_closed_denominator}; and the pole assignment
\cref{eq:control-poles-gains} is \path{modal_gain_for_target_poles}. The exact
discrete droop of \cref{prop:control-droop-discreto} is covered by the module
\texttt{geolaplace.converters}: \path{modal_coupling_matrix} builds the
coupling matrix \cref{eq:acoplo-alias} and \path{discrete_droop_poles} computes
the closed-loop poles of the coupled modal system. This
one-to-one correspondence makes it possible to validate each formula numerically
(see \cref{sec:resultados}).
\end{observacion}

\section{Localized faults: detection, localization, and classification}
\label{sec:fallos}

This section addresses the problem of detecting, localizing, and classifying
faults that appear at specific positions along the distributed line of
converters. The guiding idea is that a localized fault manifests itself as a
\emph{space-time singularity} of the model residual, and that its Geometric
Laplace Transform possesses a distinctive algebraic structure: the dependence on
the fault position is encoded in a spatial rotor. The approach falls within
model-based fault diagnosis and residual
generation~\cite{ding2008model,isermann2006fault}.

Two different objects must be distinguished from the outset:
\begin{itemize}
  \item the \emph{balance residual} $\Res(t,x)$, which represents the equivalent
        anomalous source introduced by the fault;
  \item the \emph{observed response}, for example $v_{\mathrm{res}}(t,x)$,
        which is the propagated effect of that residual through the line.
\end{itemize}
This distinction matters in fault detection: the geometric signature
$F_f(\st)e^{-\sx x_f}$ belongs to the balance residual. If only voltages are
measured, the line transfer function must first be removed (a \emph{prewhitening}
or modal deconvolution operation) in order to recover the residual signature. If
this separation is not carried out, a sensor anomaly can be mistaken for a
localized physical fault.

\subsection{Distributed residual and the signature of a localized fault}
\label{subsec:residuo}

We start from the nominal closed-loop distributed model. The plant satisfies
\begin{equation}
  \px v + \ell\,\pt i + r\,i = 0,
  \qquad
  \px i + c\,\pt v + g\,v = u,
\end{equation}
and the nominal controller applies
\begin{equation}
  u = r_u - K[v].
\end{equation}
In healthy operation, therefore,
\begin{equation}
  \px i + c\,\pt v + g\,v + K[v]-r_u = 0.
\end{equation}

\begin{definicion}[Distributed balance residual]
\label{def:residuo}
Let $K[\,\cdot\,]$ be the linear distributed control operator and $r_u$ the
known excitation. We define the balance residual as
\begin{equation}
\label{eq:residuo}
\Res(t,x) = \px i + c\,\pt v + g\,v + K[v] - r_u .
\end{equation}
Under ideal healthy operation, $\Res(t,x)=0$. In the presence of noise or model
errors, a statistical threshold on a norm of $\Res$ is used to decide on
detection.
\end{definicion}

If the physical fault introduces an anomalous injection $u_f(t,x)$, the actual
equation becomes
\begin{equation}
  \px i + c\,\pt v + g\,v = r_u-K[v]+u_f(t,x),
\end{equation}
and, by the previous definition,
\begin{equation}
  \Res(t,x)=u_f(t,x).
\end{equation}
Thus, the balance residual is the equivalent source of the fault. It is this
identity that justifies localizing faults from $\Res$.

\begin{definicion}[Additive localized fault]
\label{def:fallo}
A fault localized at $x_f\in(0,L)$ with temporal signature $f_f(t)$ is modeled as
\begin{equation}
\label{eq:fallo}
\Res(t,x)=u_f(t,x)=f_f(t)\,\delta(x-x_f),
\end{equation}
where $\delta(\cdot)$ is the spatial Dirac delta. The function $f_f(t)$ has units
of incremental current and is called the \emph{temporal signature} of the fault.
\end{definicion}

\begin{observacion}[Additive versus multiplicative faults]
\label{obs:aditivo-multiplicativo}
Form \eqref{eq:fallo} is exact for faults that behave as an anomalous current
injection. A local short circuit or shunt, in contrast, is not an independent
source: it introduces a relation of the type $f_f(t)=-Y_f[v(t,x_f)]$. It is
therefore a multiplicative, or local-admittance, disturbance. It can be handled
with the same notation if $f_f(t)$ is interpreted as the resulting fault current
and if, after localizing $x_f$, one estimates $Y_f$ by dividing by the local
voltage. For strong shunt faults, the local voltage $v(t,x_f)$ is itself modified
by the fault; in that case the classification must be understood as an estimation
within the already-faulted system, not as an independent external source.
\end{observacion}

\begin{proposicion}[Transformed signature of a localized fault]
\label{prop:firma}
Let $F_f(\st)=\LG\{f_f\}(\st)$ be the temporal transform of the signature. If the
line is treated as a spatial half-axis, or if the residual is extended by zero
outside $[0,L]$, the Geometric Laplace Transform of residual \eqref{eq:fallo} is
\begin{equation}
\label{eq:firma}
R(\st,\sx)=F_f(\st)\,e^{-\sx x_f}.
\end{equation}
In particular, evaluating on the purely oscillatory spatial axis $\sx=\Bx k$,
\begin{equation}
\label{eq:firma-rotor}
R(\st,\Bx k)=F_f(\st)e^{-\Bx kx_f}
=F_f(\st)\bigl[\cos(kx_f)-\Bx\sin(kx_f)\bigr].
\end{equation}
\end{proposicion}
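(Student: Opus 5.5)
The plan is to compute the double integral in \eqref{eq:def-transformada} directly, treating the spatial Dirac delta as a distribution acting on the continuous kernel. By the factorization \eqref{eq:nucleo-factorizado} and commutativity of $\mathcal{A}$, the integrand is $f_f(t)\,\delta(x-x_f)\,e^{-\st t}e^{-\sx x}$. We assume $f_f$ is of exponential order $\sigma_t$ and work with $\rho_t>\sigma_t$, with $\rho_x$ arbitrary. In this region the temporal factor is absolutely integrable by the argument of \cref{prop:roc}: use the equality case of \cref{lem:norma-producto}, since the two idempotent components of $e^{-\st t}$ have common modulus $e^{-\rho_t t}$.

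The key steps are as follows.

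\begin{enumerate}[label=(\arabic*),leftmargin=*]
  \item Justify the order of integration. The delta is not a function, so Fubini--Tonelli does not apply verbatim. Either interpret $\delta$ as the limit of a nonnegative mollifier $\delta_\varepsilon$ supported in $(x_f-\varepsilon,x_f+\varepsilon)\subset(0,L)$, or define the transform of $g(t)\delta(x-x_f)$ by first integrating in $x$. With mollifiers, Fubini holds for each $\varepsilon$. Then pass to the limit using continuity of $x\mapsto e^{-\sx x}$, which is entire in $\sx$ and componentwise continuous in $x$ by \eqref{eq:nucleos-rotor}, together with dominated convergence.
  \item Evaluate the spatial integral. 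The sifting property gives $\int_0^\infty\delta(x-x_f)e^{-\sx x}\dd x=e^{-\sx x_f}$. This uses $x_f\in(0,L)\subset(0,\infty)$, so the support lies strictly inside the integration range; the zero extension outside $[0,L]$ does not alter the result. This is also the step that needs the half-axis or zero-extension hypothesis, since on a finite interval with modes one would instead get the sine coefficients.
  \item Factor the result. Since $e^{-\sx x_f}$ is a constant element of $\mathcal{A}$ and $\mathcal{A}$ is commutative, it can be pulled out of the temporal integral, as in \cref{prop:linealidad}. This leaves $\int_0^\infty f_f(t)e^{-\st t}\dd t=F_f(\st)$, which is \eqref{eq:firma}.
  \item Specialize to the oscillatory axis. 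Set $\rho_x=0$, $\sx=\Bx k$, and apply \eqref{eq:exp-Bx} of \cref{prop:exp} with $\theta=-kx_f$. This gives $e^{-\Bx kx_f}=\cos(kx_f)-\Bx\sin(kx_f)$, hence \eqref{eq:firma-rotor}. Note that no spatial convergence condition is required, because the spatial "integral" is a point evaluation; the whole $\sx$-plane is admissible.
\end{enumerate}

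The main obstacle is step (1): making the distributional interchange rigorous within the paper's componentwise-integral definition. I would handle it with the mollifier argument. Bound the integrand in norm by $\lVert f_f(t)\rVert e^{-\rho_t t}\,\delta_\varepsilon(x)\,e^{|\rho_x|(x_f+1)}$, again using \cref{lem:norma-producto} to avoid the non-submultiplicativity pitfall. This bound is uniformly integrable, and letting $\varepsilon\to0$ recovers the sifting identity.
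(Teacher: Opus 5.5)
Your proposal is correct and follows the same route as the paper. The double integral splits into the temporal transform $F_f(\st)$ times the sifting integral $\int_0^\infty\delta(x-x_f)e^{-\sx x}\dd x=e^{-\sx x_f}$ (using $x_f>0$), and the rotor form then follows from the Euler identity $e^{-\Bx kx_f}=\cos(kx_f)-\Bx\sin(kx_f)$. The paper does this splitting formally. Your mollifier argument, with the norm bound from the equality case of \cref{lem:norma-producto}, therefore adds rigor to the same proof rather than giving a different one.
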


\begin{proof}
Applying the definition of the transform to \eqref{eq:fallo},
\begin{align}
R(\st,\sx)
  &= \int_0^\infty\!\int_0^\infty
     f_f(t)\,\delta(x-x_f)e^{-\st t}e^{-\sx x}\,\dd x\dd t  \\
  &= \left(\int_0^\infty f_f(t)e^{-\st t}\,\dd t\right)
     \left(\int_0^\infty \delta(x-x_f)e^{-\sx x}\,\dd x\right) .
\end{align}
The first factor is $F_f(\st)$ and the second is $e^{-\sx x_f}$ for $x_f>0$. If
$\sx=\Bx k$, then $\Bx^2=-1$ and one obtains the geometric Euler identity
\begin{equation}
  e^{-\Bx kx_f}=\cos(kx_f)-\Bx\sin(kx_f).
\end{equation}
\end{proof}

\cref{prop:firma} is the basic localization result: the temporal part of the
fault appears in $F_f(\st)$, whereas the position appears in the spatial rotor
$e^{-\Bx kx_f}$. This separation should not be confused with the measured voltage
response, which includes line propagation.

\subsection{From the residual to the measured signals}
\label{subsec:residuo-vs-medida}

In many systems, $\Res(t,x)$ is not measured directly over the entire continuum.
The usual case is to measure voltage, and perhaps current, at a finite set of
converters/sensors. It is therefore convenient to write out explicitly how the
physical residual propagates to the voltage.

In the transformed domain, the closed-loop model with effective admittance
$\Yadm_c(\st,\sx)=\Yadm(\st)+K(\st,\sx)$ leads to
\begin{equation}
  \Deterc(\st,\sx)=\sx^2-\Zimp(\st)\Yadm_c(\st,\sx).
\end{equation}
If the balance residual is $R(\st,\sx)$, the residual voltage satisfies
\begin{equation}
\label{eq:Vres-from-R}
  V_{\mathrm{res}}(\st,\sx)
  = -\frac{\Zimp(\st)}{\Deterc(\st,\sx)}\,R(\st,\sx),
\end{equation}
provided the factors are invertible in the region of interest. For a point fault,
\begin{equation}
\label{eq:Vres-fallo}
  V_{\mathrm{res}}(\st,\sx)
  = -\frac{\Zimp(\st)}{\Deterc(\st,\sx)}
    F_f(\st)e^{-\sx x_f}.
\end{equation}
Hence, if one intends to localize using voltage data, the prewhitened signature
must be used,
\begin{equation}
\label{eq:prewhitening}
  \widetilde R(\st,\sx)
  := -\frac{\Deterc(\st,\sx)}{\Zimp(\st)} V_{\mathrm{res}}(\st,\sx),
\end{equation}
which, ideally, recovers $F_f(\st)e^{-\sx x_f}$. This operation is not harmless:
it is sensitive to zeros of $\Zimp$, to zeros or near-zeros of $\Deterc$, to
noise, and to parametric uncertainty.

\begin{observacion}[Finite line]
\label{obs:linea-finita}
The continuous formula \eqref{eq:Vres-fallo} corresponds to the $(\st,\sx)$-domain
treatment of a homogeneous line. In a finite line with boundary conditions, the
response must be expressed through spatial modes or through the Green's function
that satisfies those boundary conditions. In that case, free-phase localization in
$k$ must be replaced by the modal localization of \cref{subsec:modal}, unless one
works with a long-line approximation or with local windows where reflections are
negligible.
\end{observacion}

\subsubsection*{Regularized prewhitening}

The ideal inversion \eqref{eq:prewhitening} is an exact deconvolution and, for
that reason, sensitive to noise when the propagation operator is small or
ill-conditioned. Writing $V_{\mathrm{res}}=H\,R$ with
\begin{equation}
\label{eq:H-propagacion}
  H(\st,\sx)=-\frac{\Zimp(\st)}{\Deterc(\st,\sx)},
\end{equation}
the direct inversion $R=H^{-1}V_{\mathrm{res}}$ amplifies noise near the zeros of
$H$ (or, equivalently, near the zero divisors of the algebra). We propose instead
the \emph{regularized} (Tikhonov-type~\cite{tikhonov1963regularization})
deconvolution
\begin{equation}
\label{eq:prewhit-reg}
  \widehat R_\lambda
  = \frac{\overline{H}}{\lvert H\rvert^{2}+\lambda}\,V_{\mathrm{res}},
  \qquad \lambda>0,
\end{equation}
where the conjugation $\overline{H}$ and the modulus $\lvert H\rvert^{2}$ must be
understood in a representation that is \emph{compatible} with the algebra
structure. To avoid a problematic notion of modulus in the presence of zero
divisors, the real matrix representation is used: if $M=M(H)$ is the $4\times4$
matrix of multiplication by $H$ (\cref{prop:matriz}) and $v$ is the coordinate
vector of $V_{\mathrm{res}}$, then
\begin{equation}
\label{eq:prewhit-reg-mat}
  \widehat R_\lambda \;\leftrightarrow\;
  \bigl(M^{\!\top}M+\lambda I\bigr)^{-1}M^{\!\top}v .
\end{equation}
With $\lambda=0$ and $H$ invertible, the exact inversion is recovered; with
$\lambda>0$ the solution remains bounded even if $H$ is ill-conditioned. In a
finite line, the modal version is cleaner: with $G_n(\st)=\Zimp(\st)/D_n(\st)$
(\cref{eq:Gn}) and $V_n=G_n A_n$,
\begin{equation}
\label{eq:modal-reg}
  \widehat A_n
  = \frac{\overline{G_n}}{\lvert G_n\rvert^{2}+\lambda}\,V_n
  \;\leftrightarrow\;
  \bigl(M(G_n)^{\!\top}M(G_n)+\lambda I\bigr)^{-1}M(G_n)^{\!\top}v_n .
\end{equation}
The routines \texttt{regularized\_prewhiten\_voltage} and
\texttt{regularized\_modal\_amplitude} implement \eqref{eq:prewhit-reg-mat} and
\eqref{eq:modal-reg}.

\subsection{Localization by geometric spatial phase}
\label{subsec:fase}

The factorization \eqref{eq:firma-rotor} suggests recovering $x_f$ by reading the
geometric spatial phase. This strategy must be applied to the residual $R$ or to
the prewhitened signal $\widetilde R$ of \eqref{eq:prewhitening}, not directly to
an uncompensated voltage response.

\begin{proposicion}[Phase localization with two wavenumbers]
\label{prop:fase}
Let $k_1\neq k_2$ and
\begin{equation}
  R_m:=R(\st,\Bx k_m)=F_f(\st)e^{-\Bx k_m x_f},\qquad m=1,2.
\end{equation}
Assume that $F_f(\st)$ is invertible in the bicomplex subalgebra and that $R_1$ is
not close to a zero divisor. Then
\begin{equation}
\label{eq:fase-cociente}
  R_2R_1^{-1}=e^{-\Bx(k_2-k_1)x_f},
\end{equation}
and
\begin{equation}
\label{eq:fase-xf}
  x_f=-\frac{\ArgBx(R_2R_1^{-1})}{k_2-k_1}
  \pmod{\frac{2\pi}{k_2-k_1}} .
\end{equation}
\end{proposicion}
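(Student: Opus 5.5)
The plan is to compute the quotient $R_2R_1^{-1}$ directly from the signature of \cref{prop:firma} and then read off its spatial argument with \cref{obs:rotor}. By \eqref{eq:firma-rotor}, $R_m=F_f(\st)\,e^{-\Bx k_m x_f}$.

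\textbf{Invertibility of $R_1$.} The rotor $e^{-\Bx k_1x_f}$ is invertible. By \cref{prop:exp} its idempotent components are $e^{\mp\Bt k_1x_f}$, up to the sign convention of \eqref{eq:desc-zeta}. Both have unit modulus, so \cref{prop:inverso} applies. The inverse is $e^{+\Bx k_1x_f}$, since $(\cos\theta-\Bx\sin\theta)(\cos\theta+\Bx\sin\theta)=\cos^2\theta+\sin^2\theta=1$ by $\Bx^2=-1$. Together with the hypothesis that $F_f(\st)$ is invertible, this shows $R_1$ is invertible. Its inverse is computed componentwise by \cref{prop:inverso}, since both idempotent components of the product are nonzero products of nonzero complex numbers. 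This gives
\[
R_1^{-1}=F_f(\st)^{-1}e^{\Bx k_1x_f}.
\]
The hypothesis that $R_1$ is ``not close to a zero divisor'' is only needed for numerical conditioning. Algebraically, exact invertibility suffices.

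\textbf{The quotient.} Using commutativity (\cref{prop:producto}), I would write
\[
R_2R_1^{-1}=F_f F_f^{-1}\,e^{-\Bx k_2x_f}e^{\Bx k_1x_f}=e^{-\Bx(k_2-k_1)x_f}.
\]
The two exponentials combine because $\Bx k_2x_f$ and $\Bx k_1x_f$ commute, so no correction term appears, exactly as in \eqref{eq:nucleo-factorizado}. This proves \eqref{eq:fase-cociente}.

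\textbf{Reading the argument.} By \eqref{eq:exp-Bx},
\[
R_2R_1^{-1}=\cos\varphi+\Bx\sin\varphi,\qquad \varphi=-(k_2-k_1)x_f.
\]
This is a pure spatial rotor with $r=1$ and $b=d=0$. By \cref{obs:rotor}, $\ArgBx(R_2R_1^{-1})=\varphi$ modulo $2\pi$. Dividing by $-(k_2-k_1)\neq0$ gives $x_f$ modulo $2\pi/\lvert k_2-k_1\rvert$, which is \eqref{eq:fase-xf}.

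\textbf{Main obstacle.} The only delicate point is bookkeeping the modular ambiguity. $\operatorname{atan2}$ returns a value in $(-\pi,\pi]$, so the congruence must be stated modulo the period $2\pi/(k_2-k_1)$, read as $2\pi/\lvert k_2-k_1\rvert$ when the sign of $k_2-k_1$ is negative. The formula then holds as a congruence rather than an equality. Everything else is a direct consequence of commutativity and the exact invertibility of unit rotors.
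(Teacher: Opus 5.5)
Your proposal is correct and follows the same route as the paper. Commutativity cancels $F_f(\st)$ and merges the two spatial rotors into $e^{-\Bx(k_2-k_1)x_f}$, and its $\ArgBx$ is then read off as for a pure unit rotor. Your explicit check that $R_1$ is exactly invertible (so the ``not close to a zero divisor'' hypothesis only concerns conditioning) and your reading of the period as $2\pi/\lvert k_2-k_1\rvert$ are sound refinements that the paper's terse proof leaves implicit.
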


\begin{proof}
By commutativity,
\begin{equation}
R_2R_1^{-1}=F_f e^{-\Bx k_2x_f}F_f^{-1}e^{\Bx k_1x_f}
=e^{-\Bx(k_2-k_1)x_f}.
\end{equation}
The right-hand side is a pure spatial rotor. If $\varphi=(k_2-k_1)x_f$, then
$e^{-\Bx\varphi}=\cos\varphi-\Bx\sin\varphi$ and
$\ArgBx(e^{-\Bx\varphi})=-\varphi$ modulo $2\pi$.
\end{proof}

\begin{observacion}[Domain of validity of phase localization]
\label{obs:fase-validez}
\cref{prop:fase} and the robust estimator \eqref{eq:fase-ls} presuppose that the
observed signature is a \emph{pure} spatial rotor $F_f(\st)e^{-\Bx k x_f}$. This
is appropriate for: an infinite line or half-axis; the local long-line
approximation; space-time windows where reflections are negligible; and the
prewhitened balance residual \eqref{eq:prewhitening} or one measured directly. It
should \emph{not} be applied directly to data from a finite line dominated by
reflections and global boundary conditions: in that regime, the response
superposes standing modes $\sin(k_nx)$ and the free spatial phase in $k$ ceases
to be a pure rotor. For the global problem on $0\le x\le L$, the estimator
consistent with the boundary conditions is the \emph{modal} one of
\cref{subsec:modal}; the phase estimator is valid there only as a local or
long-line approximation.
\end{observacion}

\begin{observacion}[Use of $\ArgBx$ with noise]
\label{obs:arg-ruido}
In the exact case, $R_2R_1^{-1}$ lies in the plane $\{1,\Bx\}$. With noise,
imperfect prewhitening, or model errors, components along $\Bt$ and $\Btx$ may
appear. In that case one should not blindly apply
$\ArgBx(z)=\operatorname{atan2}(\langle z\rangle_{\Bx},\langle z\rangle_0)$ to the
full element. A robust option is to first project onto the spatial plane,
\begin{equation}
  \Pi_x z := \langle z\rangle_0 + \langle z\rangle_{\Bx}\Bx,
\end{equation}
and to use $\ArgBx(\Pi_x z)$ only if $\|\Pi_x z\|$ exceeds a threshold. The
preferable alternative, when several wavenumbers are available, is the
least-squares fit of \cref{prop:fase-robusto}.
\end{observacion}

\begin{proposicion}[Robust least-squares estimator]
\label{prop:fase-robusto}
Given, for $m=1,\dots,M$, the prewhitened signatures
$R_m\approx F e^{-\Bx k_mx_f}$, consider
\begin{equation}
\label{eq:fase-ls}
  \min_{x\in\mathcal I,\;F\in\mathcal A}
  \sum_{m=1}^M w_m\,\|R_m-F e^{-\Bx k_mx}\|^2,
\end{equation}
where $\mathcal A=\operatorname{span}\{1,\Bt,\Bx,\Btx\}$, $w_m\ge 0$ are weights,
and $\mathcal I$ is the physically admissible interval, for example $[0,L]$. For
each fixed $x$, the problem is linear in the four real components of $F$.
Substituting the minimizer $\hat F(x)$ yields a one-dimensional search in $x$; the
minimizer $\hat x$ and $\hat F=\hat F(\hat x)$ give the estimated position and
temporal signature.
\end{proposicion}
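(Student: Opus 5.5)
The claim in \cref{prop:fase-robusto} has two parts. First, for fixed $x$ the inner problem in $F$ is a linear least-squares problem in four real unknowns. Second, eliminating $F$ leaves a well-posed one-dimensional search in $x$ whose minimizer gives $(\hat x,\hat F)$. I will reduce everything to the real matrix representation of \cref{prop:matriz}.

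\textbf{Step 1 (linearity in $F$).} Fix $x$ and set $E_m(x)=e^{-\Bx k_m x}=\cos(k_mx)-\Bx\sin(k_mx)$. By commutativity, $F E_m=E_m F$, so the coordinate vector of $F E_m$ is $M(E_m)f$, where $f\in\RR^4$ is the coordinate vector of $F$. Since $\lVert\cdot\rVert$ is the Euclidean norm of coordinates, the objective becomes
\[
J(x,f)=\sum_m w_m\,\lVert r_m-M(E_m)f\rVert^2 .
\]
This is a quadratic in $f$, which proves the linearity claim.

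\textbf{Step 2 (normal equations).} Setting $\nabla_f J=0$ gives $\bigl(\sum_m w_m M_m^{\top}M_m\bigr)f=\sum_m w_m M_m^{\top}r_m$, with $M_m=M(E_m)$.

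The key observation, and the step I expect to need the most care, is that $E_m$ is a unit spatial rotor. Its idempotent components are $e^{\mp\Bt k_m x}$ in the convention of \eqref{eq:desc-zeta}, and both have modulus $1$. By the equality case of \cref{lem:norma-producto}, $\lVert E_m z\rVert=\lVert z\rVert$ for every $z$. Hence $M_m$ is orthogonal. I can also check this directly: $M(E_m)^{\top}=M(\overline{E_m})=M(E_m^{-1})$ by \cref{prop:matriz}, because $\overline{E_m}=\cos+\Bx\sin=E_m^{-1}$.

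The normal matrix therefore equals $W I_4$ with $W=\sum_m w_m$. Provided $W>0$, the minimizer is explicit:
\[
\hat F(x)=\frac{1}{W}\sum_m w_m\,R_m\,e^{\Bx k_m x}.
\]
This is unique, so no invertibility or zero-divisor issue arises, and that is the robustness gain over \cref{prop:fase}.

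\textbf{Step 3 (profile in $x$).} Substituting $\hat F(x)$ and using orthogonality once more, the profile objective is
\[
J(x)=\sum_m w_m\lVert R_m\rVert^2-\frac{1}{W}\Bigl\lVert\sum_m w_m R_m e^{\Bx k_m x}\Bigr\rVert^2 .
\]
So minimizing over $x$ is equivalent to maximizing a continuous coherent sum over the compact interval $\mathcal I$. A minimizer $\hat x$ exists by continuity and compactness (Weierstrass), and $\hat F=\hat F(\hat x)$ follows.

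\textbf{Step 4 (consistency).} In the noiseless case $R_m=F_f e^{-\Bx k_m x_f}$, I would evaluate $J(x_f)=0$. This shows $x_f$ is a global minimizer and $\hat F(x_f)=F_f$.

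Uniqueness of $\hat x$ is a separate matter, and I will only remark on it rather than prove it: it requires the $k_m$ not to be commensurate in a way that creates aliases within $\mathcal I$, consistent with the modulus ambiguity in \eqref{eq:fase-xf}.
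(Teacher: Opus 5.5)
Your proof is correct, and it takes a sharper route than the paper's. The paper proves the statement only at the generic level. It rewrites each term as $\lVert r_m-M_m(x)\operatorname{vec}(F)\rVert^2$, stacks the weighted blocks into $A(x)$, takes $\operatorname{vec}(\hat F(x))=A(x)^+b$, and leaves the reduced cost to a numerical sweep over $\mathcal I$.

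You use a structural fact the paper's proof never invokes: each $E_m(x)=e^{-\Bx k_m x}$ is a unit spatial rotor. Hence $M(E_m)$ is orthogonal, either by the equality case of \cref{lem:norma-producto} or directly from $M(E_m)^{\top}=M(\overline{E_m})=M(E_m)^{-1}$. This collapses $A(x)^{\top}A(x)$ to $W I_4$, which gives three things:
\begin{itemize}
  \item the pseudo-inverse becomes the closed form $\hat F(x)=W^{-1}\sum_m w_m R_m e^{\Bx k_m x}$, unique whenever $W>0$;
  \item the profile cost becomes $\sum_m w_m\lVert R_m\rVert^2$ minus a coherent, matched-filter-type sum;
  \item the estimator never inverts any $R_m$ or $F$, so the zero-divisor caveat of \cref{prop:fase} does not arise.
\end{itemize}
Your Steps 3--4 then add existence of $\hat x$ (Weierstrass, assuming $\mathcal I$ compact as in $[0,L]$) and noiseless consistency, $J(x_f)=0$ and $\hat F(x_f)=F_f$ (assuming $x_f\in\mathcal I$). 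The paper's proof establishes neither. In return, the paper's pseudo-inverse formulation does not depend on the rotor structure and still makes sense for a degenerate weight vector ($W=0$). For the problem as stated, though, your route is strictly more informative.

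One cosmetic slip: under \eqref{eq:desc-zeta} the idempotent components of $e^{-\Bx k_m x}$ are $\zeta_1=e^{+\Bt k_m x}$ and $\zeta_2=e^{-\Bt k_m x}$, the reverse of the order you wrote. Only their unit moduli enter the argument, so nothing changes.
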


\begin{proof}
For fixed $x$, the factor $E_m(x):=e^{-\Bx k_mx}$ is known. Since the algebra is
commutative, the map $F\mapsto F E_m(x)$ is linear in the coordinates of $F$.
Denote by $M_m(x)\in\RR^{4\times4}$ the real matrix of multiplication by $E_m(x)$
and by $r_m$ the vector of components of $R_m$. Then
\begin{equation}
  \|R_m-FE_m(x)\|^2=\|r_m-M_m(x)\operatorname{vec}(F)\|^2.
\end{equation}
Stacking the weighted blocks yields
\begin{equation}
  \min_{\operatorname{vec}(F)}\|A(x)\operatorname{vec}(F)-b\|^2,
\end{equation}
whose least-squares solution is $\operatorname{vec}(\hat F(x))=A(x)^+b$. The
reduced cost is minimized over the interval $\mathcal I$ by sweeping, refinement,
or a bounded one-dimensional method.
\end{proof}

\begin{observacion}[Phase ambiguity]
\label{obs:ambiguedad-fase}
Phase estimation carries an ambiguity modulo $2\pi/(k_2-k_1)$. Therefore, for a
line of length $L$ it is advisable to choose the wavenumbers so that the ambiguity
period covers the physical interval, or, alternatively, to solve the global
problem \eqref{eq:fase-ls} over $[0,L]$ with several $k_m$.
\end{observacion}

\subsection{Modal localization in a finite line}
\label{subsec:modal}

When the line has finite length $0\le x\le L$, the boundary conditions select
spatial modes. To avoid ambiguous normalization factors, we introduce an
orthonormal basis
\begin{equation}
\label{eq:psi-n}
  \psi_n(x)=\sqrt{\frac{2}{L}}\sin(k_nx),\qquad k_n=\frac{n\pi}{L}.
\end{equation}
If the unnormalized basis $\phi_n(x)=\sin(k_nx)$ is used, all the following results
remain valid provided the factor $2/L$ is introduced in the modal projection; that
factor cancels in the localization ratios, but not in the severity estimation.

A point residual has modal projection
\begin{equation}
  R_n(\st)=F_f(\st)\psi_n(x_f).
\end{equation}
If $G_n(\st)$ is the modal residual-to-voltage transfer, then
\begin{equation}
\label{eq:modo-Vn}
  V_n(\st)=G_n(\st)F_f(\st)\psi_n(x_f).
\end{equation}
We define the normalized modal amplitude
\begin{equation}
\label{eq:An}
  A_n(\st):=\frac{V_n(\st)}{G_n(\st)}=F_f(\st)\psi_n(x_f),
\end{equation}
provided $G_n(\st)$ is neither zero nor ill-conditioned. The amplitudes $A_n$ share
the temporal signature and differ only in the spatial weight.

The general modal estimator is
\begin{equation}
\label{eq:modal-ls}
  \min_{x\in[0,L],\;F\in\mathcal A}
  \sum_{n\in\mathcal N} w_n\,\|A_n-F\psi_n(x)\|^2 .
\end{equation}

\begin{proposicion}[Modal localization with two modes]
\label{prop:modal}
Let $A_1=F_f\psi_1(x_f)$ and $A_2=F_f\psi_2(x_f)$, with $k_1=\pi/L$ and
$k_2=2\pi/L$. If $A_1\neq0$ and the common signature can be canceled, then
\begin{equation}
\label{eq:modal-cociente}
  \frac{A_2}{A_1}=2\cos\!\left(\frac{\pi x_f}{L}\right),
  \qquad
  x_f=\frac{L}{\pi}\arccos\!\left(\frac{A_2}{2A_1}\right),
\end{equation}
for $x_f\in(0,L)$.
\end{proposicion}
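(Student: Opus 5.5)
The argument is short. It combines the explicit form of the orthonormal modes \eqref{eq:psi-n} with the double-angle identity. The only algebraic subtlety is making sense of the quotient $A_2/A_1$ in $\mathcal A$, which has zero divisors.

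First, I would substitute $k_1=\pi/L$ and $k_2=2\pi/L$ into \eqref{eq:psi-n}. Writing $\theta=\pi x_f/L\in(0,\pi)$, this gives $\psi_1(x_f)=\sqrt{2/L}\,\sin\theta$ and
\[
\psi_2(x_f)=\sqrt{2/L}\,\sin 2\theta=2\cos\theta\,\psi_1(x_f).
\]
Both are real scalars. By \eqref{eq:An}, $A_2=F_f\psi_2(x_f)=2\cos\theta\,F_f\psi_1(x_f)=2\cos\theta\,A_1$. Only the commutativity of real scalars with elements of $\mathcal A$ is used here (\cref{prop:producto}).

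Second, I would give the quotient a precise meaning. The hypothesis ``the common signature can be canceled'' I would read as $F_f(\st)$ being invertible in $\mathcal A$, in the sense of \cref{prop:inverso}. Since $x_f\in(0,L)$, we have $\sin\theta\neq0$, so $A_1=\psi_1(x_f)F_f$ is a nonzero real multiple of an invertible element. Hence $A_1$ is itself invertible, with $A_1^{-1}=\psi_1(x_f)^{-1}F_f^{-1}$. Then
\[
A_2A_1^{-1}=2\cos\theta\,A_1A_1^{-1}=2\cos\theta,
\]
which is the first identity in \eqref{eq:modal-cociente}.

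If one only assumes $A_1\neq0$ but possibly a zero divisor, I would fall back to the idempotent decomposition \eqref{eq:desc-zeta}. Componentwise, $A_2=2\cos\theta\,A_1$ holds in each $\CC_{\Bt}$ factor. The real scalar $2\cos\theta$ is then recovered from any nonzero idempotent component of $A_1$, or equivalently as the unique real $\lambda$ with $A_2=\lambda A_1$. This interpretation is the main point requiring care, and I would state it explicitly.

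Finally, $A_2/(2A_1)=\cos\theta$ is a real number in $(-1,1)$. Since $\theta\in(0,\pi)$ and $\arccos$ is a bijection from $[-1,1]$ onto $[0,\pi]$, we get $\theta=\arccos(A_2/(2A_1))$ uniquely, that is, $x_f=(L/\pi)\arccos\bigl(A_2/(2A_1)\bigr)$. This gives injectivity on $(0,L)$, so there is no mirror ambiguity, in contrast with \cref{obs:ambiguedad-fase}.

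The normalization factor $\sqrt{2/L}$ cancels in the ratio. This is consistent with the remark that the factor $2/L$ does not affect localization ratios.
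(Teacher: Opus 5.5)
Your proposal is correct and follows the paper's proof. Both arguments rest on the same three ingredients: the cancellation of the common factor $\sqrt{2/L}$, the double-angle identity $\sin 2\theta=2\sin\theta\cos\theta$, and the bijectivity of $\arccos$ onto $\theta=\pi x_f/L\in(0,\pi)$. Your extra care in defining the quotient $A_2/A_1$ in $\mathcal A$ (through invertibility of $F_f$, or as the unique real $\lambda$ with $A_2=\lambda A_1$ when $A_1$ is a zero divisor) makes precise what the paper leaves implicit in ``the common signature can be canceled''.
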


\begin{proof}
The normalization factor $\sqrt{2/L}$ is common to $\psi_1$ and $\psi_2$ and
cancels. Since
\begin{equation}
  \sin(2\pi x_f/L)=2\sin(\pi x_f/L)\cos(\pi x_f/L),
\end{equation}
one obtains
\begin{equation}
\frac{A_2}{A_1}
=\frac{F_f\sin(2\pi x_f/L)}{F_f\sin(\pi x_f/L)}
=2\cos(\pi x_f/L).
\end{equation}
The arccosine is bijective on $\pi x_f/L\in(0,\pi)$.
\end{proof}

\begin{observacion}[Limitations of the modal method]
\label{obs:modal}
The closed-form formula \eqref{eq:modal-cociente} is useful as a quick diagnostic,
but it should not be the main method in the presence of noise. It fails near modal
nodes, requires $A_1\neq0$, and can suffer from ambiguity if few modes are used.
The fit \eqref{eq:modal-ls} allows several modes to be used and a confidence metric
to be produced, for example
\begin{equation}
  \Gamma=1-\frac{\sum_n\|A_n-\hat F\psi_n(\hat x)\|^2}{\sum_n\|A_n\|^2+\epsilon}.
\end{equation}
If the signal energy $\sum_n\|A_n\|^2$ is negligible (a fault at a modal node or at
a line end), $\Gamma$ is not informative and should be degraded to $0$ rather than
reporting a spurious maximal confidence.
\end{observacion}

\begin{observacion}[Modal observability of a localized fault]
\label{obs:observabilidad}
Modal localization starts from $A_n=F_f\,\psi_n(x_f)$ for $n$ in an observed set
$\mathcal N$. If $\psi_n(x_f)=0$ for all $n\in\mathcal N$, the fault is
\emph{invisible} in those modes. More generally, if two positions $x_a$ and $x_b$
produce proportional modal vectors,
\begin{equation}
\label{eq:obs-proporcional}
  \bigl(\psi_n(x_a)\bigr)_{n\in\mathcal N}
  =\mu\,\bigl(\psi_n(x_b)\bigr)_{n\in\mathcal N},
  \qquad \mu\in\RR,
\end{equation}
the position is not identifiable without additional information, because the
amplitude $F_f$ can absorb the factor $\mu$. Localization therefore requires that
the map
\begin{equation}
\label{eq:obs-inyectiva}
  x\;\longmapsto\;\bigl[\psi_n(x)\bigr]_{n\in\mathcal N}
\end{equation}
be \emph{injective up to scale} on the admissible interval. A useful diagnostic
metric is the energy of the modal pattern,
$\mathrm{obs}(x)=\bigl(\sum_{n\in\mathcal N}\sin^2(k_n x)\bigr)^{1/2}$, which
vanishes at the ends $x=0,L$ and at nodes common to all observed modes (routine
\texttt{modal\_observability\_score}); the normalized pattern $\hat\psi(x)$
(\texttt{modal\_pattern}) makes it possible to detect the degeneracies
\eqref{eq:obs-proporcional}.
\end{observacion}

\subsection{Extension: multiple faults}
\label{subsec:fallos-multiples}

The modeling of a single point fault extends naturally to $J$ simultaneous faults
by linearity of the residual. In the continuum,
\begin{equation}
\label{eq:multi-continuo}
  R(\st,\sx)=\sum_{j=1}^{J}F_j(\st)\,e^{-\sx x_j},
\end{equation}
and in a finite line, projecting onto the modes,
\begin{equation}
\label{eq:multi-modal}
  A_n(\st)=\sum_{j=1}^{J}F_j(\st)\,\psi_n(x_j).
\end{equation}
The joint identification of $\{F_j,x_j\}$ is then a \emph{sparse recovery} problem:
\begin{equation}
\label{eq:multi-sparse}
  \min_{\{F_j,x_j\}}\;
  \Bigl\|A-\sum_{j}F_j\,\psi(x_j)\Bigr\|^2
  +\lambda\sum_j\|F_j\|,
\end{equation}
or, over a fixed spatial grid $\{x_g\}$ with modal dictionary
$\Phi=[\psi_n(x_g)]_{n,g}$ (routine \path{modal_dictionary}),
\begin{equation}
\label{eq:multi-grid}
  \min_{f}\;\|A-\Phi f\|^2+\lambda\,\Omega(f),
\end{equation}
where $f_g$ collects the amplitude (the four algebra components) of the candidate
fault at $x_g$ and $\Omega$ promotes sparsity.

\subsubsection*{Implemented recovery}

This inversion is implemented (not merely formulated) through two complementary
routes, both using \texttt{numpy}/\texttt{scipy}:

\begin{itemize}
  \item \textbf{Simultaneous OMP}~\cite{pati1993omp,tropp2007omp}
        (\texttt{recover\_faults\_omp}): a greedy
        algorithm that, at each iteration, selects the column of $\Phi$ most
        correlated with the residual \emph{jointly} across the four algebra
        components (common support), refits the block of amplitudes by least
        squares, and updates the residual, until $J$ faults or a threshold
        relative residual is reached. The positions are refined off-grid by
        separable least squares (VARPRO~\cite{golub1973varpro}). It requires no
        tuning of $\lambda$ and is accurate for well-separated faults.
  \item \textbf{Group-LASSO}~\cite{yuan2006grouplasso}
        (\texttt{recover\_faults\_grid\_lasso}): a convex
        relaxation of \eqref{eq:multi-grid} with group norm
        $\Omega(f)=\sum_g\|f_g\|_2$ ($\ell_{2,1}$, which imposes common support on
        the four components), solved by FISTA~\cite{beck2009fista} with group
        soft-thresholding; the
        peaks are extracted with \texttt{extract\_fault\_peaks}.
\end{itemize}

Representing each amplitude by its four real components, with a shared \emph{real}
dictionary $\Phi$, is consistent with the matrix treatment used in regularized
prewhitening and avoids a problematic notion of modulus in an algebra with zero
divisors. Identifiability requires sufficient modal observability
(\cref{obs:observabilidad}): faults that are too close relative to the spatial
resolution of the observed modes produce nearly collinear columns of $\Phi$ and
become difficult to separate. A numerical example is presented in
\cref{subsec:multifault-result}.

\subsubsection*{Close faults: super-resolution}

When two faults are closer than the spatial resolution of the grid, the previous
methods lose accuracy. Two mechanisms address this regime.

\paragraph{Off-grid refinement (VARPRO).} In \texttt{recover\_faults\_omp}, after
selection over the grid, the positions are treated as \emph{continuous} parameters
and one minimizes
\begin{equation}
\label{eq:varpro}
  \min_{x_1,\dots,x_J}\ \Bigl\|A-\Phi(x)\,F(x)\Bigr\|^2,
  \qquad F(x)=\Phi(x)^{+}A,
\end{equation}
that is, a \emph{separable} least-squares problem (VARPRO) in which the amplitudes
$F(x)$ are eliminated analytically. The fine grid only provides the
initialization; the solution is not limited by its step. This refinement is the
scheme's practical super-resolution mechanism.

\paragraph{Grid-free spectral estimation (matrix pencil / ESPRIT).} The modal
measurement also admits a \emph{grid-free} reading. Since $k_n=n\pi/L$, the
sequence $A_n$ in $n$ is a \emph{sum of sinusoids} of frequency $\theta_j=\pi
x_j/L$:
\begin{equation}
\label{eq:harmonic}
  A_n=\sum_{j=1}^{J}F_j\,\sin(n\,\theta_j)
     =\sum_{k}c_k\,z_k^{\,n},
  \qquad z_k=e^{\pm i\theta_j}.
\end{equation}
Localization is then a \emph{harmonic retrieval} problem: the poles $z_k$ are
estimated by the matrix-pencil method (matrix pencil~\cite{hua1990pencil} /
ESPRIT~\cite{roy1989esprit}) from the signal
subspace of a Hankel matrix formed with $A_n$ (stacking the four algebra
components as \emph{snapshots} of common poles, with forward-backward averaging
that exploits the conjugate symmetry of real sinusoids). From $\theta_j=\arg z_j$
one obtains $x_j=L\,\theta_j/\pi$ without any grid
(\texttt{recover\_faults\_esprit}). This approach can resolve faults closer than
the classical Rayleigh limit $\sim L/N$.

\paragraph{Estimating the number of faults (model order).} The previous methods
require the number of faults $J$. This can be estimated automatically from the same
harmonic structure: in \eqref{eq:harmonic}, $A_n$ is a sum of $J$ real sinusoids,
so the Hankel matrix has \emph{signal rank} $2J$ (one conjugate pair per sinusoid).
Its singular values separate into $2J$ ``signal'' ones (large) and the rest
``noise'' ones (small), and the number of components is selected by an information
criterion ---Wax--Kailath MDL~\cite{wax1985mdl} by default, or AIC---:
\begin{equation}
\label{eq:mdl}
  \widehat M=\arg\min_{k}\;
    (p-k)\,N_s\,\log\frac{a_k}{g_k}
    +\tfrac12 k(2p-k)\log N_s,
  \qquad \widehat J=\Bigl[\tfrac{\widehat M}{2}\Bigr],
\end{equation}
where $a_k$ and $g_k$ are the arithmetic and geometric means of the $p-k$ smallest
eigenvalues and $N_s$ is the number of \emph{snapshots}. MDL avoids setting a
threshold on the singular values manually, but it is \emph{not} an
assumption-free decision: it requires choosing the pencil size, the effective
number of \emph{snapshots}, and a reasonable upper bound on the order
($\texttt{max\_faults}$). Moreover, on nearly noise-free data or data confined to a
single algebra component, it is advisable to apply a \emph{numerical rank guard}
before evaluating MDL, so as to avoid overestimations due to round-off singular
values. The routine \texttt{estimate\_num\_faults} implements \eqref{eq:mdl} with
that guard; passing \texttt{n\_faults=None} to the recoverers activates the
estimation. Its accuracy is quantified in \cref{subsec:multifault-result}.

Concretely, before evaluating MDL the number of singular values above the
threshold $\tau_{\mathrm{rg}}=\max(\tau_{\mathrm{abs}},\,\tau_{\mathrm{rel}}\,s_{\max})$
---parameters \path{abs_rank_tol} and \path{rel_rank_tol} of the routine--- is
counted.
If the remaining tail is at the round-off level (strict rank deficiency), that count
is exactly $2J$ and $\widehat J$ is returned without invoking MDL; this prevents the
criterion from trying to ``explain'' numerically null singular values and
overestimating the order in the exact or nearly exact limit. This guard does
\emph{not} replace MDL in the noisy regime ---where all singular values exceed the
threshold and the count coincides with the subspace dimension---; it merely avoids
a numerical pathology in the limiting case. In addition, the effective number of
\emph{snapshots} used in \eqref{eq:mdl} is that of forward Hankel rows per
component, $N_s=N-L_p$, and not the inflated count that would result from treating
as independent the rows from forward-backward averaging and from stacking the four
components (which would overweight the likelihood relative to the penalty and
overestimate the order).

\begin{observacion}[Scope of automatic order selection]
\label{obs:orden-alcance}
Automatic order estimation does \emph{not} constitute a universal guarantee. Its
reliability depends on the number of modes $N$, the SNR, the spatial separation of
the faults, the noise structure, the pencil size, and the bound
$\texttt{max\_faults}$. In the deeply sub-Rayleigh regime, or with orders close to
the limit resolvable with $N$ modes, order selection can become ambiguous and must
be complemented with prior knowledge of the problem.
\end{observacion}

\paragraph{Resolution limit.} Both methods have a limit governed by the number of
observed modes $N$ and the signal-to-noise ratio. \cref{subsec:multifault-result}
characterizes this limit empirically: at a fixed noise level, recovery is reliable
down to separations well below $L/N$, and it degrades in a controlled manner when
the faults come even closer. VARPRO refinement proves somewhat more robust in the
most demanding regime, whereas ESPRIT offers a grid-free alternative with lower cost
and memory. The \emph{deeply} sub-Rayleigh regime, where even order selection
becomes ambiguous, remains an open question (\cref{subsec:trabajo-futuro}).

\subsection{Fault classification}
\label{subsec:clasificacion}

Once the fault has been localized, its nature is inferred from the estimated
temporal signature $F_f(\st)$ and, where appropriate, from the equivalent local
admittance $Y_f(\st)$. Classification should be carried out after checking the
physical consistency of the spatial fit; otherwise, a sensor anomaly may look like
a localized source.

\paragraph{(a) Open circuit or injection loss.}
If a converter stops injecting a constant current $I_0$ from an instant $t_f$
onward, then
\begin{equation}
\label{eq:fallo-apertura-tiempo}
  f_f(t)=-I_0H(t-t_f),
\end{equation}
and
\begin{equation}
\label{eq:fallo-apertura}
  F_f(\st)=-\frac{I_0}{\st}e^{-\st t_f}.
\end{equation}
If the time origin is relocated to the instant of detection, $t_f=0$ and one is
left with $F_f(\st)=-I_0/\st$. A direct test is to check that
$\st F_f(\st)e^{\st t_f}$ is approximately constant.

\paragraph{(b) Local shunt or short circuit.}
A local leakage path introduces a current proportional to the local voltage:
\begin{equation}
\label{eq:fallo-shunt-tiempo}
  f_f(t)=-Y_f[\,v(t,x_f)\,].
\end{equation}
In the transformed domain,
\begin{equation}
\label{eq:fallo-shunt}
  F_f(\st)=-Y_f(\st)V_f(\st),
\end{equation}
where $V_f(\st)$ is the transformed local voltage at the fault point. Therefore,
\begin{equation}
\label{eq:Yf-estimada}
  \widehat Y_f(\st)=-\frac{\widehat F_f(\st)}{\widehat V_f(\st)}.
\end{equation}
This estimation is reliable only if $\widehat V_f$ is not small and if the spatial
localization is consistent. Subclassification can be carried out according to
\begin{align}
  Y_f(\st) &\approx G_f>0 && \text{resistive},\label{eq:shunt-R}\\
  Y_f(\st) &\approx C_f\st && \text{capacitive},\label{eq:shunt-C}\\
  Y_f(\st) &\approx \frac{1}{L_f\st} && \text{inductive}.\label{eq:shunt-L}
\end{align}
Operationally, one evaluates which of the quantities $Y_f$, $Y_f/\st$, or
$Y_f\st$ exhibits the smallest relative dispersion over the set of $\st$ values
used for the diagnosis.

\paragraph{(c) Sensor fault.}
A sensor fault does not introduce a physical source into the line. Instead, it
alters the measured signals. If $v_m$ is the voltage measurement, a simple model
is
\begin{equation}
  v_m(t,x)=v(t,x)+e_s(t)\delta_s(x-x_s),
\end{equation}
where $\delta_s$ represents the location of the sensor or measurement node. A
genuine physical fault must satisfy simultaneously:
\begin{equation}
\label{eq:prop-fisica-residuo}
  R(\st,\sx)\approx F_f(\st)e^{-\sx x_f},
\end{equation}
and
\begin{equation}
\label{eq:prop-fisica-tension}
  V_{\mathrm{res}}(\st,\sx)
  \approx -\frac{\Zimp(\st)}{\Deterc(\st,\sx)}F_f(\st)e^{-\sx x_f}.
\end{equation}
A sensor fault usually breaks this compatibility: it may produce an apparent
residual, but not a response coherent with the line propagation operator.

We define a physical consistency test through the robust relative error
\begin{equation}
\label{eq:consistencia}
  \eta=\frac{\|V_{\mathrm{med}}-V_{\mathrm{pred}}\|}
  {\max\{\|V_{\mathrm{med}}\|,\|V_{\mathrm{pred}}\|,\epsilon\}},
\end{equation}
where $V_{\mathrm{pred}}$ is computed with \eqref{eq:prop-fisica-tension} using the
estimated position and signature. If $\eta$ exceeds a threshold, or if the modal
confidence is low, the diagnosis must reject the localized physical fault
hypothesis and label the anomaly as a sensor fault or model inconsistency.

\paragraph{(d) Local controller fault.}
If the nominal controller applies $u=r_u-K[v]$ and at node $x_f$ the gain changes
$K\mapsto K+\Delta K_f$, the actual injection contains the additional term
\begin{equation}
  u_f(t,x)=-\Delta K_f[\,v(t,x_f)\,]\delta(x-x_f).
\end{equation}
Therefore, in the shunt-fault notation,
\begin{equation}
\label{eq:controlador-Yf}
  F_f(\st)=-\Delta K_f(\st)V_f(\st),
  \qquad
  Y_f(\st)=\Delta K_f(\st).
\end{equation}
Formally, it is indistinguishable from an anomalous local admittance; telling a
``controller fault'' apart from a ``physical shunt'' requires additional
information about the converter's internal state, setpoint logs, switch currents,
or local telemetry.

\begin{observacion}[Discrete converters: residual and site identification]
\label{obs:fallos-sitios}
When the healthy injection is produced by $P$ discrete converters, as in
\cref{subsec:inyeccion-discreta}, the nominal term subtracted by the residual
\eqref{eq:residuo} is the known rotor sum \eqref{eq:U-discreta}: writing
$U^{\mathrm{tot}}=\sum_p I_p(\st)e^{-\sx x_p}+F_f(\st)e^{-\sx x_f}$ for the
faulty line, the subtraction of the healthy sum leaves
$R(\st,\sx)=F_f(\st)\,e^{-\sx x_f}$, exactly the signature \eqref{eq:firma}.
The entire pipeline of this section ---detection, phase and modal
localization, classification--- therefore carries over unchanged to the
discrete-converter line. The discrete setting adds, moreover, a diagnostic
that the continuous model cannot express: the estimated position
$\widehat x_f$ can be matched against the known sites $\{x_p\}$ (routine
\path{nearest_converter}). An anomaly with the \emph{injection-loss}
signature ---the discrete realization of \eqref{eq:fallo-apertura}, the
loss of converter $p$'s contribution, $F_f=-I_p$--- whose $\widehat x_f$
falls within a tolerance $\delta$ of a site $x_p$ is diagnosed as the
\emph{outage of plant $p$}, whereas a fault located farther than $\delta$
from every site points to the cable itself (e.g.\ a shunt defect). The
decision threshold $\delta$ must be chosen between the localization
uncertainty and the half-spacing $L/(2P)$; by construction the rule is
blind to cable faults within $\delta$ of a site (a fraction $2\delta P/L$
of the line). Three caveats bound the reach of this diagnosis. The
separation is only as sharp as the localization uncertainty of
\cref{subsec:fase,subsec:modal} relative to the inter-site spacing
$L/P$, and degrades for dense converter sets or noisy measurements. The
subtraction presumes the dispatch $I_p(\st)$ and the sites
$x_p$ to be known exactly, and a dispatch error $\Delta I_p$ leaves the
spurious residual $-\sum_p\Delta I_p(\st)\,e^{-\sx x_p}$ competing with
the fault signature; \cref{subsec:discreto-result} quantifies the
sensitivity to this error. And the additive model $F_f=-I_p$
captures the loss of plant $p$'s feed-forward dispatch only: if the plant
also participates in the droop feedback \eqref{eq:droop-discreto}, its
outage additionally removes the local feedback term $-\kappa_p v(t,x_p)$,
a controller-fault perturbation of admittance type (paragraph~(d),
\eqref{eq:controlador-Yf}, with $\Delta K_f=-\kappa_p$) that the additive
subtraction does not capture.
\end{observacion}

\begin{observacion}[Relation to the implementation]
\label{obs:codigo-fallos}
The results of this section are implemented in the module
\texttt{geolaplace.faults}. The routine \path{point_fault_signature}
represents \eqref{eq:firma}; the function \path{prewhiten_voltage} applies the
prewhitening \eqref{eq:prewhitening}
$\widetilde R=-(\Deterc/\Zimp)V_{\mathrm{res}}$, so that phase localization
(\path{phase_two_point_location}, \path{estimate_location_phase}) operates
on balance residuals, whether native or prewhitened; the robust phase estimator
solves \eqref{eq:fase-ls} on $[0,L]$. Modal localization
(\path{estimate_location_modal}, \path{estimate_fault_severity_modal})
uses the \emph{unnormalized} basis $\phi_n=\sin(k_nx)$, i.e.\ the direct
measurement $A_n=F_f\,\phi_n(x_f)$ without the factor $\sqrt{2/L}$ of
\eqref{eq:psi-n}. Since the same convention is used when synthesizing the
measurement (\path{inject_point_fault}) and when estimating the severity, the
normalization factor cancels and the severity is recovered without bias; by
contrast, the physical $L^2$ projection (\path{project_signal_to_modes})
does carry the $2/L$ factor, and to go from its coefficients to the measurement
$A_n$ one must divide by the modal transfer $G_n$, not rescale by $2/L$ (mixing
the two conventions would introduce a spurious $L/2$ factor in the severity;
localization, being scale-invariant, is unaffected). The generator
\path{ff_injection_loss}
admits the temporal shift of \eqref{eq:fallo-apertura}. Finally,
\path{classify_fault} is applied after the physical consistency test
\eqref{eq:consistencia} (\path{physical_consistency_test}) and distinguishes
between injection loss and the shunt subtypes (resistive, capacitive, and
inductive). The controller fault is not labeled autonomously: being formally
indistinguishable from a local shunt (\cref{eq:controlador-Yf}), it is subsumed
into that class, and separating it requires additional telemetry. The site
matching of \cref{obs:fallos-sitios} is implemented by
\path{nearest_converter} in the module \texttt{geolaplace.converters}.
\end{observacion}

\section{Numerical results}
\label{sec:resultados}

This section presents the results obtained by applying the framework
developed in the previous sections to a concrete case study of a
distributed-converter line. All numerical values come from the direct
execution of the simulation and analysis code; the scripts that generate
them are indicated in \cref{obs:reproducibilidad}. The goal is to
illustrate, in a verifiable way, the modal behavior, the fault
localization, and its classification under controlled conditions,
including scenarios with noise.

\subsection{Base case and distributed parameters}
\label{subsec:caso-base}

We consider a line of length $L=100\unit{m}$ with the distributed parameters
collected in \cref{tab:caso-base}. The nominal distributed conductance is
negative, $g=-0.05\unit{S/m}$, which models the active (energy-injecting)
character of the converters and, as we will see, introduces the possibility of
open-loop instability. The parameter $k_v$ is the virtual conductance provided
by the distributed admittance-shaping control described in \cref{sec:control}.

\begin{table}[ht]
  \centering
  \caption{Parameters of the base case of the distributed-converter line.}
  \label{tab:caso-base}
  \begin{tabular}{lll}
    \toprule
    Quantity & Symbol & Value \\
    \midrule
    Line length                     & $L$        & $100\unit{m}$ \\
    Distributed inductance          & $\ell$     & $50\times10^{-6}\unit{H/m}$ \\
    Distributed capacitance         & $c$        & $200\times10^{-6}\unit{F/m}$ \\
    Distributed resistance          & $r$        & $0.02\unit{\Omega/m}$ \\
    Distributed conductance         & $g$        & $-0.05\unit{S/m}$ \\
    Virtual conductance (control)   & $k_v$      & $0.04\unit{S/m}$ \\
    Inductance--capacitance product & $\ell c$ & $1\times10^{-8}$ \\
    \bottomrule
  \end{tabular}
\end{table}

The product $\ell c = 1\times10^{-8}$ sets the temporal scale of the
propagation: through the dispersion relation of \cref{sec:linea}, it is the
coefficient of the second-order term in $\st$ of the modal polynomial.

\subsection{Modal analysis: first mode}
\label{subsec:modo-1}

With homogeneous Dirichlet-type boundary conditions, the admissible wavenumbers
are $k_n = n\pi/L$. For the first mode,
\begin{equation}
  k_1 = \frac{\pi}{L} = 0.031416\unit{rad/m}, \qquad
  k_1^2 = 9.8696\times10^{-4}\unit{rad^2/m^2}.
  \label{eq:k1-resultados}
\end{equation}
Substituting into the dispersion relation specialized to the mode $n=1$ yields
the characteristic polynomial in $\st$
\begin{equation}
  D_1(\st) = 1.0\times10^{-8}\,\st^2 + 1.5\times10^{-6}\,\st
             - 1.304\times10^{-5},
  \label{eq:D1-resultados}
\end{equation}
whose quadratic coefficient coincides with $\ell c$ and whose constant term is
negative: the latter is a direct consequence of the active conductance $g<0$
and anticipates the presence of a positive real root, that is, an unstable
pole.

\subsection{Open-loop and closed-loop modal poles}
\label{subsec:polos}

\Cref{tab:polos} collects the poles of the first five modes in open loop
(without control) and in closed loop, once the virtual conductance has been
applied, which shifts the effective conductance from $g$ to
$g+k_v=-0.01\unit{S/m}$. In open loop, the first mode presents two \emph{real}
roots, $-158.240$ and $+8.240$; the second lies in the right half-plane
($\Real = +8.240 > 0$) and renders the line unstable. The higher modes appear
as complex-conjugate pairs with common real part $-75.000$, so they are stable,
but the instability of the fundamental mode compromises the whole system. After
control, all modes exhibit real part $-175.000$ and, in particular, the first
mode becomes a stable complex-conjugate pair $-175.000\pm219.251\,\Bt$,
eliminating the root in the right half-plane.

\begin{table}[ht]
  \centering
  \caption{Modal poles (in s$^{-1}$) of the first five modes in open loop
    and in closed loop with virtual conductance ($g\to g+k_v=-0.01$).
    The imaginary part is expressed in the $\Bt$ direction.}
  \label{tab:polos}
  \begin{tabular}{cll}
    \toprule
    Mode & Open loop & Closed loop \\
    \midrule
    $1$ & $-158.240,\ +8.240$ (real) & $-175.000 \pm 219.251\,\Bt$ \\
    $2$ & $-75.000 \pm 537.735\,\Bt$   & $-175.000 \pm 586.651\,\Bt$ \\
    $3$ & $-75.000 \pm 884.669\,\Bt$   & $-175.000 \pm 915.226\,\Bt$ \\
    $4$ & $-75.000 \pm 1213.883\,\Bt$  & $-175.000 \pm 1236.330\,\Bt$ \\
    $5$ & $-75.000 \pm 1536.807\,\Bt$  & $-175.000 \pm 1554.598\,\Bt$ \\
    \bottomrule
  \end{tabular}
\end{table}

\begin{observacion}
\label{obs:estabilizacion}
The stabilization mechanism is transparent from the point of view of the
dispersion relation. The active conductance $g<0$ contributes a positive real
part to the constant term of $D_1$ in \cref{eq:D1-resultados}, which places a
root of the fundamental mode at $\Real = +8.240 > 0$. The admittance-shaping
control (\cref{sec:control}) adds a virtual conductance $k_v=0.04\unit{S/m}$
that does not fully cancel the distributed conductance ($g+k_v=-0.01<0$ is
still negative), but does uniformly shift the real part of all poles down to
$-175.000$, so that the first mode ceases to be a pair of real roots and
becomes a stable complex-conjugate pair. The stabilization is therefore modal
and simultaneous for all the modes considered.
\end{observacion}

\Cref{fig:poles} depicts the open- and closed-loop pole map in the plane
generated by $1$ and $\Bt$, illustrating the crossing of the imaginary axis by
the fundamental mode before control. \Cref{fig:timeresp} shows the associated
modal time response, where the divergence of mode $1$ in open loop contrasts
with the damped response in closed loop. Finally, \cref{fig:vsurface} depicts
the voltage surface $v(t,x)$ reconstructed in the presence of a localized
fault, which serves as a link to the following subsection.

\begin{figure}[ht]
  \centering
  \includegraphics[width=0.78\textwidth]{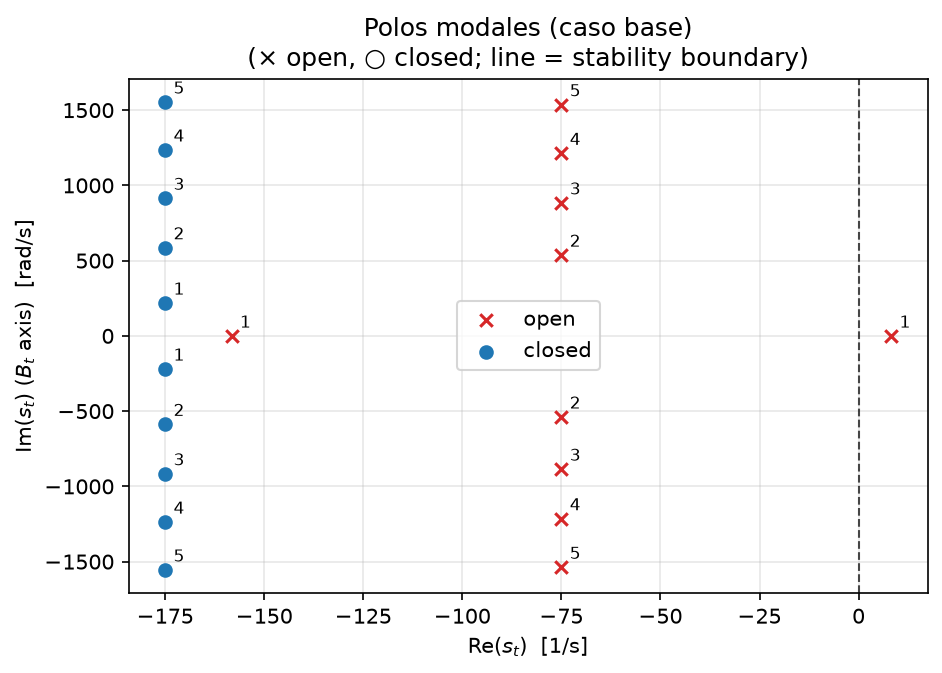}
  \caption{Modal poles of the first five modes in open loop
    (unstable: mode $1$ presents a root with $\Real=+8.240$) and in closed
    loop with virtual conductance (all with $\Real=-175.000$).}
  \label{fig:poles}
\end{figure}

\begin{figure}[ht]
  \centering
  \includegraphics[width=0.78\textwidth]{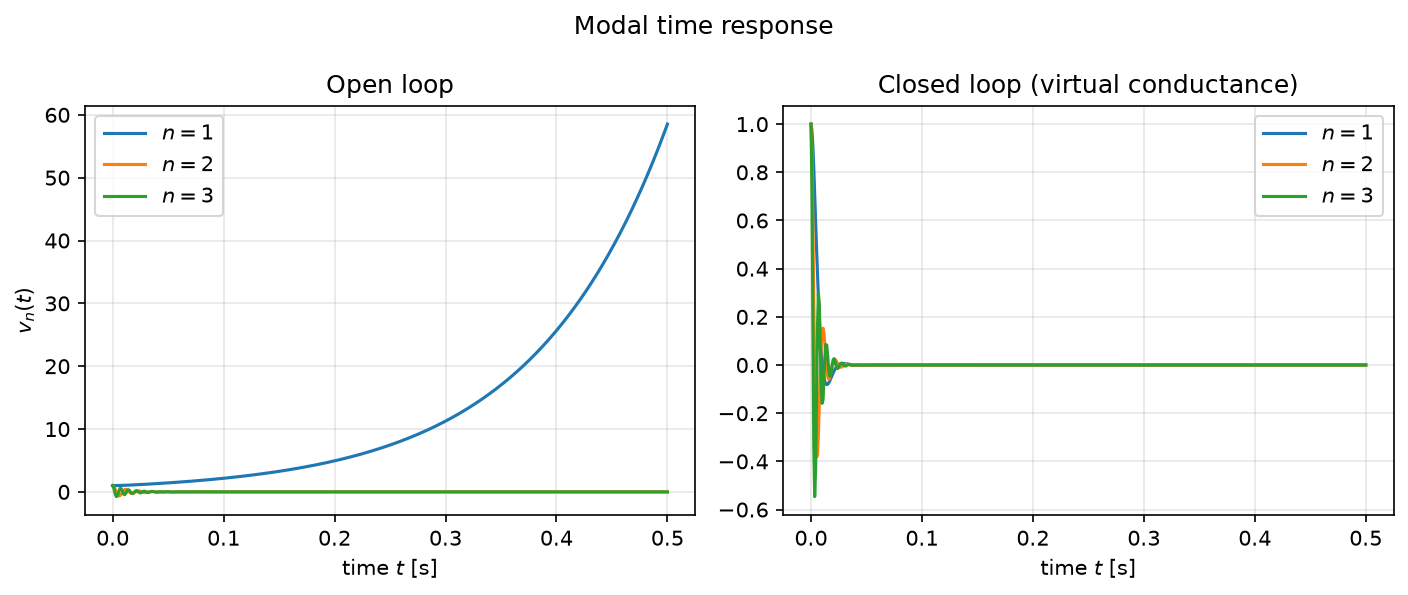}
  \caption{Modal time response in open loop and in closed loop. The
    divergence of the fundamental mode in open loop disappears after the
    action of the distributed control.}
  \label{fig:timeresp}
\end{figure}

\begin{figure}[ht]
  \centering
  \includegraphics[width=0.78\textwidth]{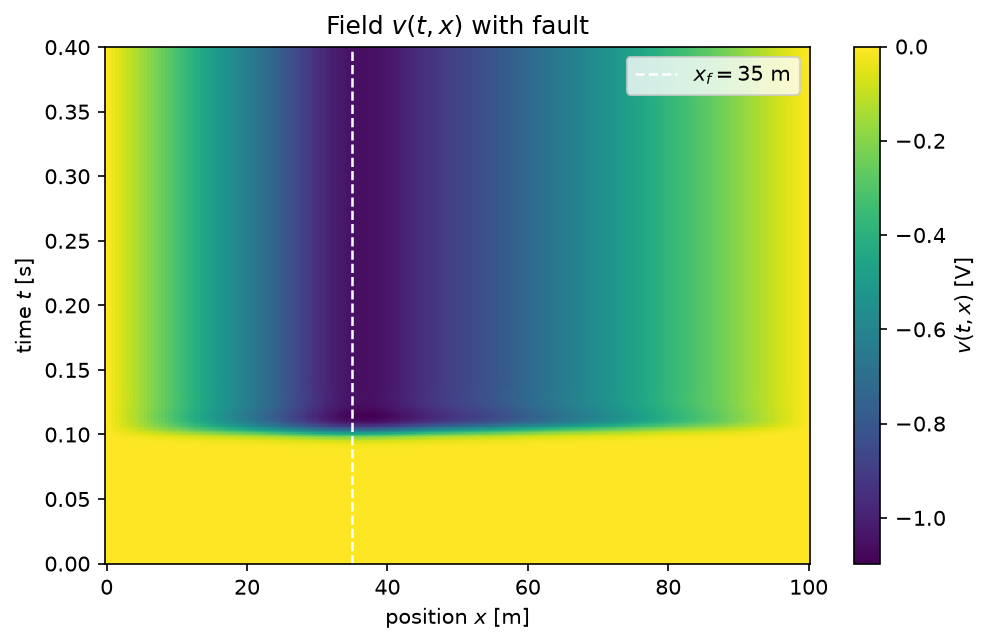}
  \caption{Voltage surface $v(t,x)$ reconstructed with the first $M=12$
    spatial modes $\sin(k_n x)$ under an \emph{injection loss}
    ($f_f(t)=-I_0\,H(t-t_0)$, $I_0=2$) localized at $x_f=35\unit{m}$ and
    activated at $t_0=0.1\unit{s}$, with the line in \emph{closed loop} under
    virtual-conductance control $k_v=0.04\unit{S/m}$. The modal forcing
    incorporates the numerator $\Zimp(\st)$ of the transfer function
    $G_n=\Zimp/D_n$ (\cref{eq:Gn}), that is,
    $\Zimp(\pt)\,u_n=r\,u_n+\ell\,\pt u_n$.}
  \label{fig:vsurface}
\end{figure}

\subsection{Localization of a fault}
\label{subsec:localizacion}

An injection-loss fault is introduced at position $x_f=35\unit{m}$, modeled by
the signature $F_f=-2$ evaluated at $\st=1$, in accordance with the
localized-fault formulation of \cref{sec:fallos}. The two estimators described
there are compared: the phase estimator, based on the phase ratio of the
residual in the $\Bx$ direction, and the modal estimator, based on least
squares over the modal decomposition.

In the absence of noise, both estimators recover the position exactly:
\begin{equation}
  \hat x_{\mathrm{modal}} = 35.00\unit{m}\ (\text{error }0.00\unit{m},\ \text{confidence }1.000),
  \qquad
  \hat x_{\mathrm{fase}} = 35.00\unit{m}\ (\text{error }0.00\unit{m}),
  \label{eq:loc-sinruido}
\end{equation}
with \emph{true} physical consistency. When measurement noise of standard
deviation $\sigma=0.05$ (in the units of the residual) is added, the estimators
degrade in a controlled manner:
\begin{align}
  \hat x_{\mathrm{modal}} &= 34.99\unit{m} && (\text{error }0.008\unit{m}),
  \label{eq:loc-modal-ruido}\\
  \hat x_{\mathrm{fase}}  &= 35.11\unit{m} && (\text{error }0.11\unit{m}).
  \label{eq:loc-fase-ruido}
\end{align}
In this particular realization (fixed seed) the modal estimator turns out to be
more accurate than the phase one, with a notably smaller error, which is
consistent with the fact that the former integrates the information of several
modes in a least-squares fit. This figure corresponds, however, to a single
noise realization and should not be read as a general advantage: over a sweep
of seeds, the mean error of the modal estimator ($\sim0.02\unit{m}$) and that
of the phase estimator ($\sim0.05\unit{m}$) differ only by a factor of order
$2$, and the modal one is more accurate in roughly two thirds of the cases, not
systematically. It is therefore inadvisable to draw from a single experiment a
general conclusion about the superiority of one estimator over the other; both
recover the position with error below $0.2\unit{m}$ over a $100\unit{m}$ line.

\Cref{fig:phasecost,fig:modalcost} show the one-dimensional cost functions of
both estimators as a function of the candidate position, whose minimum
localizes the fault.

\begin{figure}[ht]
  \centering
  \includegraphics[width=0.70\textwidth]{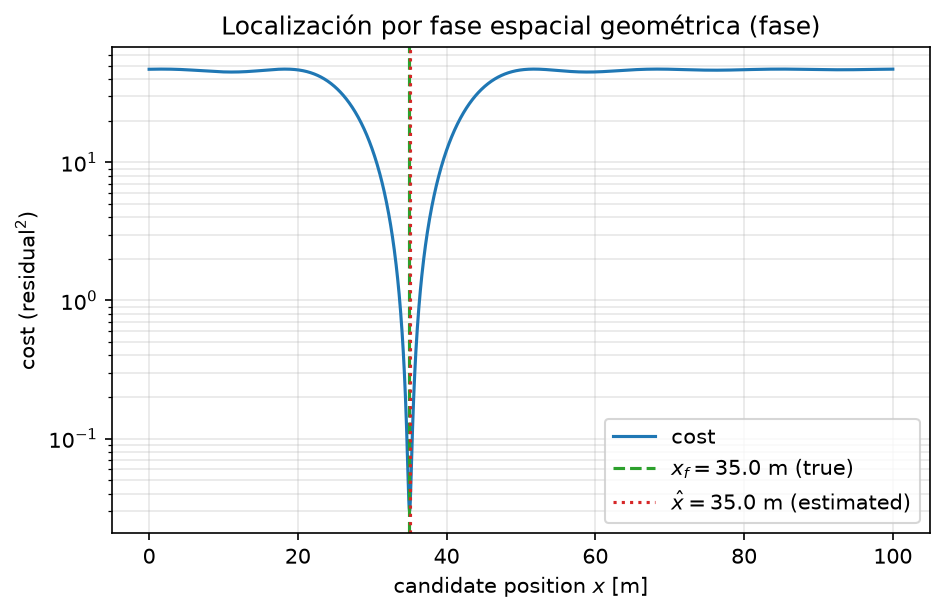}
  \caption{Cost function of the phase estimator versus the candidate
    position $x$. The minimum indicates the estimated position of the fault.}
  \label{fig:phasecost}
\end{figure}

\begin{figure}[ht]
  \centering
  \includegraphics[width=0.70\textwidth]{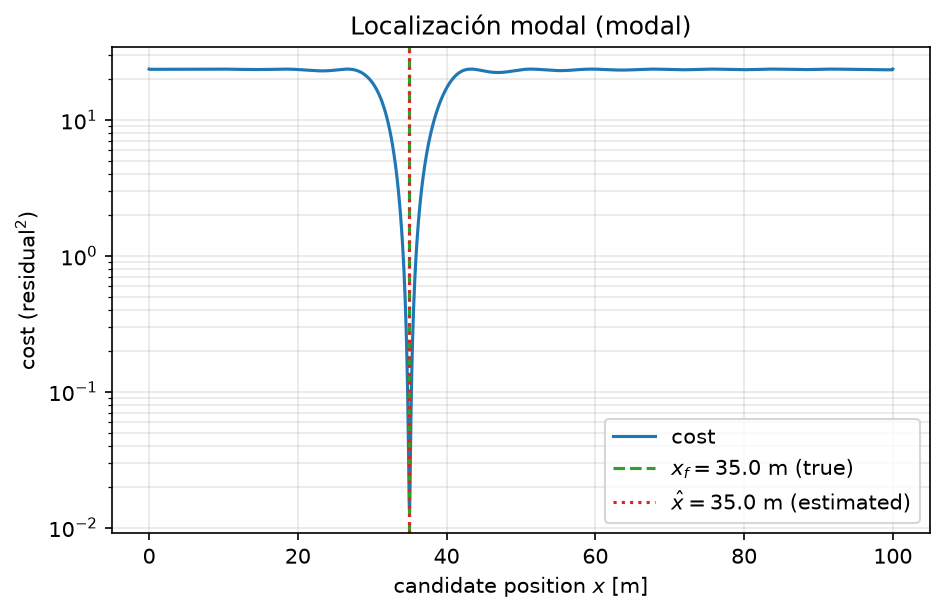}
  \caption{Cost function of the modal estimator versus the candidate
    position $x$. The minimum indicates the estimated position of the fault.}
  \label{fig:modalcost}
\end{figure}

\subsection{Fault classification}
\label{subsec:clasif-result}

Finally, we evaluate the fault classification based on the spectral signatures
of \cref{sec:fallos}, for three cases with a fault at $x_f=35\unit{m}$ and
noise of standard deviation $\sigma=0.02$. It is important to emphasize that
the classification is performed from \emph{signatures estimated from the
measurements}: for a set of temporal frequencies $s\in\{0.5,1,2,5,10\}$ the
modal amplitudes $A_n(s)$ are measured (with noise), $\widehat F_f(s)$ is
estimated by modal severity, and the local voltage $\widehat V_f(s)$ is
measured; the decision uses those estimates, not the true generating functions.
The local voltage is modeled with a non-constant spectrum $V_f(s)=V_0/(1+\tau
s)$, which breaks the degeneracy between injection loss and inductive shunt
(both $\propto 1/s$). The decision metric is the relative dispersion of the fit
to each signature model (the smaller, the better) together with the
cross-checked physical consistency test. The results are summarized in
\cref{tab:clasificacion}.

\begin{description}[leftmargin=1.5em,style=nextline]
  \item[A. Injection loss.] The classifier assigns the label
    \texttt{injection\_loss}, with physical consistency. The relative dispersion
    of the \texttt{injection\_loss} model ($0.025$) is the smallest and is
    clearly separated from the alternatives
    (\path{shunt_inductive} $=0.293$,
    \path{shunt_resistive} $=0.802$,
    \path{shunt_capacitive} $=1.384$); the nonzero residual reflects that the
    fit is performed on noisy estimates, not on the ideal signature.
  \item[B. Resistive shunt.] The label \texttt{shunt\_resistive} is assigned,
    again with physical consistency. The relative-dispersion scores of the four
    candidate models are
    \path{shunt_resistive} $=0.098$,
    \path{injection_loss} $=0.770$,
    \path{shunt_capacitive} $=0.918$ and
    \path{shunt_inductive} $=1.015$,
    which clearly separates the correct hypothesis from the alternatives. This
    separation between signatures is robust against noise; it is worth noting,
    however, that the \emph{physical consistency test} of this case operates with
    a narrow margin relative to the threshold ($\sim0.21$ against the tolerance
    $0.25$): unlike case~A, whose margin is comfortable ($\sim0.02$), in a
    fraction of the noise realizations (of the order of $12\%$ over a sweep of
    seeds) a legitimate resistive shunt may be rejected by the consistency gate
    and relabeled as a sensor anomaly. The signature label is robust; the safety
    margin of the \emph{Physical consistency} column of
    \cref{tab:clasificacion}, by contrast, is not the same in A and in B.
  \item[C. Sensor fault.] The label \texttt{sensor\_fault} is assigned.
    In this case the cross-checked physical consistency is \emph{false}: the
    relative error between the measured phase channel and that predicted by the
    modal fit is $1.006$, above the admissible threshold, the modal confidence
    is low ($0.184$), and the estimated position is spurious
    ($\hat x = 82.2\unit{m}$). The diagnosis itself thus reveals that the fit
    does not correspond to a localized physical fault, but rather to a sensor
    anomaly.
\end{description}

\begin{table}[ht]
  \centering
  \caption{Fault classification at $x_f=35\unit{m}$ with noise
    $\sigma=0.02$. The relative dispersion is smaller the better the fit
    to the corresponding signature model.}
  \label{tab:clasificacion}
  \begin{tabular}{lllc}
    \toprule
    Case & Assigned label & Physical consistency & Key indicator \\
    \midrule
    A. Injection loss & \texttt{injection\_loss} & True
      & dispersion $0.025$ \\
    B. Resistive shunt      & \texttt{shunt\_resistive} & True
      & dispersion $0.098$ \\
    C. Sensor fault      & \texttt{sensor\_fault} & False
      & rel. error $1.006$, conf. $0.184$ \\
    \bottomrule
  \end{tabular}
\end{table}

\Cref{fig:classsig} depicts the spectral signatures of the fault models
considered, which constitute the basis of the classifier's decision.

\begin{figure}[ht]
  \centering
  \includegraphics[width=0.80\textwidth]{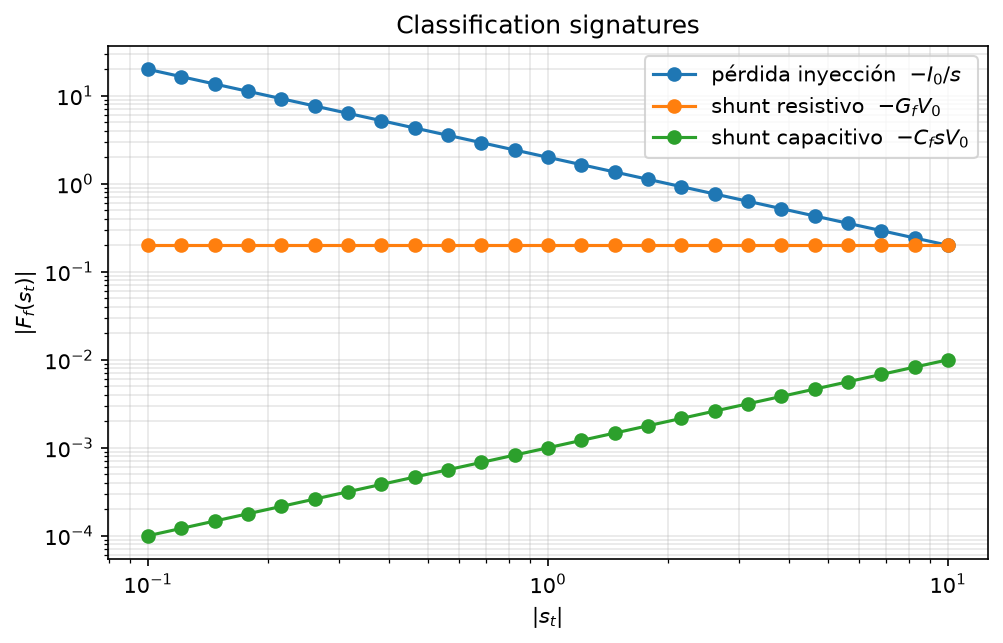}
  \caption{Spectral signatures of three of the fault models employed by the
    classifier (injection loss, resistive shunt, and capacitive shunt). The
    inductive shunt ($\propto 1/s$) is omitted because it overlaps with the
    injection loss, also $\propto 1/s$ (the separation between the two requires
    $V_f(s_t)$ non-constant, not the signature of $F_f$).}
  \label{fig:classsig}
\end{figure}

\begin{observacion}
\label{obs:caso-C}
Case C illustrates a desirable property of the scheme: the physical consistency
test acts as a safeguard mechanism. Although the fitting procedure always
returns a label and a position $\hat x$, the combination of a high relative
error ($1.006$), a low modal confidence ($0.184$), and a position incompatible
with that of the actual fault ($82.2\unit{m}$ against $35\unit{m}$) makes it
possible to reject the localized-physical-fault hypothesis. In this way, an
instrumentation anomaly is not confused with a fault of the line. In the
synthetic model, the sensor anomaly corrupts both the modal channel and the
phase channel (it does not respect the spatial propagation), so that no position
estimator is reliable; the diagnosis relies on the consistency test and on the
low modal confidence. The scope of this test should be clear: here only
the \emph{maximally incoherent} anomaly regime is evaluated (noise that does not
respect propagation), where rejection is essentially guaranteed. A
\emph{structured} sensor bias, capable of partially imitating the spatial
signature $e^{-\sx x_f}$, would constitute a borderline case not covered by this
experiment and remains as future work on the robustness of the scheme.
\end{observacion}

\subsection{Monte Carlo validation of the classification}
\label{subsec:montecarlo}

To avoid conclusions based on a single noise realization, the classification is
evaluated with a reproducible Monte Carlo experiment: $N_{\mathrm{MC}}=100$
realizations per class ($\sigma=0.02$, $x_f=35\unit{m}$, deterministic seeds),
with true classes \texttt{injection\_loss}, \texttt{shunt\_resistive},
\texttt{shunt\_capacitive} and \texttt{sensor\_fault}. The severities are chosen
so that the magnitude of the signature is comparable across classes
(homogeneous SNR): thus the confusion matrix reflects the \emph{discriminative}
capability of the classifier and not a mere signal-to-noise effect (with
$C_f=10^{-3}$ the capacitive signature at $s=1$ falls below the noise and would
be systematically rejected; that result would measure SNR, not discrimination).

The results (\cref{tab:montecarlo,fig:confusion}) show an overall accuracy of
$0.920$ over the $400$ realizations. When a physical fault passes the
consistency test, it is \emph{never} confused with another physical type
(precision $1.000$ for \texttt{injection\_loss}, \texttt{shunt\_resistive} and
\texttt{shunt\_capacitive}). The dominant error effect is the \emph{fragility of
the consistency gate} under noise, already noted in \cref{subsec:clasif-result}:
a fraction of the legitimate shunts ($15\%$ of the resistive, $17\%$ of the
capacitive) is rejected as \texttt{sensor\_fault}. By construction, the sensor
anomaly is detected with recall $1.000$, but its precision ($0.758$) is reduced
precisely by those rejections of genuine physical faults. Overall, the scheme is
\emph{conservative}: it prefers to reject a noisy physical fault rather than emit
an erroneous physical label.

\begin{table}[ht]
  \centering
  \small
  \caption{Confusion matrix of the fault classification over $N_{\mathrm{MC}}=100$ noise realizations per class ($\sigma=0.02$, $x_f=35$\,m). Overall accuracy $0.920$. Column labels abbreviate the predicted classes (IL = \texttt{injection\_loss}, SR = \texttt{shunt\_resistive}, SC = \texttt{shunt\_capacitive}, SI = \texttt{shunt\_inductive}, SF = \texttt{sensor\_fault}); column SF also collects genuine faults rejected by the physical-consistency test.}
  \label{tab:montecarlo}
  \begin{tabular}{lccccc}
    \toprule
    true $\backslash$ predicted & IL & SR & SC & SI & SF \\
    \midrule
    \texttt{injection\_loss} & 100 & 0 & 0 & 0 & 0 \\
    \texttt{shunt\_resistive} & 0 & 85 & 0 & 0 & 15 \\
    \texttt{shunt\_capacitive} & 0 & 0 & 83 & 0 & 17 \\
    \texttt{sensor\_fault} & 0 & 0 & 0 & 0 & 100 \\
    \bottomrule
  \end{tabular}

  \par\medskip
  \begin{tabular}{lccc}
    \toprule
    Class & Recall & Precision & Rejection rate \\
    \midrule
    \texttt{injection\_loss} & 1.000 & 1.000 & 0.000 \\
    \texttt{shunt\_resistive} & 0.850 & 1.000 & 0.150 \\
    \texttt{shunt\_capacitive} & 0.830 & 1.000 & 0.170 \\
    \texttt{sensor\_fault} & 1.000 & 0.758 & 1.000 \\
    \bottomrule
  \end{tabular}
\end{table}

\begin{figure}[ht]
  \centering
  \includegraphics[width=0.78\textwidth]{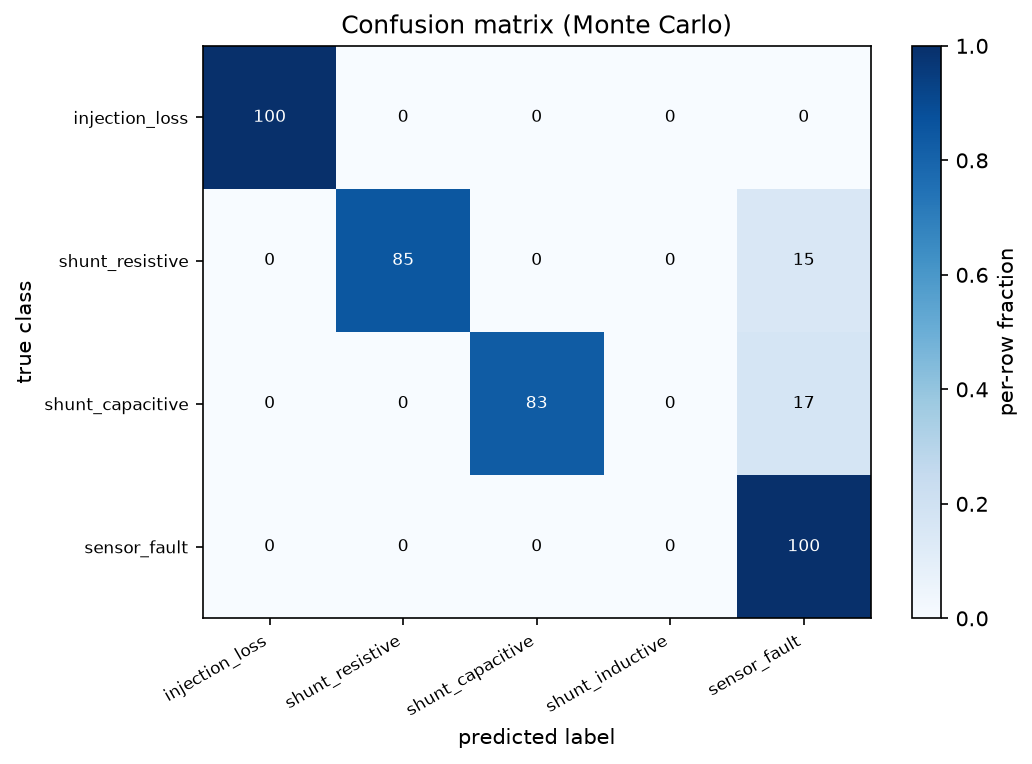}
  \caption{Confusion matrix of the fault classification
    ($N_{\mathrm{MC}}=100$ per class, $\sigma=0.02$, $x_f=35\unit{m}$). The color
    indicates the fraction per row (recall); the cells, the absolute count.
    The off-diagonal mass concentrates in the
    \texttt{sensor\_fault} column: legitimate shunts rejected by the
    consistency test under noise, not confusions between physical types.}
  \label{fig:confusion}
\end{figure}

\subsection{Recovery of multiple faults}
\label{subsec:multifault-result}

To illustrate the sparse recovery of \cref{subsec:fallos-multiples} a modal
measurement of \emph{two} simultaneous faults is synthesized at
$x_1=25\unit{m}$ and $x_2=70\unit{m}$, with amplitudes
$\|F_1\|=1.08$ and $\|F_2\|=0.67$, using $N=24$ modes and measurement noise
$\sigma=0.02$ (fixed seed). Simultaneous OMP, given the number of faults,
recovers the positions with error of $0.005\unit{m}$ and $0.033\unit{m}$
respectively, and amplitudes $\|\widehat F_1\|=1.08$, $\|\widehat F_2\|=0.67$
that are practically exact. The convex group-LASSO relaxation ($\lambda=0.6$)
recovers the peaks at $24.98\unit{m}$ and $70.04\unit{m}$, with amplitudes
slightly shrunk by the bias characteristic of the $\ell_1$ penalty.
\Cref{fig:multifault} shows both results over the group-LASSO support profile.

\begin{figure}[ht]
  \centering
  \includegraphics[width=0.78\textwidth]{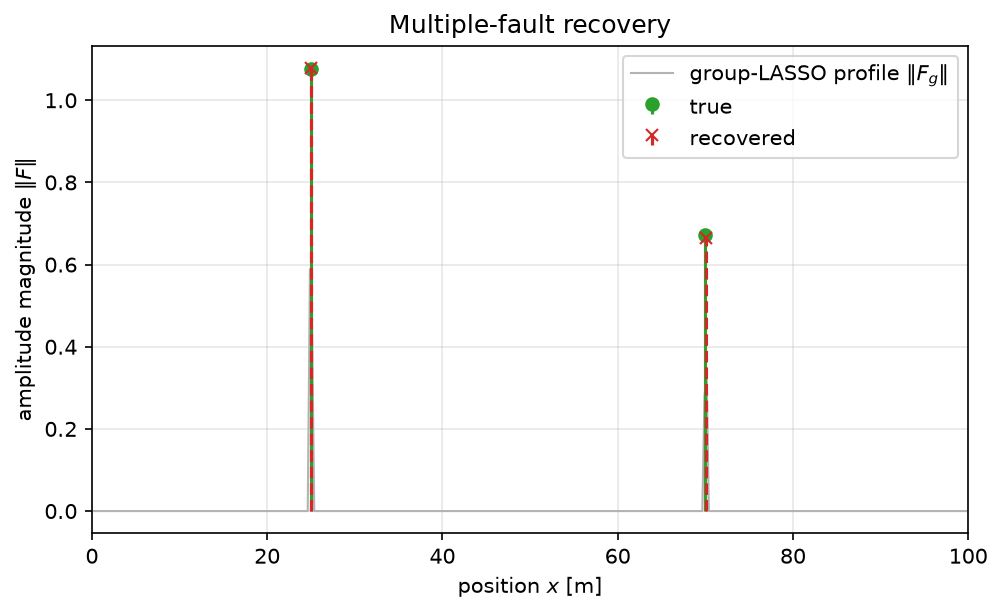}
  \caption{Sparse recovery of two simultaneous faults ($x_1=25\unit{m}$,
    $x_2=70\unit{m}$; $N=24$ modes, $\sigma=0.02$). Markers: true amplitude
    $\|F_j\|$; stems: positions and amplitudes recovered by simultaneous OMP;
    gray curve: support profile $\|F_g\|$ of the group-LASSO. The
    estimates coincide with the true faults within the noise. Separability
    requires sufficient modal observability
    (\cref{obs:observabilidad}): faults very close relative to the spatial
    resolution of the observed modes are not distinguishable.}
  \label{fig:multifault}
\end{figure}

\paragraph{Close faults: resolution study.}
To characterize the close-fault regime (\cref{subsec:fallos-multiples}) a
resolution study is carried out: two faults centered at $x=50\unit{m}$ with
increasing separation $\Delta x$, $N=24$ modes, noise $\sigma=0.05$ and $40$
realizations per point. The success rate is measured (both positions recovered
within $\Delta x/2$) for off-grid refinement (OMP+VARPRO) and for gridless
spectral estimation (ESPRIT). The approximate Rayleigh resolution is
$L/N\approx4.2\unit{m}$. Both methods attain $100\%$ success down to
$\Delta x=2\unit{m}$ ---half the Rayleigh limit, which confirms
\emph{super-resolution} capability---; below that, OMP+VARPRO remains somewhat
more robust ($100\%$ at $1.5\unit{m}$, $78\%$ at $1\unit{m}$) than ESPRIT ($82\%$
and $20\%$ respectively). \Cref{fig:superres} summarizes the study. Honestly,
both methods fail when the separation drops well below the modal resolution: the
limit is intrinsic to the number of observed modes and to the noise, not to the
algorithm.

\begin{figure}[ht]
  \centering
  \includegraphics[width=0.72\textwidth]{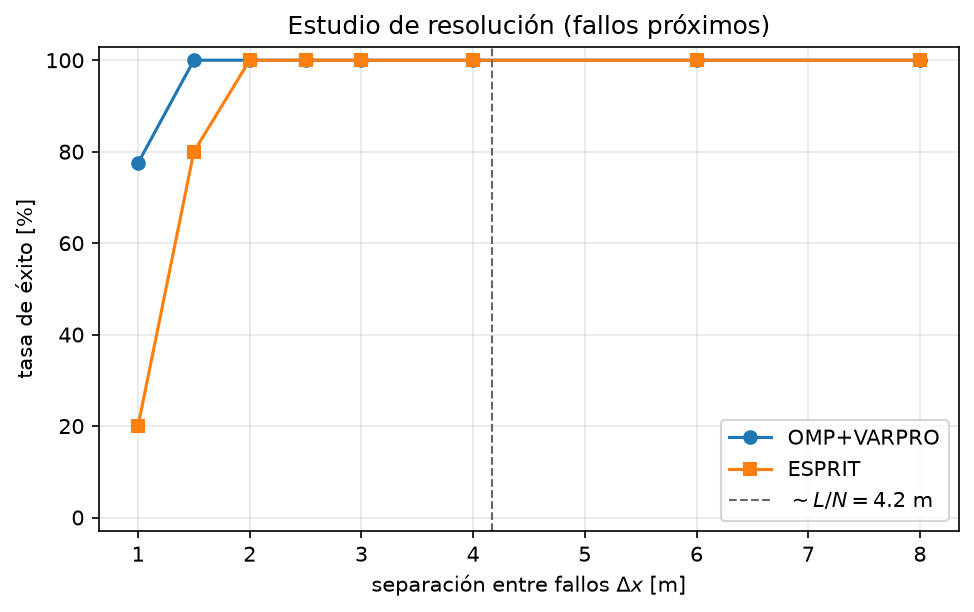}
  \caption{Resolution study of close faults: success rate of the
    recovery versus the separation $\Delta x$ between two faults ($N=24$
    modes, $\sigma=0.05$, $40$ realizations per point). Off-grid
    refinement (OMP+VARPRO) and gridless spectral estimation (ESPRIT) reliably
    resolve separations well below the Rayleigh limit
    $L/N\approx4.2\unit{m}$ (dashed line); OMP+VARPRO is more robust in the
    most demanding regime. The drop at very small separations reflects the
    intrinsic resolution limit, not a defect of the method.}
  \label{fig:superres}
\end{figure}

\paragraph{Automatic estimation of the number of faults.}
The MDL order selection (\cref{eq:mdl}) is evaluated over $J=0,1,2,3$ well-separated
faults ($\Delta x\ge 8\unit{m}$, positions in $[10,90]\unit{m}$), with
$N=24$ modes, pencil size $L_p=12$, $\texttt{max\_faults}=4$ and $100$
realizations per cell, for \emph{two} noise models: \texttt{scalar} (noise
only in the scalar component of $A_n$) and \texttt{four\_components} (independent
noise in the four components of the algebra). \Cref{tab:orden} (and
\cref{fig:orden-acc}) collect the hit rate $\widehat J=J$; the complete metadata
are stored in \texttt{results/data/order\_selection\_metadata.json}. Over both
noise models and $\sigma\in[0.01,0.05]$, the hit rate is $97$--$100\%$ for
$J=0$, $82$--$98\%$ for $J=1$ and $72$--$95\%$ for $J=2$; the case $J=3$ is
the most demanding ($62$--$90\%$) because the signal rank $2J=6$ consumes a
significant part of the observable subspace with $N=24$ modes. In the full
pipeline, passing \texttt{n\_faults=None} to the recoverers activates this
estimation transparently.

The table must \emph{not} be interpreted as a universal guarantee. In particular,
when the spatial separation is reduced or the deeply sub-Rayleigh regime is
entered, the order selection becomes ambiguous (\cref{obs:orden-alcance}); the
figures correspond to the well-separated-fault scenario described above.

\begin{table}[ht]
  \centering
  \caption{Exact-recovery rate of the automatic MDL selection of the number of faults ($\widehat J = J$) over 100 realizations per cell. $N=24$ modes, $\texttt{max\_faults}=4$, pencil size $L_p=12$, well-separated faults ($\Delta x \ge 8$\,m). Two noise models: \texttt{scalar} (noise only in the scalar component) and \texttt{four\_components} (independent noise in the four components of the algebra).}
  \label{tab:orden}
  \begin{tabular}{llcccc}
    \toprule
    Noise & $\sigma$ & $J=0$ & $J=1$ & $J=2$ & $J=3$ \\
    \midrule
    \texttt{scalar} & $0.01$ & 97\% & 92\% & 95\% & 75\% \\
     & $0.03$ & 97\% & 92\% & 89\% & 70\% \\
     & $0.05$ & 97\% & 89\% & 82\% & 65\% \\
    \midrule
    \texttt{four\_components} & $0.01$ & 100\% & 98\% & 94\% & 90\% \\
     & $0.03$ & 100\% & 92\% & 83\% & 78\% \\
     & $0.05$ & 100\% & 82\% & 72\% & 62\% \\
    \bottomrule
  \end{tabular}
\end{table}

\begin{figure}[ht]
  \centering
  \includegraphics[width=0.66\textwidth]{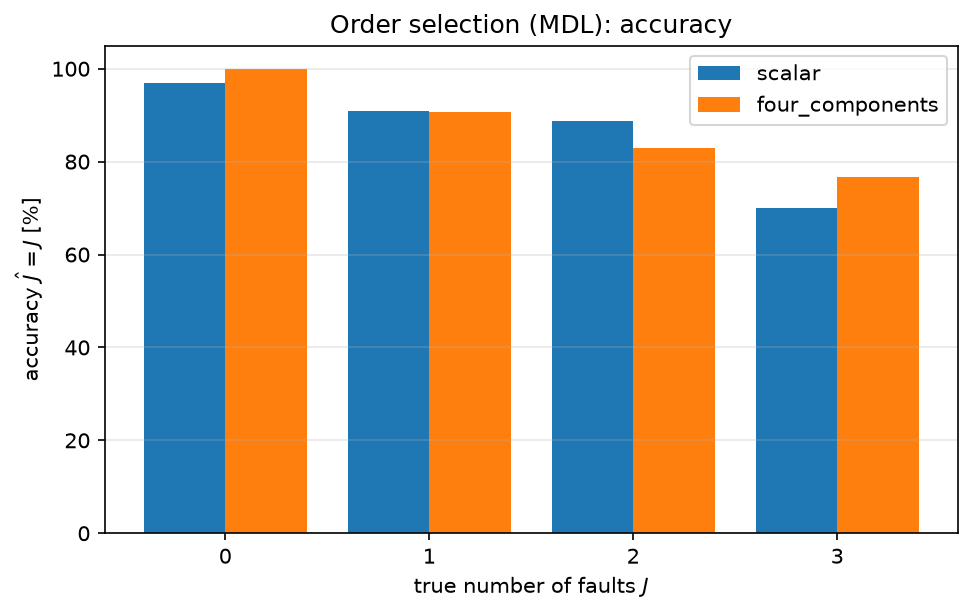}
  \caption{Hit rate of the automatic order selection ($\widehat J=J$) by
    number of faults $J$, averaged over $\sigma\in\{0.01,0.03,0.05\}$, for the
    two noise models. Reliability decreases with $J$ because the signal rank
    $2J$ consumes a growing fraction of the observable subspace with $N=24$
    modes.}
  \label{fig:orden-acc}
\end{figure}

\subsection{Discrete converters: exactness, aliasing, and localization}
\label{subsec:discreto-result}

The discrete-converter theory of \cref{subsec:inyeccion-discreta} and the
exact droop realization of \cref{prop:control-droop-discreto} are validated
on the base line of \cref{tab:caso-base} with $P=8$ plants on the uniform
midpoint grid, $x_p=(p-\tfrac12)L/P\in\{6.25,\allowbreak 18.75,\allowbreak
31.25,\allowbreak 43.75,\allowbreak 56.25,\allowbreak 68.75,\allowbreak
81.25,\allowbreak 93.75\}\unit{m}$ (inter-site spacing $L/P=12.5\unit{m}$),
virtual conductance $k_v=0.04\unit{S/m}$, and $N=12$ retained modes.
\Cref{fig:discreto} summarizes the two quantitative statements.

Panel~(a) examines the continuum limit of \cref{cor:limite-continuo}. For
the smooth density $\bar u(x)=1+0.5\sin(2\pi x/L)+0.3\,(x/L)^2$ and
cell-share currents $I_p=\bar u(x_p)\,L/P$, the coefficient error
$|U_n^{\mathrm{disc}}-U_n^{\mathrm{cont}}|$ decreases with the expected
midpoint-rule rate over $P\in\{4,8,16,32\}$: for mode $n=1$ it falls from
$3.82\times10^{-2}$ to $5.88\times10^{-4}$, with observed dyadic rates
$2.016$, $2.004$, $2.001$; modes $n=3$ and $n=5$ start further from the
asymptotic regime (rates $2.19$ and $2.63$ at the first step) but settle to
$2.011$ and $2.031$ at the last one, consistent with the $O(P^{-2})$ bound.

Panel~(b) examines the exactness of the discrete droop under two
protocols. With the per-converter gains $\kappa_p=k_vL/P$ of
\eqref{eq:droop-discreto}, the continuous closed-loop poles of
\cref{subsec:polos} are compared, for each mode $n$, with the spectrum of
the coupled discrete loop (\path{discrete_droop_poles}) computed with the
retained band $1,\dots,n$ ---the validity condition of
\cref{prop:control-droop-discreto}: modal content confined to the retained
band. Under this protocol the maximum deviation for the sub-Nyquist modes
$n\le7$ is $7.1\times10^{-13}\unit{s^{-1}}$ on a pole scale of order
$10^{3}\unit{s^{-1}}$, i.e.\ machine precision: the continuous closed-loop
analysis is \emph{exact} there, as \cref{prop:control-droop-discreto}(i)
asserts. Dropping the confinement condition has visible consequences. In
the \emph{fixed} fully coupled
loop that retains all $N=12$ modes at once (second marker set in the
figure), the alias rows $C_{n,2P-n}=k_v$ of
\cref{prop:control-droop-discreto}(ii) couple the sub-Nyquist modes
$4,\dots,7$ with their retained partners $12,\dots,9$ and shift their
poles by $2.0$ to $14.4\unit{s^{-1}}$, while modes $1,\dots,3$, whose
partners $15,\dots,13$ fall outside the retained band, remain at machine
precision. Mode $n=P=8$
sees the doubled gain: its discrete poles, $-275.000\pm2510.164\,\Bt$,
coincide with the continuous poles computed with $2k_v$ to
$4.7\times10^{-13}$, while they differ from the single-$k_v$ prediction by
$100.2\unit{s^{-1}}$ (the factor-2 jump visible at the Nyquist wall $n=P$
in the figure). Beyond the wall the aliased rows couple each mode with its
partner and the deviations are $O(1)$ on the pole scale (from
$18.4\unit{s^{-1}}$ at $n=9$ to $6.0\unit{s^{-1}}$ at $n=12$). Placement
matters as well: on the uniform \emph{node} grid $x_p=pL/(P+1)$ with $P=8$,
the controllability score \eqref{eq:score-controlabilidad} of mode $9$
evaluates to $\gamma_9=1.3\times10^{-15}$, numerically zero, i.e.\ mode
$P+1$ is exactly uncontrollable from that converter set, as predicted.
Non-uniform placements behave as anticipated in the observation following
\cref{prop:control-droop-discreto}: with $P$ sites drawn uniformly at
random and the Voronoi-weighted gains $\kappa_p=k_vw_p$, the coupled poles
of the low modes $n\le4$ deviate from the continuous prediction by
$32.8\unit{s^{-1}}$ at $P=8$ but only $3.7\times10^{-2}\unit{s^{-1}}$ at
$P=256$ (one random draw per $P$, fixed seed): the exactness is lost, yet
the convergence to the continuous loop in the quadrature sense of
\cref{cor:limite-continuo} is preserved.

\begin{figure}[ht]
  \centering
  \includegraphics[width=0.9\textwidth]{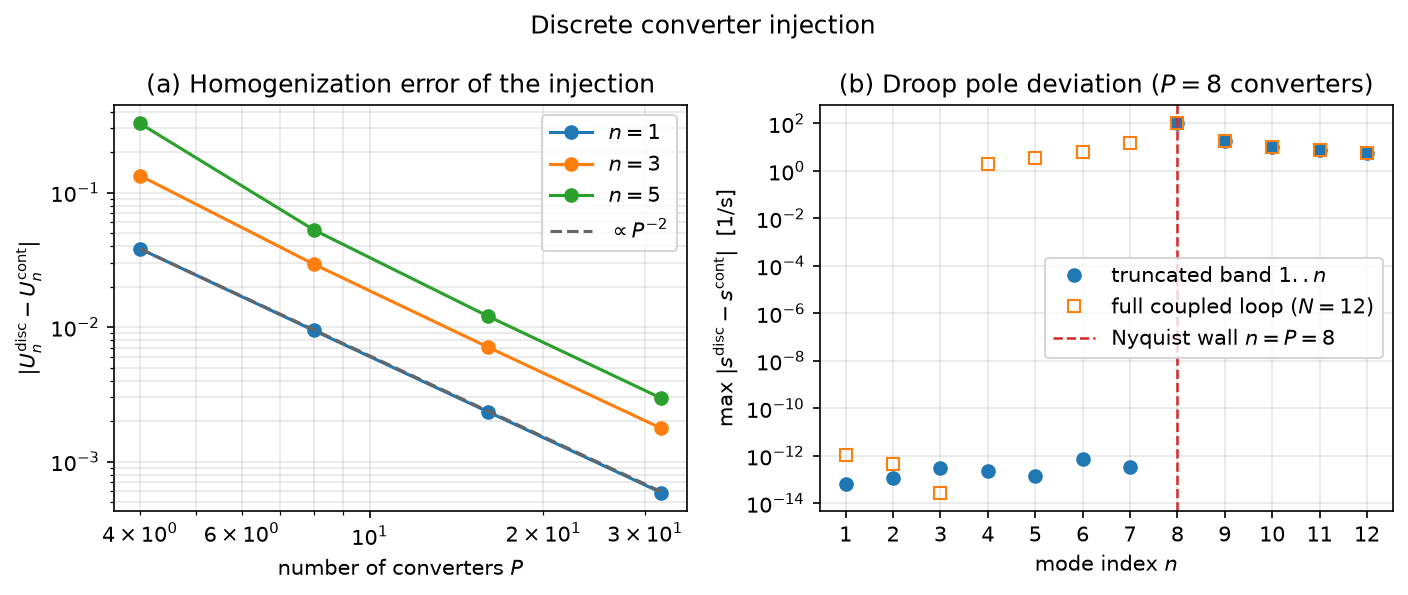}
  \caption{Discrete-converter study ($L=100\unit{m}$, midpoint grid).
    (a)~Modal-injection error $|U_n^{\mathrm{disc}}-U_n^{\mathrm{cont}}|$
    versus the number of plants $P$ for modes $n=1,3,5$ and the smooth
    density $\bar u(x)=1+0.5\sin(2\pi x/L)+0.3(x/L)^2$; the dashed guide
    line has slope $P^{-2}$ (\cref{cor:limite-continuo}).
    (b)~Per-mode maximum pole deviation between the discrete droop loop
    ($P=8$, $\kappa_p=k_vL/P$) and the continuous closed loop, under two
    protocols: retained band $1,\dots,n$ per mode (circles; the validity
    condition of \cref{prop:control-droop-discreto}) and fixed fully
    coupled loop with all $N=12$ modes (open squares). Machine precision
    for $n\le7$ under the first protocol (plot floor $10^{-16}$); under
    the second, the alias rows shift modes $4,\dots,7$ (partners
    $12,\dots,9$) by $2.0$--$14.4\unit{s^{-1}}$ while modes $1,\dots,3$
    remain exact. In both: factor-2 gain jump at the Nyquist wall $n=P=8$
    and $O(1)$ alias coupling for $n>8$.}
  \label{fig:discreto}
\end{figure}

Finally, we exercise the fault diagnosis of \cref{obs:fallos-sitios},
subtracting the known healthy injection from the residual and matching the
estimated position against the sites with the threshold
$\delta=1\unit{m}$, a value between the observed localization uncertainty
and the half-spacing $L/(2P)=6.25\unit{m}$. In the first scenario a cable
fault occurs at $x_f=35\unit{m}$, strictly between the sites
$x_3=31.25\unit{m}$ and $x_4=43.75\unit{m}$. With measurement noise
$\sigma=0.02$ and $N=12$ modes, the modal estimator places it at
$35.003\unit{m}$ and the phase estimator at $34.971\unit{m}$, errors of
$0.003$ and $0.029\unit{m}$. Since the estimated position lies
$3.72\unit{m}$ from the nearest site, well beyond $\delta$, the fault is
correctly attributed to the cable and not to a plant. In the second
scenario plant $3$ suffers an outage, so the residual is $-I_3$ at
$x_3=31.25\unit{m}$. The same subtraction pipeline recovers the site
position with an error of $8.5\times10^{-9}\unit{m}$ in the absence of
noise; with $\sigma=0.02$ it returns $\widehat x_f=31.220\unit{m}$, only
$0.030\unit{m}$ from the site, and \path{nearest_converter} flags the
outage of plant $3$. As an additional end-to-end check we ran the
single-fault experiment of \cref{subsec:clasificacion} with the pure
injection-loss signature at $x_3$ ---by \cref{obs:fallos-sitios}, the
distribution of the subtracted residual---, and the consistency gate and
the classifier label the anomaly \texttt{injection\_loss} with
$\widehat x_f=31.227\unit{m}$. The decision is also stable under noise.
Over $200$ noisy realizations of each scenario, the site rule with
$\delta=1\unit{m}$ attributed the outage to plant $3$ in every trial, and
the mid-span cable fault at $x=37.5\unit{m}$ to the cable in every trial.
The rule remains blind, by construction, to cable faults within $\delta$
of a site ---$16\%$ of the line in this configuration--- and the
separation would degrade for denser converter sets or noisier
measurements.

How much does this diagnosis depend on knowing the dispatch exactly? To
find out, we perturbed the currents assumed by the subtraction with
independent per-plant relative errors of standard deviation $\varepsilon$
and repeated the localization of the cable fault at $x_f=35\unit{m}$
($200$ trials per level, phase channel, $\sigma=0.02$). The median
localization error grows from $0.018\unit{m}$ with exact dispatch to
$0.049\unit{m}$ at $\varepsilon=1\%$ and $0.250\unit{m}$ at
$\varepsilon=5\%$; the corresponding $90$th percentiles are $0.041$,
$0.121$ and $0.593\unit{m}$. The degradation is graceful, and
percent-level dispatch errors leave the localization error two orders of
magnitude below the inter-site spacing. Still, this is the characteristic
failure mode of the residual approach: larger, or correlated, dispatch
errors would eventually erode the outage/cable separation (see the
limitations in \cref{subsec:limitaciones}).

\begin{observacion}
\label{obs:reproducibilidad}
All numerical results, tables and figures of this section are
reproducible; the implementation, the scripts and the data accompanying
this paper are available at
\url{https://github.com/fmarrabal/geo-laplace-distributed-converters}.
The modal analysis and the poles of \cref{tab:polos} are
obtained with \path{scripts/run_modal_analysis.py}; the fault
localization of \cref{subsec:localizacion} with
\path{scripts/run_fault_detection.py}; the classification of
\cref{subsec:clasif-result} with
\path{scripts/run_fault_classification.py}; the Monte Carlo validation of
\cref{subsec:montecarlo} (\cref{tab:montecarlo,fig:confusion}) with
\path{scripts/run_fault_classification_monte_carlo.py}, which regenerates
\cref{tab:montecarlo} in \path{results/data/classification_mc_table.tex}; the
recovery of multiple faults of \cref{subsec:multifault-result} with
\path{scripts/run_multiple_fault_recovery.py}; the resolution study of
close faults (\cref{fig:superres}) with
\path{scripts/run_superresolution_study.py}; the order selection of
\cref{tab:orden,fig:orden-acc} with
\path{scripts/run_order_selection_study.py}, which regenerates
\cref{tab:orden} in \path{results/data/order_selection_table.tex} and the
metadata in \path{order_selection_metadata.json}; the discrete-converter
study of \cref{subsec:discreto-result} (\cref{fig:discreto}) with
\path{scripts/run_discrete_converter_study.py}, which stores the raw
curves in \path{results/data/discrete_converter_study.npz} and the
metadata in \path{discrete_converter_metadata.json}; and all of the
figures with \path{scripts/generate_all_figures.py}.
\end{observacion}

\section{Conclusions, limitations and future work}
\label{sec:conclusiones}

We close by summarizing what the paper establishes, by stating the
assumptions under which those results hold, and by pointing out the
questions they leave open. One observation runs through the whole work and
is worth keeping in sight here: the operational novelty of the geometric
transform appears only when two or more independent variables are involved
---in our case, time $t$ and space $x$--- since the planes $\{1,\Bt\}$ and
$\{1,\Bx\}$ then keep their phases separate and the mixed bivector $\Btx$
encodes the space--time coupling.

\subsection{Conclusions}
\label{subsec:conclusiones}

The work builds a chain that links an algebraic object, an integral
transform and an engineering application. Its main results, read together
with the caveats of \cref{subsec:limitaciones}, are the following.

\begin{enumerate}[leftmargin=1.6em,label=(C\arabic*)]
  \item \textbf{Minimal bicomplex algebra.} We identify a commutative
        subalgebra generated by $\Bt$ and $\Bx$ with $\Bt^2=\Bx^2=-1$,
        $\Bt\Bx=\Bx\Bt=\Btx$ and $\Btx^2=+1$, which constitutes the minimal
        algebraic support for separating temporal and spatial phases on the
        elements $z=a+b\Bt+c\Bx+d\Btx$.

  \item \textbf{Two-dimensional transform with operational properties.} We
        define $\transf{f}=F(\st,\sx)=\int_0^\infty\!\int_0^\infty f(t,x)\,
        e^{-\st t}e^{-\sx x}\dd x\dd t$ and establish its
        operational rules (linearity, region of convergence, behavior
        with respect to $\pt$ and $\px$, and the transform of separable modes),
        so that the partial differential
        equations of the problem become algebraic relations
        in $(\st,\sx)$. As an integral object, the transform is
        known in bicomplex analysis
        \cite{kumar2011bicomplex,agarwal2014convolution} and in the classical
        theory of the double transform \cite{debnath2016double}; the
        contribution is its geometric reformulation
        (\cref{subsec:trabajo-relacionado}).

  \item \textbf{Line and dispersion model.} We formulate the DC line of
        distributed converters as a transmission-line-type system with
        parameters $\Zimp$ and $\Yadm$, obtaining the space--time
        dispersion relation $\Deter(\st,\sx)=0$ that organizes the
        response of the system.

  \item \textbf{Admittance shaping control.} We propose a distributed
        control design that shapes the effective admittance of the line,
        formulated in the transformed domain and therefore expressible as
        conditions on $(\st,\sx)$.

  \item \textbf{Fault modeling as singularities.} Localized faults
        are represented as \emph{spatial} singularities of the distributed
        residual $\Res(t,x)$. In the transformed domain they produce a
        factorization $R(\st,\sx)=F_f(\st)\,e^{-\sx x_f}$: the temporal nature
        of the fault remains in $F_f(\st)$, while its localization remains in the
        spatial rotor $e^{-\sx x_f}$, which is \emph{entire} in $\sx$ (it
        contributes no spatial poles). The poles of the \emph{measured response}
        appear when this residual is filtered by the line dynamics, that is, by
        $\Deterc^{-1}$, not in the pointwise residual itself.

  \item \textbf{Localization by geometric and modal phase.} We propose
        localization procedures based on the geometric spatial phase
        $\ArgBx$ and on modal weights, which estimate the fault position from
        the transformed response.

  \item \textbf{Fault classification.} We distinguish fault categories
        (open circuit or injection loss, shunt or short circuit, sensor
        fault and local controller fault) according to their signature in the
        geometric domain.

  \item \textbf{Reproducible numerical validation.} The whole development
        is accompanied by an implementation and reproducible numerical
        experiments that illustrate, within the assumptions of the model, the
        expected behavior.
\end{enumerate}

We claim no universal superiority over the established techniques of line
analysis or fault detection. The main contribution is the chain
(C5)--(C7): the integrated diagnosis ---detection, localization and
classification--- of faults in a distributed network of generators with
converters, supported by the geometric reading of the spatial phase. What
the two-dimensional geometric representation offers is a unified and
operationally convenient description of the space--time coupling, and that
description translates into interpretable fault signatures. Its
distinctive value requires the effective presence of two independent
variables and a reasonable fulfillment of the assumptions discussed next.

\subsection{Limitations}
\label{subsec:limitaciones}

The model, the control and the detection methods developed here rest on
simplifying assumptions whose violation degrades the guarantees obtained.
We list them explicitly.

\begin{enumerate}[leftmargin=1.6em,label=(L\arabic*)]
  \item \textbf{Nonuniform lines.} The dispersion relation
        $\Deter(\st,\sx)=0$ and the operator $\transf{\cdot}$ have been built
        assuming distributed parameters $\Zimp$ and $\Yadm$ that are homogeneous
        along $x$. In lines whose cross section, materials or loads vary with
        position, the coefficients cease to be constant in $x$ and the
        transform in $\sx$ no longer diagonalizes the problem: a piecewise
        treatment or one with variable coefficients is required, which is not
        addressed here.

  \item \textbf{Converter internal dynamics.} The spatial-continuity
        approximation of the injection is no longer an open limitation: the
        discrete converter sites are treated exactly in
        \cref{subsec:inyeccion-discreta,prop:control-droop-discreto}, where
        the homogenization is quantified ($O(P^{-2})$ for smooth densities)
        and shown to be \emph{exact} for the uniform droop feedback on the
        sub-Nyquist band $n<P$. What remains outside the model is the
        converters' \emph{internal} dynamics: each plant is reduced to its
        incremental injected current $I_p(t)$, abstracting away the current
        and voltage loops, the switching, and the output filters, whose
        bandwidths bound the validity of that reduction
        \cite{meng2017dynamics}. For the same reason, the diagnosis of
        \cref{obs:fallos-sitios} treats a plant outage as the additive
        loss of its feed-forward dispatch; the simultaneous loss of its
        droop feedback term ---a multiplicative admittance perturbation,
        \eqref{eq:controlador-Yf}--- and the closed-loop transient it
        triggers are not simulated in the time domain.

  \item \textbf{Known dispatch in the discrete residual.} The residual of
        \cref{obs:fallos-sitios} subtracts the healthy rotor sum assuming
        the dispatch $I_p(\st)$ and the sites $x_p$ to be exactly known. A
        dispatch error $\Delta I_p$ leaves the spurious residual
        $-\sum_p\Delta I_p\,e^{-\sx x_p}$ competing with the fault
        signature; the Monte Carlo sweep of \cref{subsec:discreto-result}
        shows a graceful degradation (percent-level errors leave the
        localization well below the inter-site spacing), but larger or
        correlated errors erode the separation between plant outages and
        cable faults.

  \item \textbf{Communication delays.} The admittance shaping control assumes
        immediate availability of the signals it requires.
        Communication delays between nodes introduce factors
        $e^{-\st\tau}$ not accounted for in the nominal design, which may reduce
        the space--time stability margins.

  \item \textbf{Noise and sensor scarcity.} Localization by geometric phase
        and by modal weights presupposes sufficient spatial observability.
        With sensors that are scarce relative to the number of relevant spatial
        modes, the reconstruction of the spatial phase $\ArgBx$ becomes
        under-determined and measurement noise limits the resolution attainable in
        the estimation of the fault position. In the discrete-converter
        setting this asymmetry is explicit: the actuation side has an
        exact sampling theory (\cref{prop:muestreo}), while the
        measurements are still taken as ideal modal/transform-domain
        quantities; the measurement dual ---the reconstruction of the
        modal amplitudes $A_n$ from voltage samples at discrete sensor
        sites--- is not developed here.

  \item \textbf{Converter saturation.} Admittance shaping is a linear
        design. Current or voltage saturation of the
        converters introduces nonlinearities that locally invalidate the
        superposition principle on which both the control and the
        interpretation of the distributed residual $\Res$ rely.

  \item \textbf{Multiple and simultaneous faults.} The model extends by
        linearity to $J$ faults, $A_n=\sum_j F_j\psi_n(x_j)$, and its sparse
        recovery is implemented (simultaneous OMP and group-LASSO,
        \cref{subsec:fallos-multiples,subsec:multifault-result}) for
        \emph{well-separated} faults. The case of faults that are very close
        relative to the spatial resolution of the observed modes produces nearly
        collinear columns in the dictionary and remains a hard identification
        problem, sensitive to noise and to observability
        \cite{isermann2006fault}.

  \item \textbf{Commutative treatment.} For operational simplicity we have
        imposed $\Bt\Bx=\Bx\Bt=\Btx$, that is, a \emph{commutative}
        bicomplex algebra. This choice algebraically decouples the physical
        temporal and spatial planes and allows closed manipulations, but it does
        not capture situations in which those planes are intrinsically
        coupled and would require generators that do \emph{not} commute. The
        recent noncommutative geometric Laplace transform of
        \cite{velasco2026geometric} shows concretely what is at stake:
        without commutativity the transform must be defined in left and
        right versions and even the exponential ceases to obey the
        classical rule, whereas the factorized signature
        $F_f(\st)\,e^{-\sx x_f}$ and the phase reading of this work rest
        precisely on the classical behavior of the rotors, which the
        commutative subalgebra preserves. The
        noncommutative case is outside the scope of the present work.

  \item \textbf{Experimental validation pending.} All the evidence provided is
        analytic and numerical. We do not yet have
        experimental validation on a physical plant, so the
        quantitative conclusions must be interpreted as predictions of the
        model and not as facts verified in hardware.
\end{enumerate}

\begin{observacion}[Reduction to the single-variable case]
\label{obs:reduccion-final}
One limitation of principle bears repeating. If $f$ depends
on a single independent variable, its geometric transform coincides, up to the
reinterpretation of the plane $\{1,\Bt\}\cong\CC$, with the classical Laplace
transform. All the distinctive usefulness of the proposed framework lies in the
simultaneous treatment of $\st=\rho_t+\Bt\omega$ and $\sx=\rho_x+\Bx k$, that
is, in the genuinely \emph{multivariable} nature of the
space--time problem.
\end{observacion}

\subsection{Future work}
\label{subsec:trabajo-futuro}

Each of the limitations above points to a line of continuation; the ones
we consider priorities are the following.

\begin{enumerate}[leftmargin=1.6em,label=(F\arabic*)]
  \item \textbf{Noncommutative algebras with coupled planes.} Extend the
        framework to generators that do not commute, so that the physical
        temporal and spatial planes may be intrinsically coupled. This requires
        reformulating the operational properties of $\transf{\cdot}$ and studying
        when the cost of losing commutativity is offset by greater
        physical fidelity. The left/right transform calculus recently
        developed for multivector-valued functions of a single real
        variable \cite{velasco2026geometric} provides a natural starting
        toolbox for that reformulation.

  \item \textbf{Nonuniform and piecewise lines.} Generalize the dispersion
        relation and the transform to variable parameters $\Zimp(x)$ and
        $\Yadm(x)$ and to sectioned structures, by means of
        variable-coefficient or piecewise-concatenation techniques.

  \item \textbf{Converters with internal dynamics.} Replace the ideal
        incremental injection $I_p(t)$ with converter models including
        their internal loops (current and voltage control, switching,
        filters). The continuous-versus-discrete approximation error is now
        quantified (\cref{cor:limite-continuo,prop:control-droop-discreto});
        the open question is how the internal converter bandwidths interact
        with the spatial modes and with the exactness of the discrete droop
        \cite{meng2017dynamics}.

  \item \textbf{Multiple faults and order selection.} The multifault
        recovery is implemented by on-grid and off-grid methods (OMP,
        group-LASSO, VARPRO and ESPRIT), and the automatic selection of the number
        of faults by MDL is validated in well-separated and moderately-separated
        scenarios (\cref{subsec:fallos-multiples,subsec:multifault-result}). Its
        reliability, however, depends on the number of modes, the SNR, the
        separation between faults, the pencil size and the upper bound
        $\texttt{max\_faults}$; at the exact limit it further requires a
        numerical-rank guard. The \emph{deeply} sub-Rayleigh regime, as well as
        order selection with scarce or very noisy data ---and the joint
        estimation of severity and position with guarantees against dictionary
        coherence--- remain as future work \cite{isermann2006fault}.

  \item \textbf{Robustness to noise and delays.} Quantify and improve the
        robustness of localization and classification against measurement noise,
        sensor scarcity and communication delays, incorporating the latter
        explicitly in the control design.

  \item \textbf{Experimental validation.} Contrast the predictions of the
        model on a laboratory DC microgrid, closing the gap between the
        current numerical evidence and hardware verification.
\end{enumerate}

In short, this work is a first articulation, internally consistent and
reproducible, of an idea whose full development still requires lifting the
commutative and uniformity assumptions, and confronting them with
experimental data.

\bibliographystyle{plain}
\bibliography{references}

\end{document}